\newcommand\MFCSshorter[1]{#1}%

\newcommand\NOTYETSOUMISSIONICALP[1]{}
%% POUR SOUMISSION ICALP, JUSTE COMMENTER cette ligne
\renewcommand\NOTYETSOUMISSIONICALP[1]{#1}

\newcommand\coNOTYETSOUMISSIONICALP[1]{#1}
\NOTYETSOUMISSIONICALP{
\renewcommand\coNOTYETSOUMISSIONICALP[1]{}
}

\coNOTYETSOUMISSIONICALP{
%commenter pour avoir preuve en place
%\newcommand\PREUVEPASENANNEXE{}
}

\newcommand\CORPSPREUVEDE[1]{}

\providecommand\PREUVEPASENANNEXE[1]{#1}
\PREUVEPASENANNEXE{
	\renewcommand\CORPSPREUVEDE[1]{#1}
	
}

%%%% normalement ca va marcher, et plus besoin de toucher ca

\newcommand\SANSCOMMENTAIRE[1]{}
\newcommand\ANONYME[1]{}
\NOTYETSOUMISSIONICALP{
\renewcommand\SANSCOMMENTAIRE[1]{#1}
\renewcommand\ANONYME[1]{#1}
}

\PassOptionsToPackage{unicode}{hyperref}

\newcommand\shortericalp[1]{}
\newcommand\shorter[2][]{}

\newcommand\shortermcu[1]{#1}

\documentclass{jloganal}
\usepackage{pinlabel}

%%%%%%%% For commenting the text %%%%%%%%%%%%%%%
\usepackage{color}
\usepackage{hyperref}

\newtheorem{theorem}{Theorem}[section]
\newtheorem{lemma}[theorem]{Lemma}
\newtheorem{corollary}[theorem]{Corollary}
\newtheorem{proposition}[theorem]{Proposition}

\theoremstyle{definition}
\newtheorem{definition}[theorem]{Definition}
\newtheorem{remark}{Remark}
\numberwithin{equation}{section}
\makeop{Homo}

%
%\newtheorem{theorem}[theorem]{Theorem*}
%\newtheorem{lemma}[theorem]{Lemma*}
%\newtheorem{proposition}[theorem]{Proposition*}%[lemma]
%\newtheorem{corollary}[theorem]{Corollary*}%[lemma]
%

% % % % % % For commenting the text  (version 2  - avec todonotes) % % % % %
\usepackage{todonotes}

%{Macro ourtodo pour reactivier/activer commentaires}

\newcommand\manondufutur[2][]{\SANSCOMMENTAIRE{\todo[inline,color=pink!80!white,caption={2do},#1]{\begin{minipage}{\textwidth-4pt}
Pour la Manon du Futur:			#2\end{minipage}}}}

\newcommand{\olivierplusimportant}[2][]{}
\newcommand{\manonplusimportant}[2][]{}

%%%%% NOM DES CHOSES

\newcommand\spaceclass{\mathbb{RLD}}
\newcommand\spaceclasstanh{\mathbb{RLD}^\circ}
\newcommand\spaceclasstanhlim{\bar{\mathbb{RLD}^\circ}}

\newcommand\Lesfonctionsquoi{\mathcal{F}}
\newcommand\manonclass{\linearderivlength^\bullet}

\newcommand\manonclasslighttanh{\linearderivlength^\circ}
\newcommand\MANONlim{ELim}

\newcommand\manonclasslim{\bar{\manonclass}}

\newcommand\manonclasslighttanhlim{\bar{\manonclasslighttanh}}
\newcommand\polynomialb{ $\bar{\fonction{cond}}$-polynomial}
\newcommand\polynomialbtanh{ $\basictanh$-polynomial}
\newcommand{\basictanh}{\operatorname{\mathfrak{tanh}}}
\newcommand{\signb}[1]{\bar{\fonction{cond}}(#1)}
\newcommand{\signbname}{\bar{\fonction{cond}}}
\newcommand{\signbtanh}[2]{\operatorname{\bar{{\smooth-cond}}}(#1,#2)}

\newcommand\unaire[1]{with respect to the value of $#1$}

\newcommand\EncodeMul{\textit{EncodeMul}}
\newcommand\Decode{\textit{Decode}}
\newcommand\DP{\operatorname{DP}}%\mathcal{D}\mathcal{P}}

\newcommand\send{\operatorname{send}}
\newcommand\smooth{\mathcal{C}}%_{\omega}}
\newcommand\sendtanh{\operatorname{\smooth-send}}
\newcommand\sendsymbol{\mapsto}
\newcommand\T{\operatorname{\bar {if}}}
\newcommand\TT{\operatorname{if}}
\newcommand\Ttanh{\operatorname{\bar{\smooth-if}}}
\newcommand\TTtanh{\operatorname{\smooth-if}}
\newcommand\round{\operatorname{round}}
\newcommand\relu{\operatorname{ReLU}}

\newcommand\motnouv[1]{\emph{#1}}

%% notation:

%%%%% FIN NOM DES CHOSES

%%%%% BONNE REF
\newcommand\THEPAPIERS{\cite{MFCS2019,MFCSJournal}}
\newcommand\THEPAPIERSPLUS{\cite{MFCS2019,MFCSJournal,BlancBournezMCU22vantardise}}
\usepackage{amssymb}
\usepackage{amsmath}
\usepackage{mathdots}
\usepackage[utf8]{inputenc}

\usepackage{tikz}

\usepackage{versions}
\excludeversion{pascettesection}

\usepackage{mathtools}

\newcommand\dyadic{\mathbb{D}}

% %%% Macros
\newcommand\vectorl[1]{{\mathbf#1}}

\newcommand\vn{\vectorl{n}}

\newcommand{\vertiii}[1]{{\left\vert\kern-0.25ex\left\vert\kern-0.25ex\left\vert #1 
		\right\vert\kern-0.25ex\right\vert\kern-0.25ex\right\vert}}
\newcommand\tnorm[1]{\vertiii{#1}}

\newcommand\N{\mathbb{N}}

 \newcommand\RR{\R}

%%%%%%%%%%%%%%%%%%%%%%%%%%%%%%%%%%%%%%%%

%%%% functions

% % % % % % %
\newcommand{\fonction}[1]{\textrm{#1}}

%% Olivier
\newcommand\projection[2]{\mathbf{\pi}_{#1}^{#2}}

\newcommand\plus{\mathbf{+}}
\newcommand\minus{\mathbf{-}}

%% Notations exotiques propres àc e document

\newcommand\dint[4]{\int_{#1}^{#2}{#3}{\delta #4}}

\newcommand\fallingexp[1]{\overline{2}^{#1}}
%% Fin olivier

 % right bit of the word/binary decomposition
 % right part of the word/binary decomposition
 % right part of the word/binary to a certain index
 % left bit of the word/binary decomposition
 % left part of the word/binary decomposition

 % right bit of the binary decomposition (mod or rem)
 % left part of the binary decomposition (quo)

\newcommand\signname{\fonction{sg}}
\newcommand{\sign}[1]{\fonction{sg}(#1)}

%\newcommand{\sign}[1][]{%
%\fonction{sg}\ifthenelse{\isempty{#1}}{}{(#1)}
%}

% \makeatletter
% \newcommand{\sign}[1]{\@ifnextchar\bgroup{\sign@something}{\sign@nothing}%
% }
% \newcommand{\sign@nothing}{%
% \fonction{sg}%
% }
% \newcommand{\sign@something}[1][]{%
% \fonction{sg}(#1)%
% }
% \makeatother

%\renewcommand{\case}{\fonction{case}}

%\newcommand{\smashop}[2]{{#1}\sharp{#2}}

%\newcommand{\sep}{\textrm{\textbf{ ; }}}

\newcommand{\tu}[1]{\mathbf{#1}}
%\newcommand{\tu}[1]{\bar{#1}}

%%%%%%%%%%%%%%
%\newcommand{\deriv}[2]{\frac{\Delta #1}{\Delta #2}}
%\newcommand{\dderiv}[2]{\frac{\partial #1}{\partial #2}}
%\newcommand{\dderivL}[1]{\frac{\partial #1}{\partial \lengt}}
%\newcommand{\dderivl}[1]{\frac{\partial #1}{\partial \ell}}
%\newcommand{\contderiv}[2]{\frac{\partial #1}{\partial #2}}

%\newcommand{\enc}[1]{{\nu_{\N^2}}(#1)}

%\newcommand{\lengt}{\mathcal{L}}
% \newcommand\lengthnotation{\ell}
%\newcommand{\length}[1]{\mathrm{\lengthnotation}(#1)}
%\newcommand{\degre}[1]{\mathrm{deg}(#1)}
%

%\newcommand{\cD}{\mathcal{D}}
%\renewcommand{\inst}{\mathsf{inst}}
%\newcommand{\Search}{\mathsf{searchpast}}
%\newcommand{\bN}{\mathbb{N}}
%\newcommand{\derivlength}{\mathbb{DL}}
%\newcommand{\linearderivlength}{\mathbb{LDL}}
%%\newcommand{\linearderivlengthparam}{\mathbb{LDLP}}
%\newcommand{\linearderivlengthparam}{\mathbb{NLDL}}
%\newcommand{\sll}{\mathbb{SLL}}
%\newcommand{\nsll}{\mathbb{NSLL}}
%\newcommand{\spl}{\mathbb{SP{\ell}}}

%%%%%%%%%%%%%%%
%% classes de complexite

\newcommand{\cp}[1]{\mathbf{#1}}
\newcommand{\Ptime}{\cp{PTIME}} 
      % P

\newcommand{\FPtime}{\cp{FPTIME}}

\newcommand{\FPspace}{\cp{FPSPACE}}
%\newcommand{\FPspaceN}{\cp{FPSPACE}^{+}}

%%

%%

%%% CIE 22

%\newcommand\ltroisname{l_{3}}

%\newcommand\ltrois[2]{l_{3}(#1,#2)}

%\newcommand\ldeuxname{l_{2}}

%\newcommand\ldeux[2]{l_{2}(#1,#2)}

%parceque moche
\renewcommand\mod{\operatorname{mod}}
\newcommand\sig{\operatorname{\mathfrak{s}}}
\newcommand\sigtanh{\operatorname{\mathfrak{\smooth-s}}}

%%%%%%%%%%%%%%%%%%%%%%%%%%%%%%%%%%%%%%%%%%%%%%%%%
%
%
%                      DESSINS ET IMAGES
%
%
%%%%%%%%%%%%%%%%%%%%%%%%%%%%%%%%%%%%%%%%%%%%%%%%%

%%%% POUR FAIRE DESSIN.

%%%%%%%%%%%%%%%%%%%%%%%%%%%%%%%%%%%%%%%%%%%%%%%%%
%
%
%                      POUR COMPATIBILITE AVEC SLIDES & AUTRE
%
%
%%%%%%%%%%%%%%%%%%%%%%%%%%%%%%%%%%%%%%%%%%%%%%%%%

%%%   COMPATIBILITE AVEC SLIDES

%%% OLIVIER

\renewcommand\bar[1]{\overline{#1}}

%%%%%%%%%%%%%%%%%%%%%%%%%%%%%%%%%%%%%%%%%%
%%% Macros Arnaud

%%%%%%%%%%%%%

\newcommand{\dderiv}[2]{\frac{\partial #1}{\partial #2}}
\newcommand{\dderivL}[1]{\frac{\partial #1}{\partial \lengt}}
\newcommand{\dderivl}[1]{\frac{\partial #1}{\partial \ell}}

\newcommand{\lengt}{\mathcal{L}}
 \newcommand\lengthnotation{\ell}
\newcommand{\length}[1]{\mathrm{\lengthnotation}(#1)}
\newcommand{\degre}[1]{\mathrm{deg}(#1)}

\newcommand{\linearderivlength}{\mathbb{LDL}}

\newcommand\base{4}
\newcommand\symboleun{1}
\newcommand\symboledeux{3}
\newcommand\encodagemot{\gamma_{word}}

\newcommand\Image{\mathcal{I}}

%%%%%%%%%%%%%%%
%% classes de complexite

%\newcommand{\cp}[1]{\mathbf{#1}}
%\newcommand{\Ptime}{\cp{PTIME}} 
%\newcommand{\classP}{\cp{P}}      % P
%\newcommand{\NPtime}{\cp{NPTIME}}
%\newcommand{\NP}{\cp{NP}}
%\newcommand{\FNP}{\cp{FNP}}
%\newcommand{\FPtime}{\cp{FPTIME}}
%\newcommand{\FP}{\cp{FP}}
%\newcommand{\cPspace}{\cp{\sharp PSPACE}}
%\newcommand{\Pspace}{\cp{PSPACE}}
%\newcommand{\FPspace}{\cp{FPSPACE}}
%%\newcommand{\FPspaceN}{\cp{FPSPACE}^{+}}
%\newcommand{\FPspaceN}{\cp{FPSPACE}}
%
%\newcommand{\myFPclass}[1]{\mathcal{F}_{#1}}
%\newcommand{\myFPspace}{\mathcal{F}_\cp{PSPACE}}
%
%\newcommand{\suffix}{\textsf{suffix}}

\newcommand{\Next}{\textit{Next}}
\newcommand{\op}{\textit{op}}
\newcommand{\init}{\textit{init}}
\newcommand{\Formula}{\textit{Formula}}

%------------
%--- sets ---
%------------

%\olivierpourmanon{Ancien titre: Polynomial time computable functions over the reals characterized using discrete ordinary differential equations}
\title[$\FPtime$ and $\FPspace$ over the reals characterised with discrete ODEs] {Simulation of Turing machines with analytic discrete ODEs:\\  $\FPtime$ and $\FPspace$  over the reals characterised with discrete  ordinary differential equations
%\thanks{This work has been partially supported by ANR Project $\partial IFFERENCE$.}
}
%\titlerunning{Polynomial time computable functions over $\R$ characterized using discrete ODEs}
%\titlerunning{Simulation of Turing machines with analytic discrete ODEs. Applications.}
%\author{Blind author(s)}{ }{}{}{}
%\authorrunning{Blind author(s)}

\author[M. Blanc]{Manon Blanc}
\givenname{Manon}
\surname{Blanc}
\address{Institut Polytechnique de Paris, 
Ecole Polytechnique, LIX, Palaiseau, France}
\email{Manon.Blanc@lix.polytechnique.fr}
\urladdr{https://perso.crans.org/mblanc/}

\author[O. Bournez]{Olivier Bournez}
\givenname{Olivier}
\surname{Bournez}
\address{Institut Polytechnique de Paris, 
Ecole Polytechnique, LIX, Palaiseau, France}
\email{Olivier.Bournez@polytechnique.fr}
\urladdr{https://www.lix.polytechnique.fr/~bournez/}

\keywords{???}
\subject{primary}{msc2000}{03D15,03D20,68Q15,68Q19,39A05,26E05}
%\subject{secondary}{msc2000}{60F05, 60J60, 35K99, 03H05}
%\arxivreference{math.AW/0611679}
%\arxivpassword{5spud}
%  Leave the following items blank
%
\volumenumber{}
\issuenumber{}
\publicationyear{}
\papernumber{}
\startpage{}
\endpage{}
\doi{}
\MR{}
\Zbl{}
\received{}
\revised{}
\accepted{}
\published{}
\publishedonline{}
\proposed{}
\seconded{}
\corresponding{}
\editor{}
\version{}
%%% End of metadata
%

%
%
%\author{Manon Blanc and Olivier Bournez}
%{Institut Polytechnique de Paris, 
%Ecole Polytechnique, LIX, Palaiseau, France}
%{name.surname@lix.polytechnique.fr}
%{}{}
%%\author{Olivier Bournez}
%%{Institut Polytechnique de Paris, 
%%Ecole Polytechnique, LIX, 91128 Palaiseau Cedex, France}
%%{olivier.bournez@lix.polytechnique.fr}
%%{}{}
%\authorrunning{Manon Blanc, Olivier Bournez}
%\Copyright{Manon Blanc and Olivier Bournez}

\begin{document}

%\olivierpourmanon{Avis sur le titre?}

%%% LIPICS
%
%%\ccsdesc{Theory of computation}
%\ccsdesc{Theory of computation~Models of computation}
%\ccsdesc{Theory of computation~Computability}
%%\ccsdesc{Theory of computation~Recursive functions}
%\ccsdesc{Computer systems organisation~Analog computers}
%\ccsdesc{Theory of computation~Complexity classes}
%%\ccsdesc{Theory of computation~Complexity theory and logic}
%%\ccsdesc{Mathematics of computing~Differential equations}
%\ccsdesc{Mathematics of computing~Ordinary differential equations}

% \ccsdesc{Theory of computation; Models of
%   computation; Computability; Recursive functions; Analog computers;
%   Complexity classes; Complexity theory and logic; Differential
%   equations; Ordinary differential equations; Differential calculus}% mandatory: Please choose ACM 2012 classifications from https://www.acm.org/publications/class-2012 or https://dl.acm.org/ccs/ccs_flat.cfm . E.g., cite as "General and reference $\rightarrow$ General literature" or \ccsdesc[100]{General and reference~General literature}. 

 \keywords{Discrete ordinary differential equations, Finite Differences, Implicit complexity, Recursion scheme, Ordinary differential equations, Models of computation, Analog Computations}% Formal neural networks}%mandatory

%\olivier{Arnaud?}

%\olivier{déplacé la pour pouvoir faire copier/coller dans easychair}
%\manondufutur{Second, that the exact test used in above characterizations can be relaxed so some non-linear smooth function such as the hyperbolic tangent.  Third,}
%\manondufutur{smooth (i.e. $\tanh$)} 

\begin{abstract}
We prove that functions over the reals computable in polynomial time can be characterised  using discrete ordinary differential equations (ODE), also known as finite differences.   We also provide a characterisation of functions computable in polynomial space  over the reals. 

In particular, this covers space complexity, while existing characterisations were only able to cover time complexity, and were restricted to functions over the integers. We prove furthermore that no artificial sign or test function is needed even for time complexity.

At a technical level, this is obtained by proving that Turing machines can be simulated with analytic discrete ordinary differential equations. We believe  this result opens the way to many applications, as it opens the possibility of programming with  ODEs, with an underlying well-understood time and space complexity. 
\end{abstract}

\maketitle

\section{Introduction}
Recursion schemes constitute a major approach to classical computability theory and, to some extent, to complexity theory. The foundational characterisation of $\FPtime$, based on bounded primitive recursion on notations, due to Cobham \cite{cob65}  gave birth to the field of \textit{implicit complexity} at the interplay of logic and theory of programming. Alternative characterisations, based on safe recursion  \cite{bs:impl} or on 
ramification (\cite{LM93,Lei94}) or for other classes \cite{lm:pspace} followed:  see 
\cite{Clo95,clote2013boolean} for monographs.

Initially motivated to help understand how analogue models of computations compare to classical digital ones, in an orthogonal way, various computability and complexity classes have been recently characterised using Ordinary Differential Equations (ODE).
An unexpected side effect of these proofs is the possibility of programming with classical ODEs, over the continuum. It recently led to solving various open problems. 
This includes the proof of the existence of a universal ODE \cite{ICALP2017}, the proof of the Turing-completeness of chemical reactions \cite{CMSB17vantardise}, or hardness of problems related to dynamical systems \cite{GracaZhongHandbook}. 
\MFCSshorter{
While the above results are easy to state, their proofs are mixing considerations about approximations, control of errors, and various constructions to emulate continuously some discrete processes, despite some recent attempts for a kind of programming languages to help intuition \cite{CIE22}. } 

%Initially motivated by trying to go to simpler proofs, %or at least proofs easier to grasp, 
%several authors have considered the discrete counterparts of ODEs: 

Discrete ODEs, that we consider in this article,  are an approach in-between born from the attempt of \THEPAPIERS{} to explain some of the constructions for continuous ODEs in an easier way. The basic principle is, for a function $\tu f(x)$, to consider its discrete derivative defined as $\Delta \tu f(x)= \tu f(x+1)-\tu
f(x)$ (also denoted $\tu f^\prime(x)$ in what follows to help analogy with classical continuous counterparts). 
% for
%$\Delta \tu f(x)$ to help understanding
%statements with respect to their classical continuous counterparts.  
%
A consequence of this attempt is the characterisation obtained in \THEPAPIERS. 
They provided a characterisation of $\FPtime$ for functions  over the integers that does not require the specification of an
explicit bound in the recursion, in contrast to Cobham's work \cite{cob65}, nor the assignment of a specific role or type to variables, in contrast to safe recursion or ramification \cite{bs:impl,Lei-LCC94}. Instead, they only assume involved ODEs to be linear, a very classical natural concept for differential equations.
\MFCSshorter{A characterisation, like ours, happens to be rather simple 
using only common notions from the world of ODEs. In particular, considering \emph{linear} ordinary differential equations is very natural for ODEs.}
\begin{remark}
Unfortunately, even if it was the original motivation, both approaches for characterising complexity classes for continuous and discrete ODEs are currently not directly connected. A key difference is that there is no simple expression (no analogue of the Leibniz rule) for the derivative of the composition of functions in discrete settings. The Leibniz rule is a very basic tool for establishing results over the continuum, using various stability properties, but similar statements cannot be established easily over discrete settings.
\end{remark}

In the context of algebraic classes of functions, the following  notation is classical: call \emph{operation} an operator that takes finitely many functions and returns some new function defined from them. Then $[f_{1}, f_{2}, \dots, f_{k}; \op_{1}, \op_{2},\dots,\op_{\ell}]$ denotes the smallest set of functions containing $f_{1}, f_{2}, \dots, f_{k}$ that is closed under the operations $\op_{1}$, $\op_{2}$, \dots $\op_{\ell}$. 
Call \emph{discrete function} a function of type $ f: S_{1} \times \dots \times S_{d} \to S'_{1} \times \dots S'_{d'}$, where each $S_{i},S'_{i}$ is either $\N$ or $\Z$.
Write $\FPtime$ for the class of functions computable in polynomial time, 
and $\FPspace$ for the class of functions computable in polynomial space.

The literature considers two possible definitions for $\FPspace$, according to whether functions with non-polynomial size values are allowed or not. In our case, we should add ``whose outputs remain of polynomial size'', to resolve the ambiguity. 

\begin{remark}
The point is that, otherwise, the class is not closed by composition. This may be considered as a basic requirement when talking about the complexity of functions.
	The issue is about the usage of not counting the output 
	as part of the total space used. In this model, given $f$ computable in polynomial space, and $g$ in logarithmic space, $f \circ g$ (and $g \circ f$) is computable in polynomial space. But this is not true if we assume only $f$ and $g$ to be computable in polynomial space, since the first might give an output of exponential size.
\end{remark}

A main result of \THEPAPIERS{} is the following ($\linearderivlength$ stands for linear derivation on length):

\begin{theorem}[\cite{MFCS2019}] \label{th:ptime characterization 2}
For  functions over the reals, we have 
	$\linearderivlength= \FPtime$
	where $$\linearderivlength = [\mathbf{0},\mathbf{1},\projection{i}{k}, \length{x}, \plus, \minus, \times, \sign{x} \ ; composition, linear~length~ODE].$$
\end{theorem}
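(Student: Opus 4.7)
The plan is to prove the two inclusions separately. For $\linearderivlength \subseteq \FPtime$, I would argue by structural induction on the definition of a function in $\linearderivlength$. All base functions $\mathbf{0}, \mathbf{1}, \projection{i}{k}, \length{x}, \plus, \minus, \times, \sign{x}$ are visibly computable in polynomial time; composition preserves $\FPtime$ provided one simultaneously tracks that all intermediate outputs remain of polynomial bit-size, which I would verify inductively. The central preservation claim concerns the linear length ODE: a function $\tu f(\length{x}, \vec y)$ so defined satisfies a recurrence of the form $\tu f(t+1, \vec y) = A(t, \vec y)\cdot \tu f(t, \vec y) + \vec b(t, \vec y)$ whose coefficient matrix $A$ and vector $\vec b$ are built from functions already known to be in $\linearderivlength$. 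The linearity of the update bounds the growth of the bit-size of $\tu f(t,\vec y)$ at most linearly in $t$, so after $\length{x}$ iterations the whole computation remains of polynomial size and runs in polynomial time.

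For the converse $\FPtime \subseteq \linearderivlength$, I would simulate a Turing machine. Fix a machine $M$ with polynomial time bound $p(n)$, and encode each configuration (state, head position, tape contents) as a tuple of integers in a small base. Using $\plus, \minus, \times, \sign{x}$ and composition I would build bit-extraction, conditional branching, and finally the single-step transition map $\Next$ of $M$ as a function in $\linearderivlength$, mirroring the classical Cobham constructions but with $\sign$ replacing purely numeric case analysis. I would then apply the linear length ODE operation to iterate $\Next$ starting from the initial configuration and finally decode the output from the halting configuration. To reach $p(n)$ iterations when the input has bit-length $n$, the argument driving the ODE must itself have bit-size $\Theta(p(n))$; this is obtained from $x$ by a bounded composition of ODE operations and smash-style products $x\#y$ of bit-length $\length{x}\cdot\length{y}$, exactly as in Cobham's framework.

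The main obstacle is reconciling the linear format required by the ODE scheme with the inherently non-linear behaviour of $\Next$. The resolution, inherited from \THEPAPIERS{}, is to pack the whole current configuration inside a single integer and to arrange $\Next$ as an affine update of that integer whose coefficients are themselves computed by auxiliary functions in $\linearderivlength$; these coefficients may depend non-linearly on the current step $t$ and on the external parameters $\vec y$, but only linearly on the iterated value $\tu f(t,\vec y)$ itself. This leaves just enough flexibility to express the one-step dynamics of any Turing machine while still respecting the linear-length-ODE format, closing the second inclusion and giving the claimed equality.
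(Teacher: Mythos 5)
Your two-inclusion outline matches the intended structure; note that this theorem is cited from \cite{MFCS2019} rather than reproved here, and the closest in-paper material is Lemma~\ref{lem:un}, Lemma~\ref{fundamencoreg} and the Turing-machine simulation of Section~\ref{sec:simulatingmt}, all stated for the variants $\manonclass$ or $\manonclasslighttanh$. Two points in the proposal need correction.

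First, you invoke ``smash-style products $x\#y$ of bit-length $\length{x}\cdot\length{y}$, exactly as in Cobham's framework'' to build the argument of bit-size $\Theta(p(n))$ that drives the ODE. But $\linearderivlength$ has no smash primitive, and, as the introduction of the paper stresses, the point of this characterisation is precisely to dispense with Cobham's explicit size-bounding function. The polynomial iteration count is produced internally by the ODE scheme itself: $2^{0}=1$, $2^{n+1}=2^{n}+2^{n}$ defines $n\mapsto 2^{n}$ as a linear length ODE, and nesting $k$ such schemes yields $(n_1,\dots,n_k)\mapsto 2^{n_1 n_2\cdots n_k}$, hence $2^{p(\length{\omega})}$ for any fixed polynomial $p$. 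Presenting this as Cobham's mechanism inverts the conceptual novelty of the result.

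Second, you write that the coefficients of the affine update may ``depend non-linearly on the current step $t$ and on the external parameters $\tu y$, but only linearly on the iterated value $\tu f(t,\tu y)$ itself''; taken literally this would make the right-hand side quadratic in $\tu f$, which the scheme forbids. The actual requirement (Definition~\ref{def:essentiallylinear}) is that $\tu A$ and $\tu B$ be \emph{essentially constant} in $\tu f$: they may mention $\tu f$, but only inside $\signname$, which is assigned degree zero. It is exactly this leeway---branching via $\signname$ on bits extracted from the current configuration while keeping the update affine in that configuration---that lets a Turing-machine step fit the required format. A more minor slip in the first inclusion: the iterate's bit-size is not bounded linearly in $t$ but by $t\cdot(p_M(t,\length{\tnorm{\tu y}})+2)$ as in Lemma~\ref{fundamencoreg}, polynomial in $t$; this still suffices because $t=\length{x}$.
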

 
In particular, writing as usual $B^{A}$ for functions from $A$ to $B$, we deduce: 
\begin{corollary}[Functions over the integers] 
{$\linearderivlength \cap \N^{\N}= \FPtime \cap \N^{\N}.$}
\end{corollary}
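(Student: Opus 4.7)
The plan is to derive the corollary directly from Theorem~\ref{th:ptime characterization 2} by intersecting both sides with $\N^{\N}$. The only non-trivial point is that the theorem's ``$\FPtime$ over the reals'' and the classical ``$\FPtime$ over the integers'' appearing on the right of the corollary are a priori two distinct classes, so I must check that they agree on $\N^{\N}$.

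For the inclusion $\linearderivlength \cap \N^{\N} \subseteq \FPtime \cap \N^{\N}$, I would take $f \in \linearderivlength$ with $f:\N\to\N$. Theorem~\ref{th:ptime characterization 2} guarantees that $f$ is polynomial-time computable in the Computable Analysis sense, so from $n$ and a precision parameter $p$ one can produce in time polynomial in $\length{n}+p$ a rational approximation of $f(n)$ to accuracy $2^{-p}$. Choosing $p$ so that $2^{-p} < 1/4$ and using that $f(n) \in \N$, the approximation determines $f(n)$ uniquely; rounding it yields a standard polynomial-time Turing machine computing $f$, hence $f \in \FPtime \cap \N^{\N}$ in the classical discrete sense.

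For the converse inclusion $\FPtime \cap \N^{\N} \subseteq \linearderivlength \cap \N^{\N}$, I would take $f:\N\to\N$ computable by a standard polynomial-time Turing machine. Viewed as a partial real function supported on $\N \subset \R$, $f$ is trivially polynomial-time computable in the Computable Analysis sense: a machine that outputs $f(n)$ exactly satisfies any precision requirement within the same polynomial time bound. Thus $f$ belongs to the $\FPtime$ class over the reals, and Theorem~\ref{th:ptime characterization 2} then places $f$ in $\linearderivlength$. Since its input/output type is $\N\to\N$, this gives $f\in\linearderivlength \cap \N^{\N}$.

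The main obstacle is really just spelling out the equivalence of the two notions of polynomial-time computation on $\N^{\N}$: one must check that the definitions of $\FPtime$ adopted in the real setting (used in Theorem~\ref{th:ptime characterization 2}) and in the discrete setting yield the same class once restricted to integer-valued functions of an integer argument. This is standard, and once that bridge is in place, the corollary is a cosmetic restatement of the main theorem.
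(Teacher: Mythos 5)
Your proposal arrives at the right conclusion and its deduction is logically sound, but the heavy lifting you do — the ``bridge'' between polynomial-time computability in the sense of Computable Analysis and classical polynomial-time computability over the integers — is not actually needed here. Despite the wording ``For functions over the reals'' in the statement of Theorem~\ref{th:ptime characterization 2}, the class $\linearderivlength$ there contains the $\sign$ function, which is discontinuous and hence cannot be a polynomial-time computable real function in the sense of Computable Analysis. The theorem, cited from \cite{MFCS2019}, genuinely concerns discrete functions (those of type $S_1 \times \dots \times S_d \to S'_1 \times \dots \times S'_{d'}$ with each $S_i$ equal to $\N$ or $\Z$), and both sides of the equality $\linearderivlength = \FPtime$ are classes of such discrete functions. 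Consequently, the corollary is a purely set-theoretic restriction: intersecting both sides with $\N^{\N}$ extracts exactly the sub-case of functions $\N \to \N$, with no change of computational model along the way — which is why the paper introduces it with a bare ``we deduce.''

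That said, the argument you develop is correct in its own right and is essentially the one the paper genuinely needs later: the rounding argument for the forward inclusion, and the observation that an exact integer output satisfies any precision request for the reverse, are precisely the ideas underlying $\DP(\manonclasslighttanh) = \FPtime \cap \N^{\N}$ (Theorem~\ref{trucchosethmain}) and the reverse direction of Theorem~\ref{th:main:one:ex:p}, where the class is made of genuine real functions via $\tanh$. So the effort is not wasted; it is simply deployed one corollary too early.
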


 	That is to say, $\linearderivlength$ (and hence $\FPtime$ for  functions over the integers) is the smallest class of functions
	 that contains the constant functions $\mathbf{0}$ and $\mathbf{1}$, the projections
         $\projection{i}{k}$ of the $i^e$ coordinate of a vector of size $k$,  the length function  $\length{x}$, mapping an integer to the length of its binary representation, 
         the addition $x \plus y$, the subtraction $x \minus y$, the multiplication $x\times y$, 
         the \textbf{sign} function $\sign{x}$ and that is closed under composition (when defined) and linear length-ODE
        scheme: the linear length-ODE scheme, formally given by Definition \ref{def:linear lengt ODE}, corresponds to defining a function from a linear ODE with respect to derivation along the length of the argument,
        so of the form $$\dderivl{\tu f(x,\tu y)} = 	\tu A [\tu f(x,\tu y), %\tu h(x,\tu y),
	x,\tu y]  \cdot \tu f(x,\tu y) 
           + \tu B [\tu f(x,\tu y),
	x,\tu y ].
        $$

 Here, we use the notation $\dderivl{\tu f(x,\tu y)}$ which corresponds to the derivation of $\tu f$ along the length function: given some function $\lengt:\N^{p+1} \rightarrow \Z$ and in particular for the case where $\lengt(x,\tu y)=\ell(x)$,
	\begin{align}\label{lode}
	\dderivL{\tu f(x,\tu y)}= \dderiv{\tu f(x,\tu y)}{\lengt(x,\tu
          y)} = \tu h(\tu f(x,\tu y),x,\tu y)
	\end{align}
	is a formal synonym for
$$ \tu f(x+1,\tu y)= \tu f(x,\tu y) + (\lengt(x+1,\tu y)-\lengt(x,\tu y)) \cdot
\tu h(\tu f(x,\tu y),x,\tu y).$$

\begin{remark}
This concept introduced in \THEPAPIERS, is motivated by the fact that the latter expression is similar to the
classical formula for continuous ODEs: 
$$\frac{\delta f(x,\tu y )}{\delta x} = \frac{\delta
  \lengt (x,\tu y) }{\delta x} \cdot \frac{\delta f(x,\tu
  y)}{\delta \lengt(x, \tu y)},$$
  and hence is similar to a change of variable. Consequently, a linear length-ODE is a linear ODE over a variable $t$ once the change of variable $t=\ell(x)$ is done. 
\end{remark}

However, in the context of (classical) ODEs, considering functions over the reals is more natural than only functions over the integers. 
Call \emph{real function} a function $ f: S_{1} \times \dots \times S_{d} \to S'_{1} \times \dots S'_{d'}$, where each $S_{i},S'_{i}$ is either $\R$, $\N$ or $\Z$.
A natural question about the characterisation of $\FPtime$ for real functions arises, and not only discrete functions: we consider here computability over the reals in its most classical approach, namely computable analysis \cite{Wei00}. 

As a first step, the class 
$$\manonclass= [\mathbf{0},\mathbf{1},\projection{i}{k}, \length{x}, \plus, \minus,\times, \signb{x},\frac{x}{2};\textit{composition, linear length~ODE}]$$ has been considered in \cite{BlancBournezMCU22vantardise,MCUJournalManon} where the authors get some characterisation of $\Ptime$, but only for functions from the integers to the reals (i.e. sequences) while it would be more natural to characterise functions from the reals to the reals. 

More importantly, this was obtained by assuming that some \textbf{non-analytic exact function} is among the basic available functions to simulate a Turing machine: $\signbname$ valuing $1$ for $x>\frac34$ and $0$ for $x<\frac14$. 

We prove first this is not needed, and mainly, we extend all previous results to real functions, furthermore covering not only time complexity but also space complexity.  Consider $$\manonclasslighttanh = [\mathbf{0},\mathbf{1},\projection{i}{k},   \length{x}, \plus, \minus,\tanh,\frac{x}{2},\frac{x}{3};\textit{composition, linear~length~ODE}],$$
where $\ell: \N \to \N$ is the length function, mapping some integer to the length of its binary representation,  $\frac{x}{2}: \R \to \R$ is the function dividing by $2$  (similarly for $\frac{x}{3}$)  and all other basic functions defined exactly as for $\linearderivlength$, but are considered here as functions from the reals to reals.

\begin{remark} This class is $\linearderivlength$ but without the $\sign{x}$ function, nor the multiplication function,  or $\manonclass$ but without the $\signbname$ function, nor the multiplication. This is done by adding the analytic $\tanh$ functions as a substitute (and adding $x/3$).
\end{remark}

%\MFCSshorter{
%\begin{remark}
%	Notice that our simulations have clear connexions with formal neural networks (deep-learning) models, where $\tanh$ is often used as a sigmoid. We comment on this later.
%\end{remark}}

\begin{remark}
	We can consider $\N \subset \Z  \subset \R$ but as functions may have different types of outputs, the composition is an issue. We consider, as this is done in \cite{BlancBournezMCU22vantardise,MCUJournalManon}, that
	composition may not be defined in some cases: it is a partial operator. For example, given $f: \N \to \R$ and $g: \R \to \R$, the composition of $g$ and $f$ is defined as expected, but $f$ cannot be composed with a function such as $h: \N \to \N$.
\end{remark}

%\olivier{
%\MFCSshorter{
%\begin{remark}
%	We know that the submitted journal version \cite{MCUJournalManon} of \cite{BlancBournezMCU22vantardise} proves that multiplication can be avoided in $\manonclass$ for talking about sequences but they require non-analytic function $\signbname$, and they still restrict to sequences.
%\end{remark}}
%}

First, we improve Theorem \ref{th:ptime characterization 2} by stating $\FPtime$ over the integers can be characterised algebraically using linear length ODEs and only analytic functions  (i.e. no need for sign function). 
Since $\manonclasslighttanh$ is about functions over the reals, and Theorem \ref{th:ptime characterization 2} is about functions over the integers, we need a way to compare these classes. Given a function $\tu f: \R^{d} \to \R^{d'}$ sending every integer $\vn \in \N^{d}$ to the vicinity of some integer of  $\N^{d}$, say at a distance less than $1/4$, we write $\DP(f)$ for its discrete part: this is the function from $\N^{d} \to \N^{d'}$ mapping $\vn \in \N^{d}$ to the integer rounding of $\tu f(n)$. Given a class $\mathcal{C}$ of such functions, we write $\DP(\mathcal{C})$ for the class of the discrete parts of the functions of $\mathcal{C}$.

\begin{theorem} \label{trucchosethmain}
$\DP(\manonclasslighttanh)= \FPtime  \cap \N^{\N}.$
\end{theorem}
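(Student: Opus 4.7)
The strategy is to establish the two inclusions separately, leveraging Theorem~\ref{th:ptime characterization 2} for the hard direction.

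For the inclusion $\FPtime \cap \N^{\N} \subseteq \DP(\manonclasslighttanh)$, by Theorem~\ref{th:ptime characterization 2} it suffices to show that each $f \in \linearderivlength$ is of the form $\DP(g)$ for some $g \in \manonclasslighttanh$. The generators $\mathbf{0}$, $\mathbf{1}$, $\projection{i}{k}$, $\length{x}$, $\plus$, $\minus$ are directly available, so the real work is to simulate, up to $\DP$, the two missing primitives of $\linearderivlength$: the sign function $\sign{x}$ and the multiplication $x \times y$. For the sign function, the natural route is to use $\tanh(Nx)$ as an analytic surrogate; here the distance to $\pm 1$ on inputs bounded away from $0$ decays exponentially in $N$, so by choosing $N$ to be an exponentially large quantity (which can itself be produced by a linear length-ODE iterating doubling on an auxiliary variable) one brings the error far below $1/4$, and $\DP$ then recovers $\sign$ exactly. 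For multiplication, I would encode $x \cdot y$ by iterating additions along $\length{x}$: the bits of $x$ are read with the help of $x/2$ combined with a $\tanh$-based analytic rounding, and a linear length-ODE whose increment is turned on or off according to each bit accumulates the correct partial sums. The constants $1/2$ and $1/3$ provide the rational offsets needed to build characteristic-function-style tests from $\tanh$, since shifting $\tanh(N(x - 1/2))$ or $\tanh(N(x - 1/3))$ gives sharp-threshold indicators that cannot be obtained from $1/2$ alone.

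For the converse $\DP(\manonclasslighttanh) \subseteq \FPtime \cap \N^{\N}$, I would proceed by structural induction on the definition of a function $g \in \manonclasslighttanh$, showing two things simultaneously: that $g$ is polynomial-time computable in the sense of computable analysis, and that on integer inputs its value lies at distance less than $1/4$ from an integer (so that $\DP(g)$ is well defined and also polynomial-time computable). The base cases are immediate since $\tanh$, $x/2$, $x/3$, $\plus$, $\minus$, $\length{\cdot}$ are all polynomial-time computable real functions; composition is handled by the standard closure of $\FPtime$ in computable analysis. The key case is the linear length-ODE scheme: because $\length{x}$ grows only logarithmically in $x$, the associated discrete recurrence runs for polynomially many steps in the bit-size of the input, and linearity of the update ensures that intermediate coefficients and values stay polynomially bounded in bit-size, which is exactly the property exploited in \THEPAPIERS{} for the integer setting and which transfers to the real setting with a uniform control of approximation errors.

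The main obstacle is expected to be the multiplication simulation: linearity of the ODE scheme forbids any direct bilinear update, and without $\times$ as a primitive every product must be built by bit-by-bit accumulation mediated by $\tanh$-based analytic tests. The delicate point is to carry the simulation through composition, because each replacement of $\sign$ and $\times$ introduces a small analytic error, and one must verify that these errors never compound beyond the $1/4$ tolerance that $\DP$ requires, even when the constructed functions are nested inside further length-ODEs. Establishing this robustness uniformly over the class, while producing the exact integer value after $\DP$, is the core technical work and is presumably what dictates the precise choice of basic analytic primitives $\tanh$, $x/2$, $x/3$ in the definition of $\manonclasslighttanh$.
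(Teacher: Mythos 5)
Your converse inclusion ($\DP(\manonclasslighttanh) \subseteq \FPtime\cap\N^\N$) matches the paper's: both rely on a structural induction showing polynomial-time computability in the sense of computable analysis (the paper's Proposition~\ref{prop:mcu:un}, point~1), then round a $1/4$-approximation. That part is fine.

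For the forward inclusion you propose a fundamentally different route, and it has a genuine gap. The paper does \emph{not} pass through Theorem~\ref{th:ptime characterization 2} and does \emph{not} attempt a term-by-term translation of $\linearderivlength$ into $\manonclasslighttanh$; instead it directly simulates a polynomial-time Turing machine using Proposition~\ref{prop:deux}, together with the $\Decode$ and $\EncodeMul$ conversions of Section~\ref{sec:convertion}, and that simulation lives on the Cantor-like set $\Image$ and explicitly \emph{rounds back} onto $\Image_{S+1}$ after each step (via $\sigma_1$, using a known space bound $S$). This rounding is the decisive error-correction mechanism: it prevents the per-step $2^{-m}$ error from being amplified across iterations.

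Your syntactic-replacement plan lacks any analogue of this correction, and that is exactly where it breaks. Replacing $\sign{x}$ by $\tanh(Nx)$ means the surrogate has Lipschitz constant $\Theta(N)$ near $0$: a small upstream perturbation $\epsilon$ of an argument that was intended to be exactly $0$ gets turned into an output anywhere in roughly $[-\tanh(N\epsilon),\tanh(N\epsilon)]$, so you need $\epsilon \ll 1/N$ while simultaneously wanting $N$ large to sharpen the step. Once you nest such surrogates inside length-ODEs and compositions, the errors are multiplied by these Lipschitz constants at each step, and there is nothing in $\manonclasslighttanh$ that snaps a drifting value back to an integer for free. You name this difficulty yourself (``one must verify that these errors never compound beyond the $1/4$ tolerance'') and call it ``the core technical work,'' but you do not supply the mechanism, and the one that works in the paper — per-step rounding onto a discrete encoding set, made possible by knowing a priori time/space bounds on the simulated machine — is precisely what a generic term-level translation of $\linearderivlength$ does not have access to. Also, a side remark: you do not need exponentially large $N$; by Lemma~\ref{MonLemma 4.2.5} the error $|1-\tanh(Nx)|$ is already exponentially small in $N$ for $|x|\ge\mathrm{const}$, so polynomial $N$ suffices for the sharpness you want — the real obstacle is the error compounding, not the gain.

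In short: the reverse inclusion is right and matches the paper; the forward inclusion as you outline it is a different and incomplete strategy whose missing piece (uniform error control through nested ODEs and compositions) is exactly the hard part, and is resolved in the paper not by generic robustness of $\linearderivlength$-terms but by a concrete Turing-machine simulation with explicit per-step rounding on $\Image$.
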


Second, we improve \cite{BlancBournezMCU22vantardise}: 
 Write $\manonclasslighttanhlim$ for the class obtained by adding some effective limit operation similar to the one considered in \cite{BlancBournezMCU22vantardise} to get $\manonclasslim$. We get a characterisation of functions over the reals (and not only sequences as in \cite{BlancBournezMCU22vantardise}) computable in polynomial time.

\begin{theorem}[Generic functions over the reals] \label{th:main:twop} 
	%\ \\  \centerline
	{$\manonclasslighttanhlim \cap \R^{\R}     = \FPtime \cap \R^{\R} $} \\
	More generally: \quad \quad \quad  \quad  \hspace{0.1cm}
	   {$\manonclasslighttanhlim \cap \R^{\N^{d} \times \R^{d'}} = \FPtime \cap \R^{\N^{d} \times \R^{d}} $.} 
\end{theorem}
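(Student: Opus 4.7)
The plan is to prove the theorem by double inclusion, adapting to real-valued inputs the strategy used in \cite{BlancBournezMCU22vantardise} for the case of sequences (functions $\N^{d}\to\R$). The discrete skeleton of the constructions will be handled by Theorem~\ref{trucchosethmain}, while the effective limit operator built into $\manonclasslighttanhlim$ bridges the discrete simulation with the desired real output.

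For soundness ($\manonclasslighttanhlim\cap\R^{\R}\subseteq\FPtime\cap\R^{\R}$), I would first verify that each basic function of $\manonclasslighttanh$ is polynomial-time computable in the sense of computable analysis \cite{Wei00}: this is immediate for the constants, projections, $+$, $-$, $x/2$, $x/3$ and $\length{\cdot}$, while $\tanh$ is a standard example of a polynomial-time computable analytic function. I would then check that each scheme preserves this computability: composition is routine; a linear length-ODE performs only $\length{x}=O(\log x)$ update steps of polynomially bounded arithmetic cost, so it remains in $\FPtime$; and the effective limit operator is by design engineered to transport polynomial-time computability from a polynomially converging sequence of rational approximants to its real limit, exactly as in \cite{BlancBournezMCU22vantardise}.

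For completeness ($\FPtime\cap\R^{\R}\subseteq\manonclasslighttanhlim\cap\R^{\R}$), let $f$ be polynomial-time computable. By the standard Ko--Weihrauch formulation, there are a polynomial $p$ and a Turing machine $M$ that, given a dyadic $q\in\dyadic$ approximating $x$ within $2^{-p(n)}$ together with $n$, outputs in time polynomial in $n+\length{q}$ a dyadic $q'$ with $|q'-f(x)|\le 2^{-n}$. Encoding dyadics as pairs of integers, $M$ computes some $\tilde f\in\FPtime\cap\N^{\N}$, which by Theorem~\ref{trucchosethmain} lies in $\DP(\manonclasslighttanh)$. Lifting to a real counterpart in $\manonclasslighttanh$ yields an approximation scheme $F_{n}(x,\ldots)$ converging polynomially to $f(x)$, provided a dyadic approximation of $x$ at precision $2^{-p(n)}$ is fed in; such a truncation is itself definable in $\manonclasslighttanh$ via a linear length-ODE peeling off binary digits. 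Applying the effective limit operator then produces $f\in\manonclasslighttanhlim$. The generalisation to $\N^{d}\times\R^{d'}$ is immediate, integer coordinates being handled exactly and real coordinates via their approximations.

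The hard part will be the clean interface between the real input $x$ and the discrete simulation on which Theorem~\ref{trucchosethmain} operates: one must construct inside $\manonclasslighttanh$ a dyadic truncation of $x$ at precision $2^{-p(n)}$, plug it into the encoded simulation, and verify that the resulting scheme enjoys the uniform polynomial convergence required by the limit operator. A subsidiary technical point is to control the modulus of continuity of $f$ implicit in the Ko--Weihrauch representation, so that the polynomial time bound on $M$ translates into a polynomial-time bound for the full limit construction; this is expected to be a routine refinement of the analogous argument already carried out for sequences in \cite{BlancBournezMCU22vantardise}.
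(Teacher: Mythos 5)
The soundness direction of your proposal aligns with the paper's Proposition \ref{prop:mcu:un} and is essentially correct. But the completeness direction has a genuine gap, and it is precisely the step you dismiss as routine.

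You write that a dyadic truncation of the real input $x$ at precision $2^{-p(n)}$ is "itself definable in $\manonclasslighttanh$ via a linear length-ODE peeling off binary digits." This cannot work: every function in $\manonclasslighttanh$ is continuous (the generators $\tanh$, $+$, $-$, $x/2$, $x/3$, projections, etc., are continuous, and composition together with linear length-ODE schemas preserve continuity), whereas a truncation $x\mapsto \lfloor 2^{k}x\rfloor 2^{-k}$ is discontinuous at all dyadics of depth $\le k$. The paper makes this point explicitly: $\Decode$ maps integers to codes in $\Image$ but "is not guaranteed to do something valid on non-integers," and there is no continuous map from $\R$ onto a discrete set of codewords — this is exactly where the approach of \cite{BlancBournezMCU22vantardise}, which you are adapting, gets stuck for $d\ge 1$.

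The paper's new idea, which is absent from your proposal, is the adaptive barycentric technique. Instead of one continuous surrogate for $\lfloor\cdot\rfloor$, the paper uses two, $\sigma_{1}$ and $\sigma_{2}$ (Corollary \ref{lem:i}), each accurate on a family of intervals that together cover $\R$ but that individually do not, plus a continuous selector $\lambda$ (Corollary \ref{lem:lambda}) that is near $1$ where $\sigma_{1}$ is reliable, near $0$ where $\sigma_{2}$ is reliable, and takes intermediate values only where both are reliable. The two resulting discrete-simulation-plus-encode pipelines $\Formula_{1}$ and $\Formula_{2}$ are then mixed by a barycenter $\lambda\,\Formula_{1}+(1-\lambda)\,\Formula_{2}$, which is guaranteed to be a correct $2^{-n}$-approximation of $f(x)$ for every $x$. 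A subsidiary wrinkle you would also hit is that $\manonclasslighttanh$ has no multiplication, so the barycenter coefficients $\lambda$ and $1-\lambda$ must be threaded through the $\EncodeMul$ function rather than multiplied externally; this is handled by formula \eqref{letrucalafin}. Without the two-sigmoid/adaptive-barycenter device, the continuous-to-discrete interface you need simply cannot be realised in the class, and your proof does not go through.
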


We also prove that, by adding a robust linear ODE scheme (Definition \ref{schema:space}), we get a class $\spaceclasstanh$ (this stands for robust linear derivation) with similar statements but for $\FPspace$.
\begin{theorem} \label{trucchosethmainR}
$\DP(\spaceclasstanh)= \FPspace  \cap \N^{\N}.$
\end{theorem}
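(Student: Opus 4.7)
The plan is to prove Theorem~\ref{trucchosethmainR} by establishing the two inclusions separately, mirroring the strategy used for Theorem~\ref{trucchosethmain} but adapting the time-bounded arguments to accommodate the exponentially many iterations that polynomial-space computations may require. The key ingredient that replaces the linear length-ODE scheme of $\manonclasslighttanh$ is the robust linear ODE scheme of Definition~\ref{schema:space}, which permits iteration along a variable itself (not just along its length) while keeping the dynamics linear in the evolved quantity, and whose robustness means that small perturbations of the right-hand side translate into controlled perturbations of the solution.

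For the inclusion $\DP(\spaceclasstanh)\subseteq \FPspace\cap\N^{\N}$, I would argue by structural induction on the derivation of a function in $\spaceclasstanh$. The base functions $\mathbf{0}$, $\mathbf{1}$, projections, $\ell$, $+$, $-$, $\tanh$, $x/2$, $x/3$ are all polynomial-time (hence polynomial-space) computable up to any prescribed dyadic precision, and their discrete parts trivially lie in $\FPspace$. Composition is handled by the standing convention on output sizes that the paper already fixes. Linear length-ODEs are dealt with by the argument already used for Theorem~\ref{trucchosethmain}. The delicate case is the robust linear ODE: here the solution at time $x$ can be written as a product of $x$ affine maps whose coefficients are themselves $\FPspace$-computable and of polynomially bounded magnitude, and such iterated affine composition (with possibly exponentially many factors) is in $\FPspace$ via the standard divide-and-conquer used to place matrix powering in $\FPspace$. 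Robustness is what guarantees that using truncated dyadic approximations of $\tanh$, $x/2$, $x/3$ at each factor does not propagate into an uncontrolled error at the end.

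For the reverse inclusion $\FPspace\cap\N^{\N}\subseteq \DP(\spaceclasstanh)$, the plan is to simulate a polynomial-space Turing machine $M$ by a single application of the robust linear ODE scheme on top of the machinery built for Theorem~\ref{trucchosethmain}. Since $M$ runs for at most $2^{p(\ell(x))}$ steps on input $x$ for some polynomial $p$, and its configurations remain of size polynomial in $\ell(x)$, every configuration admits an integer encoding of polynomial size. I would reuse the $\tanh$-based encoding of the one-step transition developed for Theorem~\ref{trucchosethmain} (which already avoids any non-analytic sign), expressing $\nextI$ as a function in $\manonclasslighttanh$. Then the robust linear ODE scheme allows iterating $\nextI$ exactly $2^{p(\ell(x))}$ times, sampling at the final time to obtain the halting configuration's encoding, from which the output is extracted by further composition with basic functions.

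The main obstacle I expect is controlling error propagation across exponentially many steps. Each application of the analytic surrogate for an integer-valued transition introduces a perturbation, and a naive bound on accumulated error grows at least linearly (and generically exponentially) with the number of iterations. The role of the robust linear ODE scheme is exactly to absorb this: the dynamics must include a rounding or contraction mechanism that re-snaps the state back to the vicinity of the intended integers at every step, so that the global discrete part is still well-defined. Checking that the transition encoding meets the robustness hypothesis of Definition~\ref{schema:space}, and simultaneously that the $\FPspace$-computability of iterated affine products in the forward direction tolerates the same approximations, is where the bulk of the technical work will concentrate.
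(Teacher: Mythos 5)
Your reverse inclusion matches the paper's Proposition~\ref{prop:trendeux}: iterate the $\tanh$-based $\bar{\Next}$ with a robust linear ODE (plain derivation, not length-derivation), and rely on the rounding built into $\bar{\Next}$ to re-snap the state to the vicinity of $\Image_{S+1}$ at each step so that the error does not accumulate over the exponentially many iterations. That is exactly what the paper does, and you correctly identified that the rounding/contraction inside the transition is what makes the robustness hypothesis of Definition~\ref{schema:space} hold.

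The forward inclusion, however, contains a genuine gap at the robust linear ODE case. You claim the solution at time $x$ ``can be written as a product of $x$ affine maps whose coefficients are themselves $\FPspace$-computable'' and that one then applies divide-and-conquer matrix powering. This framing is wrong for two independent reasons. First, Definition~\ref{def:essentiallylinear} only requires $\tu A$ and $\tu B$ to be \emph{essentially constant} in $\tu f$, which still allows them to depend on $\tu f(t,\tu y)$ inside $\tanh$-subterms. So the ``affine map at time $t$'' cannot be computed without first knowing $\tu f(t,\tu y)$: the factors are not independently $\FPspace$-computable and the product cannot be parenthesised freely, which defeats any repeated-squaring scheme. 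Second, even in the special case where the coefficients are genuinely independent of $\tu f$, divide-and-conquer is unnecessary: iterating forward one step at a time, storing only the current approximate state, already places the computation in $\FPspace$ regardless of the (possibly exponential) number of steps. What the paper actually uses — see the proof of Proposition~\ref{prop:mcu:un}, point~2. — is precisely that sequential forward iteration at a fixed polynomial precision. The two hypotheses of Definition~\ref{schema:space} that make this work are that $\tu f$ is \emph{bounded} (so the integer part never grows and the stored state keeps polynomial size) and that the scheme is \emph{polynomially numerically stable} (so working with truncated dyadic approximations at precision $2^{-\epsilon(n)}$ at each step still yields a $2^{-n}$-approximation at the end). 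Your proposal invokes robustness but omits the boundedness hypothesis, and substitutes a matrix-powering argument that neither applies nor is needed.
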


%This provides a characterisation of $\FPspace$.

\begin{theorem}[Generic functions over the reals] \label{th:main:twopSpace} 
	%\ \\  \centerline
	{$\spaceclasstanhlim \cap \R^{\R}  =  \FPspace \cap \R^{\R} $} \\
	More generally: \quad \quad \quad  \quad  \hspace{0.1cm}
	   {$\spaceclasstanhlim \cap \R^{\N^{d} \times \R^{d'}}  = \FPspace \cap \R^{\N^{d} \times \R^{d}} $.} 
\end{theorem}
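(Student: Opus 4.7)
The plan is to prove both inclusions separately, following the template of the proof of Theorem \ref{th:main:twop} but replacing each time-complexity ingredient by its space-complexity counterpart, and using Theorem \ref{trucchosethmainR} as the integer-valued base case in place of Theorem \ref{trucchosethmain}.

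For the inclusion $\spaceclasstanhlim \cap \R^{\N^{d} \times \R^{d'}} \subseteq \FPspace \cap \R^{\N^{d} \times \R^{d'}}$, I would proceed by structural induction on the construction of a function in $\spaceclasstanhlim$. The basic functions $\mathbf{0}$, $\mathbf{1}$, $\projection{i}{k}$, $\length{x}$, $x \plus y$, $x \minus y$, $\tanh$, $x/2$ and $x/3$ are all clearly computable in polynomial space on rational dyadic approximations (indeed they are in logarithmic space in the analytic sense, since $\tanh$ is computable from its Taylor expansion with small-space error control). Closure under composition follows from the standard observation (the one highlighted in the remark of the introduction) that polynomial-space functions with polynomial-size outputs compose. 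For the linear length-ODE scheme, the argument is the same as in \THEPAPIERS: the value $\tu f(x+1,\tu y)$ is a linear expression in $\tu f(x,\tu y)$ whose coefficients can be re-read from the already-space-bounded functions $\tu A$ and $\tu B$; iterating this recursion $\ell(x) = O(\log x)$ times clearly fits in polynomial space. The robust linear ODE scheme of Definition \ref{schema:space} is what gives access to genuinely $\FPspace$-level iteration: the robustness condition is exactly what allows us to iterate polynomially many times while keeping rounding errors controlled within polynomial space bounds. Finally, closure under the effective limit operation (as in $\manonclasslim$, extended here to our setting) preserves $\FPspace$ since a polynomial-precision approximation of the limit is obtained by evaluating the approximating sequence at a polynomial precision index.

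For the reverse inclusion $\FPspace \cap \R^{\N^{d} \times \R^{d'}} \subseteq \spaceclasstanhlim$, take $f \in \FPspace \cap \R^{\N^{d} \times \R^{d'}}$. By the standard definition from computable analysis, there is a polynomial-space Turing machine $M$ that, given $\vn \in \N^{d}$, dyadic rational approximations $\tu q$ of the real arguments, and a precision parameter $1^{p}$, outputs a dyadic rational $\tu r$ with $\| \tu r - f(\vn, \tu x)\| \le 2^{-p}$. The map $(\vn, \tu q, p) \mapsto \tu r$ (suitably encoded into integers) is then a function in $\FPspace \cap \N^{\N}$, so by Theorem \ref{trucchosethmainR} it is the discrete part of some function in $\spaceclasstanh$. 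Using the same scheme as in \cite{BlancBournezMCU22vantardise,MCUJournalManon} that underlies Theorem \ref{th:main:twop}, one encodes the dyadic arguments $\tu x$ by a suitable rational approximation (dividing by powers of two, which is available via $x/2$ and $x/3$), computes the $\FPspace$-approximation inside $\spaceclasstanh$, and then applies the effective limit operator of $\spaceclasstanhlim$ to recover $f$ as the limit of these polynomial-space approximations. The fact that $\tanh$ is present allows us to make all the internal rounding, branching and sign-detection steps analytic, exactly as in the proof of Theorem \ref{trucchosethmain}.

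The main obstacle is the careful bookkeeping in the robust linear ODE step during the forward direction. Unlike the time-complexity setting, where one only iterates polynomially many times and a naive stability argument suffices, here one potentially iterates along a length-parameter which can correspond to super-polynomial many arithmetic steps; the robustness condition of Definition \ref{schema:space} was introduced precisely so that intermediate errors do not accumulate beyond polynomial space. Showing that the solution of a robust linear discrete ODE remains computable in polynomial space, despite possibly long iteration horizons, requires exploiting robustness to replace an exact iteration by a logarithmically-precise one that can be recomputed on demand, in the spirit of Savitch-style arguments. Once this is verified, the chain composition / limit structure of $\spaceclasstanhlim$ transports the $\FPspace$ bound through the whole construction, and the two directions close the theorem for arbitrary $d, d'$, the case $d=0, d'=1$ giving the announced special statement for $\R^{\R}$.
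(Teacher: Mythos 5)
Your proposal follows essentially the same route as the paper: transport the proof of Theorem~\ref{th:main:twop} by replacing Proposition~\ref{prop:mcu:un} point~1 with point~2, Proposition~\ref{prop:deux} with Proposition~\ref{prop:trendeux}, and Theorem~\ref{trucchosethmain} with Theorem~\ref{trucchosethmainR}, while keeping the encoding/decoding, adaptive barycentre, and $\MANONlim$ machinery unchanged. That is exactly what the paper does, and both inclusions are correctly decomposed.

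One caveat in your ``main obstacle'' paragraph: the robust linear ODE of Definition~\ref{schema:space} derives in the plain variable $x$ (not a length parameter), which is precisely why the iteration horizon can be exponential. The paper's argument (sketched in the proof of Proposition~\ref{prop:mcu:un}) is more elementary than what you describe: since the solution is assumed \emph{bounded} and polynomially numerically stable, a single forward pass at the polynomial precision guaranteed by condition~2 of Definition~\ref{schema:space} suffices, because the working space for each step is polynomial and can be reused; no Savitch-style divide-and-conquer or ``recompute on demand'' mechanism is needed or used. Invoking Savitch would in fact be misleading here, since the iteration is a sequential recurrence, not a reachability question, and the point is simply that exponentially many low-space steps fit in polynomial space. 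Apart from this mischaracterisation of the mechanism, your plan matches the paper.
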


%This opens a way to characterise $\FPspace$ with continuous ODEs, over a compact domain, with a proof that space over such a domain corresponds to the precision for polynomial robust ODEs.  
%
%	
	%\textbf{Related work*}  \label{stateoftheart}
%
%Recursion schemes constitute a major approach to classical computability theory and some extent, to complexity theory. The foundational characterisation of $\FPtime$ due to Cobham \cite{cob65} and then others based on safe recursion \cite{bs:impl} or
%ramification (\cite{LM93,Lei94}) or for other classes \cite{lm:pspace}, gave birth to the field of \textit{implicit complexity} at the interplay of logic and theory of programming: see 
%\cite{Clo95,clote2013boolean} for monographs.
%
%When considering continuous functions, various computability and complexity classes have been recently characterised using classical continuous ODEs:
%see survey \cite{DBLP:journals/corr/BournezGP16}.

As far as we know, this is the first time a characterisation of $\FPspace$ with discrete ODEs is provided.
If we forget the context of discrete ODEs, $\FPspace$ has been characterised in \cite{thompson1972subrecursiveness} but using a bounded recursion scheme, i.e. requiring some explicit bound in the spirit of Cobham's statement \cite{cob65}. We avoid this issue by considering numerically stable schemes, which are very natural in the context of ODEs. 

At a technical level, all our results are obtained by proving Turing machines can  be suitably simulated with analytic discrete ODEs. 
We believe our constructions could be applied to many other situations, where programming with ODEs is needed.

\paragraph{More on related works} In addition to the previous state-of-the-art discussions, we comment here on some aspects: 
Our ways of simulating Turing machines have some reminiscence of similar constructions used in other contexts such as Neural Networks \cite{SS95,LivreSiegelmann}. In particular, we use a Cantor-like encoding set $\Image$ with inspiration from these references. These references use some particular sigmoid function $\sigma$ (called sometimes the \motnouv{saturated linear function}) that values $0$ when $x \le 0$, $x$ for $0 \le x \le 1$, $1$ for $x \ge 1$. The latter is equivalent to $\signb{\frac14+\frac12x}$, for the function considered in \cite{BlancBournezMCU22vantardise,MCUJournalManon} and hence their constructions can be reformulated using the $\signbname$ function. We completely avoid this, by considering the $\tanh$ function, which is more natural in the context of formal neural networks. 
The models considered in \cite{SS95,LivreSiegelmann}  are recurrent, while our constructions are not recurrent neural networks, and second, their models are restricted to live on the compact domain $[0,1]$, which forbids getting functions from $\R \to \R$, while our settings allow more general functions. Our proofs also require functions taking some integer arguments, that would be impossible to consider in their settings (unless at the price of an artificial recoding). 

\begin{remark} In some sense, our constructions can be seen as operators that map to a family of neural networks in the spirit of these models, instead of considering fixed recurrent neural networks, but also dealing with $\tanh$, and not requiring the saturated linear function. 
\end{remark}

The question of whether Turing machines can be simulated by recurrent neural networks of finite size, using the $\tanh$ activation function (and not the ``exact'' saturated linear function)  is a well-known long-standing open problem. Despite some attempts, such as the one described in \cite{LivreSiegelmann}, up to our knowledge, there remains some of the statements not yet formally proved, or at least not fully generally accepted,  in the existing proofs. Our statements are dealing with the $\tanh$ activation function, in some sense, but we avoid this open question by restricting it to finite space or time computations. By the way, our proofs state this is possible if the space or the time of the machine is bounded, up to some controlled error. 

In the context of neural network models, there have been several characterisations of complexity classes over the discrete (see e.g. the monograph \cite{LivreSiegelmann} about the approach discussed above, but not only), as far as we know, the relation between formal neural networks with classes of computable analysis has never been established before.

If we do not restrict ourselves to neural network-related models, as in all these previous contexts, as far as we know, only a few papers have been devoted to characterisations of complexity, and even computability, classes in the sense of computable analysis. There have been some attempts using continuous ODEs \cite{BCGH07}, that we already mentioned, or 
the so-called $\R$-recursive functions \cite{DBLP:journals/corr/BournezGP16}. For discrete schemata, we only know \cite{brattka1996recursive}  and  \cite{ng2021recursion}, focusing on computability and not complexity.

\paragraph{Organisation of the article} 
In Section \ref{sec:discreteode}, we recall some basic statements about the theory of discrete ODEs. More details can be found in Appendix \ref{sec:dode}, mostly repeated from \THEPAPIERSPLUS, to be self-content.
In Section \ref{sec:functions}, we establish some properties about particular functions required for our proofs. 
 In Section \ref{sec:simulatingmt} we prove our main technical result: Turing machines can be simulated using functions from $\manonclasslighttanh$. Section \ref{sec:convertion} is about converting integers and reals (dyadic) to words of a specific form. 
Section \ref{sec:applications} is about applications of our toolbox. We prove in particular all the above theorems.
%
%A * means that a (possibly extended) proof can be found in appendix.  

\paragraph{About appendices} 

In this article, when we say that a function $f: S_{1}  \times \dots \times S_{d} \to \R^{d'}$ is (respectively: polynomial time or space) computable this will always be in the sense of computable analysis: see e.g. \cite{brattka2008tutorial,Wei00}.  As we did not find a reference where the case of functions mixing integer and real arguments is fully formalised, we proposed a formalisation in \cite{ BlancBournezMCU22vantardise}. In order to be self-content, we repeat it in Appendix \ref{sec:defanalysecalculable}. 
In order not to expect our readers to be familiar with the theory of discrete ODEs, as we already wrote, we repeat some basic statements in Appendix \ref{sec:dode}, mostly repeated from \THEPAPIERS. 
As we need some results from \cite{BlancBournezMCU22vantardise}, we also repeat their proof in Appendix \ref{ouestlapreuve}.

\paragraph{Note about current version vs final version} 
This article is the journal version of an article accepted at MFCS'2023 \cite{BlancBournezMFCS2023}. 
%In the final version of this article, as appendices are mostly repeated from submissions or already published articles, they will be omitted.

\section{Some concepts related to discrete ODEs}
\label{sec:discreteode}
\newcommand\polynomial{ \fonction{sg}-polynomial}
In this section, we recall some concepts and definitions from discrete ODEs, either well-known or established in \THEPAPIERSPLUS: more details, repeated from these references, are provided in Appendix \ref{sec:dode}. 

In order to get a uniform presentation, we consider here that $\basictanh$ is $\tanh$, the hyperbolic tangent. The papers \THEPAPIERS{}  use similar definitions with the sign function $\signname$  and  \cite{BlancBournezMCU22vantardise} with  the piecewise affine function $\signbname$,  which values $1$ for $x>\frac34$ and $0$ for $x<\frac14$, instead of $\tanh$.

\begin{definition}[{\cite{BlancBournezMCU22vantardise}}]
A \polynomialbtanh{} expression $P(x_1,...,x_h)$ is an expression built-on
$+,-,\times$ (often denoted $\cdot$) and $\basictanh{}$ functions over a set of variables $V=\{x_1,...,x_h\}$ and integer constants. 
\end{definition}

We need to measure the degree, similarly to the classical notion of degree in a polynomial expression, but considering all subterms that are within the scope of a $\basictanh$ function contributes to $0$ to the degree. 

\begin{definition}[{\cite{BlancBournezMCU22vantardise}}]
The degree $\deg(x,P)$ of a term  $P$ in $x\in V$  is defined inductively as follows:
\shortermcu{
\begin{itemize}
	\item} $\deg(x,x)=1$ and for  $x'\in V\cup \Z$ such that $x'\neq x$, $\deg(x,x')=0$;
	\shortermcu{\item}  $\deg(x,P+Q)=\max \{\deg(x,P),\deg(x,Q)\}$;
\shortermcu{\item}   $\deg(x,P\times Q)=\deg(x,P)+\deg(x,Q)$;
\shortermcu{\item}   $\deg(x,\basictanh(P))=0$.
\shortermcu{
\end{itemize}}
A \polynomialbtanh{}  expression $P$  is \textit{essentially constant} in
$x$ if $\degre{x,P}=0$. 
\end{definition}

A vectorial function (resp. a matrix or a vector) is said to be a \polynomialbtanh{} expression if all
its coordinates (resp. coefficients) are, 
and 
\textit{essentially constant} if all its coefficients are.

\begin{definition}[\THEPAPIERSPLUS] \label{def:essentiallylinear}
A\polynomialbtanh{} expression $\tu g(\tu f(x, \tu y),  \tu h(x,\tu y), x,
\tu y)$ is \textit{essentially linear} in $\tu f(x, \tu y)$ if
it is of the form: $\tu A [\tu f(x,\tu y),  \tu h(x,\tu y), 
	x,\tu y]  \cdot \tu f(x,\tu y) 
	+ \tu B [\tu f(x,\tu y),  \tu h(x,\tu y), 
	x,\tu y ]$
where $\tu A$ and $\tu B$ are\polynomialbtanh{} expressions essentially
constant in $\tu f(x, \tu y)$. 
\end{definition}

For example, 
\begin{itemize}
\item
    the expression $P(x,y,z)=x\cdot \basictanh{(x^2-z)\cdot y} + y^3$
    is essentially linear in $x$, essentially constant in $z$ and not linear in
    $y$. 
     \item 
      The expression
    $P(x,2^{\length{y}},z)=\signb{x^2 - z}\cdot z^2 + 2^{\length{y}}$
    is essentially constant in $x$, essentially linear in
    $2^{\length{y}}$ (but not essentially constant) and not
    essentially linear in $z$. 
         \item 
      The expression:
    $  z +
    (1-\basictanh(x))\cdot (1-\basictanh(-x))\cdot (y-z) $
    is essentially constant in $x$ and linear in $y$ and $z$.
\end{itemize}

\begin{definition}[Linear length ODE \THEPAPIERS]\label{def:linear lengt ODE}
A function $\tu f$ is linear length-ODE definable from $\tu u$ \textit{essentially linear} in $\tu f(x, \tu y)$, 
$\tu g$ and $\tu h$ if it corresponds to the
solution of 
\begin{equation} \label{SPLode}
f(0,\tu y) %&=&
= \tu g(\tu y)   \quad  and \quad
\dderivl{\tu f(x,\tu y)} %&=&
=   \tu u(\tu f(x,\tu y), \tu h(x,\tu y),
x,\tu y).
\end{equation}
\end{definition}

A fundamental fact is that the derivation with respect to length provides a way to do some change of variables:

\begin{lemma}[{\THEPAPIERS}] \label{myfundobge}
Assume that \eqref{SPLode} holds. 
Then $\tu f(x,\tu y)$ is given by 
$\tu f(x,\tu y)= \tu F(\length{x},\tu y)$
where $\tu F$ is the solution of the initial value problem 
\begin{equation} \label{eq:puissance}
\tu F(1,\tu y)= \tu g(\tu y), \quad and \quad \dderiv{\tu F(t,\tu y)}{t} =\tu u(\tu F(t, \tu y),\tu h(2^{t}-1,\tu y), 2^{t}-1,\tu y).
\end{equation}
\end{lemma}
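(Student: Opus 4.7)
The plan is to exploit the fact that the length function $\ell$ is piecewise constant, jumping by exactly $1$ precisely at the powers of $2$. Under the unfolded definition of $\dderivl{\cdot}$ recalled in the paragraph just above \eqref{lode}, the hypothesis reads
$$\tu f(x+1, \tu y) = \tu f(x, \tu y) + (\length{x+1} - \length{x}) \cdot \tu u(\tu f(x, \tu y), \tu h(x, \tu y), x, \tu y).$$
Since $\length{x+1} - \length{x} = 0$ except when $x+1$ is a power of $2$, the map $x \mapsto \tu f(x, \tu y)$ is constant on every ``dyadic block'' of integers on which $\ell$ takes the same value $t$, namely on $\{2^{t-1}, \ldots, 2^{t} - 1\}$. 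This constancy is the conceptual heart of the lemma: re-parametrising by $\ell(x)$ simply records one representative per block.

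First, I would set $\tu F(t, \tu y) := \tu f(2^t - 1, \tu y)$, the value of $\tu f$ on the right endpoint of the block of length $t$. The constancy observation immediately gives $\tu f(x, \tu y) = \tu F(\length{x}, \tu y)$ for every integer $x \geq 1$, which is the desired identity.

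Next, I would check that this $\tu F$ satisfies the two conditions in \eqref{eq:puissance}. The initial condition $\tu F(1, \tu y) = \tu g(\tu y)$ reduces to $\tu f(1, \tu y) = \tu f(0, \tu y) = \tu g(\tu y)$, which holds because $\length{1} = \length{0}$ under the conventions in use, so the very first step of the recurrence carries a zero length-increment. For the recurrence on $\tu F$, I would compute $\tu F(t+1, \tu y) - \tu F(t, \tu y)$ by telescoping the unfolded recurrence from $x = 2^t - 1$ to $x = 2^{t+1} - 1$: among these steps only the single one at $x = 2^t - 1$ has a nonzero length-increment (equal to $1$), so the telescoping sum collapses to the single contribution
$$\tu u\bigl(\tu f(2^t - 1, \tu y),\ \tu h(2^t - 1, \tu y),\ 2^t - 1,\ \tu y\bigr) = \tu u\bigl(\tu F(t, \tu y),\ \tu h(2^t - 1, \tu y),\ 2^t - 1,\ \tu y\bigr),$$
which is exactly the right-hand side of \eqref{eq:puissance} read as the forward difference $\dderiv{\tu F(t,\tu y)}{t}$.

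The only delicate point I anticipate is the bookkeeping at the boundary $t = 1$ / $x = 0$: one must pin down the convention for $\length{0}$ and $\length{1}$ to ensure that the base case matches and no single step is lost in the telescoping. Once this indexing check is done, no further machinery is needed — the argument is a direct unfolding of the definitions combined with the observation that $\ell$ jumps only at powers of $2$.
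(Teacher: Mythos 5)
Your proof is correct, and since the paper cites Lemma~\ref{myfundobge} from \THEPAPIERS{} without reproducing its proof, there is nothing in this document to compare it against. The argument you give is the natural one: unfolding the definition of $\dderivl{\cdot}$, observing that $\ell(x+1)-\ell(x)$ vanishes except when $x+1$ is a power of two, concluding that $x\mapsto\tu f(x,\tu y)$ is constant on each dyadic block $\{2^{t-1},\dots,2^t-1\}$, defining $\tu F(t,\tu y):=\tu f(2^t-1,\tu y)$ as a representative per block, and then telescoping the recurrence from $x=2^t-1$ to $x=2^{t+1}-1$ so that only the single nonzero increment at $x=2^t-1$ survives, which is exactly the right-hand side of \eqref{eq:puissance}. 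You are also right that the only delicate point is the convention on $\length{0}$: for the initial condition $\tu F(1,\tu y)=\tu g(\tu y)$ and the identity $\tu f(x,\tu y)=\tu F(\length{x},\tu y)$ at $x=0$ to both hold one needs $\length{0}=\length{1}=1$, which is what the paper's formulas \eqref{SPLode}--\eqref{eq:puissance} implicitly assume (otherwise the step $x=0\to x=1$ would carry a nonzero increment and the base case would not match). That indexing check is the entire content of the boundary case, and once it is pinned down your proof is complete and requires no further machinery.
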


This means $\tu f(x,\tu y)$ depends only on the length of its first argument:  $\tu f(x,\tu y)= \tu f(2^{\ell(x)},\tu y)$.
 Then \eqref{eq:puissance} can be seen as defining a function (with this latter property) by a recurrence of type
\begin{equation}\label{ourform}
\tu f(2^{0},\tu y)= \tu g(\tu y), \quad and \quad {\tu f(2^{t+1},\tu y)} =\overline{\tu u}(\tu f(2^{t}, \tu y),\tu h(2^{t}-1,\tu y), 2^{t},\tu y).
\end{equation}
for some  $\overline{\tu u}$ is \textit{essentially linear} in $\tu f(2^{t}, \tu y)$. As recurrence \eqref{eq:puissance} is basically equivalent to  \eqref{SPLode}:

\begin{corollary}[Linear length ODE presented with powers of $2$]
A function $\tu f$ is linear $\lengt$-ODE definable iff
the value of $\tu f(x,\tu y)$ depends only on the length of its first argument
and satisfies \eqref{ourform}, 
for some $\tu g$ and $\tu h$, and 
 $\overline{\tu u}$, \textit{essentially linear} in $\tu f(2^{t}, \tu y)$. 
\end{corollary}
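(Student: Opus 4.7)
The plan is to reduce everything to Lemma \ref{myfundobge}, which already gives the bridge between linear length-ODEs and a discrete ODE in the variable $t$ via the substitution $t = \length{x}$. The remaining work is just a bookkeeping translation between the discrete ODE of \eqref{eq:puissance}, written in terms of $\Delta / \Delta t$, and the explicit recurrence \eqref{ourform}, written in terms of $\overline{\tu u}$. The only delicate point will be to check that \emph{essential linearity in $\tu f$} is preserved under this translation in both directions.

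For the ($\Rightarrow$) direction, suppose $\tu f$ is linear length-ODE definable from some $\tu u$ essentially linear in $\tu f(x,\tu y)$, $\tu g$, $\tu h$. Lemma \ref{myfundobge} gives $\tu f(x,\tu y)=\tu F(\length{x},\tu y)$, so in particular $\tu f(x,\tu y)$ depends only on $\length{x}$. Unfolding the discrete derivative in \eqref{eq:puissance} as $\Delta \tu F(t,\tu y)/\Delta t=\tu F(t+1,\tu y)-\tu F(t,\tu y)$ and substituting $\tu F(t,\tu y)=\tu f(2^{t}-1,\tu y)=\tu f(2^{t},\tu y)$ (both have length $t+1$, hence equal values under the length-dependence), we obtain
\begin{equation*}
\tu f(2^{t+1},\tu y)=\tu f(2^{t},\tu y)+\tu u\bigl(\tu f(2^{t},\tu y),\,\tu h(2^{t}-1,\tu y),\,2^{t}-1,\,\tu y\bigr).
\end{equation*}
Setting $\overline{\tu u}(\vz,\vv,w,\tu y):=\vz+\tu u(\vz,\vv,w-1,\tu y)$ recovers exactly the form \eqref{ourform}. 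Since $\tu u$ is essentially linear in its first argument and adding the first argument preserves essential linearity (by the degree rules of the preceding definition), $\overline{\tu u}$ is essentially linear in $\tu f(2^{t},\tu y)$.

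For the ($\Leftarrow$) direction, assume $\tu f$ depends only on $\length{x}$ and satisfies \eqref{ourform}. Define $\tu F(t,\tu y):=\tu f(2^{t},\tu y)$ and set $\tu u(\vz,\vv,w,\tu y):=\overline{\tu u}(\vz,\vv,w+1,\tu y)-\vz$. Then \eqref{ourform} becomes $\tu F(t+1,\tu y)-\tu F(t,\tu y)=\tu u(\tu F(t,\tu y),\tu h(2^{t}-1,\tu y),2^{t}-1,\tu y)$, which is exactly \eqref{eq:puissance}; and $\tu u$ inherits essential linearity from $\overline{\tu u}$, because subtracting $\vz$ preserves the degree in $\vz$. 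Lemma \ref{myfundobge} (used in the reverse direction, i.e.\ defining $\tu f(x,\tu y):=\tu F(\length{x},\tu y)$) then shows that this $\tu f$ is linear length-ODE definable from $\tu u$, $\tu g$, $\tu h$.

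The main (and only) obstacle I anticipate is the verification that the two reparametrisations $\overline{\tu u}=\vz+\tu u$ and $\tu u=\overline{\tu u}-\vz$ send essentially linear expressions to essentially linear expressions; this is routine from the inductive definition of the degree, since $\deg(x,x)=1$ and $\deg(x,P+Q)=\max(\deg(x,P),\deg(x,Q))$, so adding or subtracting $\vz$ cannot raise the degree in $\vz$ above $1$ nor destroy the decomposition $\tu A\cdot \tu f+\tu B$ required by Definition \ref{def:essentiallylinear}.
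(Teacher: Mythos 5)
Your overall strategy—reduce to Lemma~\ref{myfundobge}, unfold the discrete derivative $\Delta\tu F/\Delta t$ into the explicit recurrence, and then verify that the reparametrisations $\overline{\tu u}=\vz+\tu u$ and $\tu u=\overline{\tu u}-\vz$ preserve essential linearity—is exactly what the paper intends (the corollary is stated informally as ``basically equivalent'' immediately after the discussion of~\eqref{eq:puissance} and~\eqref{ourform}, with no separate proof). Your check that adding or subtracting the first argument cannot raise the degree in $\vz$ above $1$, so that the decomposition $\tu A\cdot\tu f+\tu B$ with $\tu A,\tu B$ essentially constant survives, is correct and is the one genuine point worth isolating.

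There is, however, a concrete slip in the forward direction. You write that $\tu F(t,\tu y)=\tu f(2^{t}-1,\tu y)=\tu f(2^{t},\tu y)$ because ``both have length $t+1$''. That is false: $2^{t}-1$ is a string of $t$ ones, so $\length{2^{t}-1}=t$, while $\length{2^{t}}=t+1$. In fact $\tu f(2^{t}-1,\tu y)$ and $\tu f(2^{t},\tu y)$ are precisely the two values \emph{across} a length-jump, so they are generally different—this jump is where the ODE actually ``fires''. The correct identifications from $\tu f(x,\tu y)=\tu F(\length{x},\tu y)$ are $\tu F(t,\tu y)=\tu f(2^{t}-1,\tu y)$ and $\tu F(t+1,\tu y)=\tu f(2^{t},\tu y)$, and substituting these into $\tu F(t+1,\tu y)=\tu F(t,\tu y)+\tu u(\tu F(t,\tu y),\tu h(2^{t}-1,\tu y),2^{t}-1,\tu y)$ produces a recurrence indexed one step off from what you wrote, whose extra arguments involve $2^{t+1}-1$ rather than $2^{t}-1$ after relabelling. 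This still yields the form~\eqref{ourform}—one simply has to absorb the shift into a different choice of $\tu h$ and $\overline{\tu u}$, which is permitted because the corollary is existentially quantified over them—but the equality of lengths you invoke is not the reason, and as written the justification is wrong. The backward direction is fine modulo the symmetric adjustment.
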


We guess it is easier for our reader to deal with recurrences of the form \eqref{ourform} than with ODEs of the form \eqref{SPLode}. Consequently, this is how we will describe many functions from now on, starting with some basic functions, authorising compositions, and the above schemes. As an example, $n \mapsto 2^{n}$ can easily be defined that way. Consider: 
$$\left\{\begin{array}{lll}
2^{0} &=&1 \\
2^{n+1}&=&  2\cdot2^{n}= 2^{n} + 2^{n}
\end{array}\right.$$
Similarly, we can produce $n \mapsto 2^{p(n)}$ for any polynomial $p$. For example,  $$(n_{1},\dots,n_{k}) \to 2^{n_{1} n_{2} \dots n_{k}}$$ can be obtained, using $k$ such schemes in turn, providing the case of the polynomial $p(n)=n^{k}$.

When talking about space complexity, we will also consider the case where the ODE is not derivated with respect to length but with classical derivation. For functions over the reals, an important issue is numerical stability.

\begin{definition}[Robust linear  ODE \THEPAPIERS]\label{def:roblinear lengt ODE} \label{schema:space}
A bounded function $\tu f$ is robustly linear ODE definable from $\tu u$ \textit{essentially linear} in $\tu f(x, \tu y)$, 
$\tu g$ and $\tu h$ if:
\begin{enumerate}
\item it corresponds to the
solution of
\begin{equation} \label{dynamique}
\tu f(0,\tu y) 
= \tu g(\tu y)   \quad  and \quad
\frac{\partial \tu f(x,\tu y)}{\partial x} 
=   \tu u(\tu f(x,\tu y), \tu h(x,\tu y),
x,\tu y),
\end{equation}
\item where  the schema \eqref{dynamique} is polynomially numerically stable.

\end{enumerate}
\end{definition}

Here, writing $a=_{n} b$ for $\| a-b \| \le 2^{-n}$ for conciseness, \textbf{2.} means formally there exists some polynomial $p$ such that, for all integer $n$, writing $\epsilon(n)=p(n+\ell(\tu y))$, if you consider any solution of  
$$\left\{
\begin{array}{lll}
\tilde{\tu y} &=_{\epsilon(n)} & \tu y \\
\tilde{\tu h}(x,\tilde{\tu y}) &=_{\epsilon(n)}& {\tu h}(x,\tilde{\tu y}) \\
 \tilde{\tu f}(0,\tilde{\tu y}) &=_{\epsilon(n)}&
\tu g(\tu y) \\

 \frac{\partial \tilde{\tu f}(x,\tilde{\tu y})}{\partial x} &=_{\epsilon(n)}&
   \tu u(\tilde{\tu f}(x,\tilde{\tu y}), \tilde{\tu h}(x,\tilde{\tu y}),
x,\tilde{\tu y}) 
\end{array}\right.
$$
%$
then
$$\tilde{\tu f}(x,\tilde{\tu y}) =_{\epsilon(n)} \tu f(x,\tu y).$$

\MFCSshorter{This corresponds roughly to the concept of polynomially robust to precision considered in  \cite{BlancBournezLICSsubmitted}, and turns 
out to be a very natural concept for functions defined on a compact.  }
\MFCSshorter{
\begin{remark}
	For linear length ODEs, we did not have to put explicitly numerical stability as a hypothesis, as it comes free from the fact that we consider solutions at most at some logarithmic value of their arguments. But this is required here to guarantee the computability of the solution (and even polynomial space computability). 
	\end{remark}
	\begin{remark}
	Notice that, over the continuum, even computable ODEs may have no computable solution \cite{PouRic79}. Over the discrete, not all dynamics can be simulated, and numerical stability is indeed an issue.
\end{remark}
}
\MFCSshorter{
\begin{remark}
	We believe the hypothesis that $\tu f$ is bounded can be relaxed to: ``the function does not grow as much as the exponential of a polynomial in the length of its arguments'', i.e. not more than a function of $\FPspace$, from arguments similar to the ones of \cite{thompson1972subrecursiveness} about functions over the integers. 
\end{remark} }

\section{Some results about various functions}
\label{sec:functions}

A key part of our proofs is the construction of very specific functions 
in $\manonclasslighttanh$: we write $\{x\}$ for the fractional part of the
real $x$, i.e. $\{x\}=x-\lfloor x \rfloor$. 
%
%We provide more details and show some graphical representations of most 
%of them in the appendix, in order to show that these 
%functions are sometimes highly non-trivial.
%

\newcommand\preuvesurtanh{
\begin{lemma}\label{MonLemma4.2.4}
	$\left|1+\tanh x\right|\le 2 \exp(-2|x|) $ for  $x \in(-\infty, 0]$.
\end{lemma}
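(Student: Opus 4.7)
The plan is direct: rewrite $1+\tanh x$ in closed form, then bound the denominator from below by $1$ on $(-\infty,0]$.

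First, I would use the identity
\[
1 + \tanh x \;=\; 1 + \frac{e^{x}-e^{-x}}{e^{x}+e^{-x}} \;=\; \frac{2e^{x}}{e^{x}+e^{-x}} \;=\; \frac{2e^{2x}}{e^{2x}+1},
\]
which follows from multiplying numerator and denominator by $e^{x}$. In particular, $1+\tanh x > 0$ for every real $x$, so the absolute value on the left-hand side is just $1+\tanh x$ itself.

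Next, I would restrict to $x \in (-\infty, 0]$, which gives $e^{2x} \in (0,1]$, hence $e^{2x}+1 \ge 1$. Therefore
\[
1 + \tanh x \;=\; \frac{2e^{2x}}{e^{2x}+1} \;\le\; 2 e^{2x}.
\]
Since $x \le 0$ means $2x = -2|x|$, this gives $1+\tanh x \le 2\exp(-2|x|)$, which is exactly the claimed inequality.

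There is no real obstacle here; the only thing to be careful about is the sign, namely that $1+\tanh x$ is nonnegative so that the absolute value can be dropped before bounding. Everything else is a one-line algebraic manipulation followed by using $e^{2x}+1 \ge 1$ on the relevant half-line.
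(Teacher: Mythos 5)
Your proof is correct and follows essentially the same approach as the paper: both rely on the closed form $1+\tanh x = \frac{2e^{2x}}{e^{2x}+1}$. The paper forms the difference $2e^{2x} - (1+\tanh x)$ and simplifies it to $\frac{2e^{4x}}{1+e^{2x}} \ge 0$, whereas you bound the denominator by $e^{2x}+1 \ge 1$ for $x\le 0$; these are the same one-line computation presented slightly differently.
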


\newcommand\preuvequatredeuxquatre{}
{
\begin{proof} For $x\le 0$, $|1+\tanh x|=1+\tanh x$, and $|x| = -x$.
	We have $f(x)=2 \exp(2x) - \tanh(x) -1 = 2 \exp(2x) - \frac{1-\exp(-2x)}{1+\exp(-2x)} -1 =\frac{2 \exp(4x)}{1+\exp(2x)} \ge 0$.
%	
%	
%	
%	We have $f(-x)= 2 \exp(2x) + \tanh(x) -1$. Its derivative is $f'(-x)=4\exp(2x) +1 - \tanh^{2}(x)=
%	4\exp(2x) +1 - \left(\frac{1-\exp(-2x)}{1+\exp(-2x)}\right)^{2}= 4\exp(2x) \frac{\exp(4x)+2\exp(2x)+2}{({1+\exp(-2x)})^{2}} >0$. 
%	
%	We know that $f(-x)>0$ on $(0, \infty)$. Hence, $f(x)>0$ on $(-\infty, 0)$, and the result holds.
\end{proof}
}
\CORPSPREUVEDE{\preuvequatredeuxquatre}

\begin{lemma}\label{MonLemma 4.2.5}
	$\left|1-\tanh x\right|\le 2 \exp(-2|x|) $ for  $x \in[0, +\infty)$.
\end{lemma}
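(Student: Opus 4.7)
My plan is to reduce this directly to Lemma~\ref{MonLemma4.2.4} by exploiting the fact that $\tanh$ is an odd function. For any $x \in [0,+\infty)$, set $y = -x \in (-\infty,0]$. Then $\tanh x = -\tanh y$, so
\[
1 - \tanh x = 1 + \tanh y,
\]
and $|x| = |y|$. Applying Lemma~\ref{MonLemma4.2.4} to $y$ gives $|1+\tanh y| \le 2\exp(-2|y|) = 2\exp(-2|x|)$, which is exactly the desired bound. This is a one-line argument and I do not expect any obstacle.

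If one prefers a self-contained direct calculation mirroring the proof of Lemma~\ref{MonLemma4.2.4}, one can proceed as follows. For $x \ge 0$, $|1-\tanh x| = 1 - \tanh x$ since $\tanh$ takes values in $[0,1)$ on $[0,+\infty)$. Using the identity $\tanh x = \frac{1-e^{-2x}}{1+e^{-2x}}$, a short computation gives
\[
1 - \tanh x \;=\; \frac{2 e^{-2x}}{1+e^{-2x}} \;\le\; 2 e^{-2x} \;=\; 2 \exp(-2|x|),
\]
since $1 + e^{-2x} \ge 1$. Either approach is entirely routine; the odd-symmetry reduction is the cleanest and most in keeping with the structure already established in the preceding lemma.
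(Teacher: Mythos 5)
Your primary argument (reduce to Lemma~\ref{MonLemma4.2.4} via the oddness of $\tanh$) is exactly the paper's proof, just spelled out in more detail, and it is correct. The alternative direct computation you include as a backup is also fine but unnecessary.
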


\newcommand\preuvequatredeuxcinq{}
{
\begin{proof}
	This follows from Lemma \ref{MonLemma4.2.4}, using the fact that $\tanh$ is odd.
\end{proof}
}
\CORPSPREUVEDE{\preuvequatredeuxcinq}
}

A first observation is that we can uniformly approximate the $\relu(x)=\max(0, x)$ function using an essentially constant function:

\begin{lemma}\label{lemmeRelu}
	Consider (see Figure \ref{fig:relu})  $$Y(x,2^{m+2})= \frac{1+\tanh(2^{m+2} x)}{2}$$
	For all integer $m$, for all $x\in \R$,
	$$| \relu(x) - xY(x, 2^{m+2})| \leq 2^{-m}.$$
\end{lemma}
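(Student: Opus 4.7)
The plan is to split into the two cases $x\ge 0$ and $x\le 0$, since $\relu$ is piecewise defined, and in each case use the corresponding approximation lemma on $\tanh$ (Lemmas~\ref{MonLemma4.2.4} and \ref{MonLemma 4.2.5}) to bound the error by a function of the form $|x|\exp(-2^{m+3}|x|)$. The crux is then a one-variable calculus bound on this function, which is a textbook maximum.

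First, for $x\ge 0$, we have $\relu(x)=x$, hence
\[
\bigl|\relu(x) - xY(x,2^{m+2})\bigr|
= x\cdot\Bigl|1 - \tfrac{1+\tanh(2^{m+2}x)}{2}\Bigr|
= \tfrac{x}{2}\bigl(1-\tanh(2^{m+2}x)\bigr).
\]
Since $2^{m+2}x\ge 0$, Lemma~\ref{MonLemma 4.2.5} gives $1-\tanh(2^{m+2}x)\le 2\exp(-2^{m+3}x)$, so the error is at most $x\exp(-2^{m+3}x)$. Symmetrically, for $x\le 0$, we have $\relu(x)=0$, and
\[
\bigl|\relu(x) - xY(x,2^{m+2})\bigr|
= |x|\cdot\tfrac{1+\tanh(2^{m+2}x)}{2}
\le |x|\exp(-2^{m+3}|x|),
\]
using Lemma~\ref{MonLemma4.2.4} on $2^{m+2}x\le 0$. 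In both cases we are left with bounding $g(u)=u\exp(-au)$ on $u\ge 0$ with $a=2^{m+3}$.

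Second, an elementary calculation shows $g$ attains its maximum at $u=1/a$, with value $1/(ae)$. Therefore
\[
|x|\exp(-2^{m+3}|x|) \;\le\; \frac{1}{2^{m+3}\,e} \;=\; \frac{2^{-m}}{8e} \;\le\; 2^{-m},
\]
since $8e>1$. Combining the two cases yields the claimed inequality. The only even mildly delicate step is remembering that $xY(x,2^{m+2})$ must be written as $|x|\cdot\tfrac{1+\tanh(2^{m+2}x)}{2}$ (and not with a sign flip) when $x<0$, so that the bound of Lemma~\ref{MonLemma4.2.4} applies directly; after that the argument is purely quantitative.
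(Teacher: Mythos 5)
Your proof is correct and follows essentially the same approach as the paper's: split on the sign of $x$, invoke Lemma~\ref{MonLemma 4.2.5} (resp.\ \ref{MonLemma4.2.4}) to reduce both cases to bounding $|x|\exp(-2^{m+3}|x|)$, and then locate the maximum $\frac{1}{2^{m+3}e}\le 2^{-m}$ by elementary calculus. The only cosmetic difference is that the paper keeps $K=2^{m+2}$ symbolic until the end, whereas you substitute $a=2^{m+3}$ from the start.
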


	\begin{figure}[h]
	\centerline{\includegraphics[height=6cm]{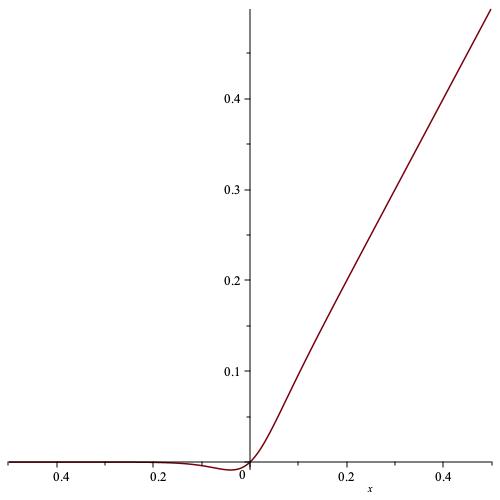}}
	\caption{Graphical representation of $xY(x,2^{2+2})$ obtained with maple.} \label{fig:relu}
	\end{figure}

\newcommand\preuvelemmeRelu{

To prove Lemma \ref{lemmeRelu}, we start by the following basic facts about function $\tanh$.

\preuvesurtanh

\begin{proof}[Proof of Lemma \ref{lemmeRelu}]
	Let $m \in \N$. Consider $Y(x,K)= \frac{1+\tanh(K x)}{2}$, with $K >0$.
	
	For $0 \le x$, $\relu(x)=x$, and $|\relu(x) - xY(x, K)|= \frac{x}{2}|1-\tanh(Kx)| \le x \exp(-2Kx)$
	from Lemma \ref{MonLemma 4.2.5}.
	
	For $x \le 0$, $\relu(x)=0$, and $|\relu(x) - xY(x, K)|= \frac{|x|}{2} |1+\tanh(K x)| \le |x| \exp(-2K|x|)$ 
	from Lemma \ref{MonLemma4.2.4}, which is the same expression as above for $0 \le x$.
	
	Function $g(x)=x \exp(-2Kx)$ has its derivative $g'(x)= \exp(-2Kx) (1-2Kx)$. We deduce the maximum
	of this function $g(x)$ over $\R$ is in $\frac{1}{2K}$, and that the maximum value of $g(x)$ is $\frac{1}{e2K}$.
		
	Consequently, if we take $K=2^{m+2}$, then $g(x) \le 2^{-m}$ for all $x$, and we conclude.
\end{proof}

%	A graphical representation of $xY(x,4)$ can be found in Figure \ref{fig:relu}.

}
\CORPSPREUVEDE{\preuvelemmeRelu}

We deduce we can uniformly approximate the continuous sigmoid functions (when $1/(b-a)$ is in $\manonclasslighttanh$) defined as: $\sig(a,b,x) = 0$ whenever $w \leq a$, $\frac{x-a}{b-a}$ whenever $a \le x \le b$, and $1$ whenever $b \leq x$. 
%\[ \sig(a,b,x) = 
%\left\{ \begin{array}{lll}
%	0 & \mbox{if $x \leq a$} \\
%	\frac{x-a}{b-a} & \mbox{if $a \le x \le b$} \\
%	1 & \mbox{if $b \leq x$}\\
%\end{array} \right.
%\]

\begin{lemma}[Uniform approximation of any piecewise continuous sigmoid] \label{lem:solution:a:tout}
Assume $a,b,\frac1{b-a}$ is in $\manonclasslighttanh$. Then there is some function (illustrated by Figure \ref{fig:pasrelu}) $\sigtanh(z,a,b,x) \in \manonclasslighttanh$ such that for all integer $m$, $$|\sigtanh(2^{m},a,b,x) - \sig(a,b,x)|\le 2^{-m}.$$
\end{lemma}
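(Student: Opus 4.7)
The plan is to approximate the piecewise linear sigmoid via the elementary identity
$$\sig(a,b,x) \;=\; \frac{\relu(x-a) - \relu(x-b)}{b-a},$$
and to replace each $\relu$ by its smooth $\tanh$-counterpart furnished by Lemma~\ref{lemmeRelu}. Because the eventual division by $b-a$ amplifies errors, the internal $\tanh$ scale must be chosen large enough to compensate.

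Concretely, I would define
$$\sigtanh(z, a, b, x) \;=\; \frac{1}{b-a}\bigl[(x-a)\,Y(x-a, K) - (x-b)\,Y(x-b, K)\bigr], \qquad K \;=\; \frac{z}{b-a}.$$
An inspection of the proof of Lemma~\ref{lemmeRelu} shows that the bound $|\relu(t) - t\,Y(t, K)| \le 1/(2eK)$ actually holds for every $K>0$ (the one-variable calculus that maximises $|t|e^{-2K|t|}$ is independent of the particular choice $K=2^{m+2}$). The triangle inequality then yields, for $z = 2^m$,
$$\bigl|\sigtanh(2^m, a, b, x) - \sig(a,b,x)\bigr| \;\le\; \frac{1}{b-a}\cdot\frac{2}{2eK} \;=\; \frac{1}{e\,z} \;<\; 2^{-m},$$
which is the desired uniform approximation.

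It then remains to verify that $\sigtanh$ lies in $\manonclasslighttanh$. The hypothesis supplies $a$, $b$ and $1/(b-a)$; subtraction yields $x-a$ and $x-b$; composition with the basic $\tanh$ together with the divisions by $2$ produces $Y(\cdot, K) = (1+\tanh(K\cdot))/2$; and the outer factors are obtained by further subtractions, products, and a last multiplication by $1/(b-a)$. The one delicate point is that both the scale $K = z\cdot \tfrac{1}{b-a}$ and the products $(x-a)\,Y(x-a,K)$ (as well as the outer division by $b-a$) are products of two generic class members, while $\times$ is \emph{not} a basic operation of $\manonclasslighttanh$. The main obstacle is therefore to invoke the derivability of a genuine real multiplication inside $\manonclasslighttanh$ from $\plus$, $\minus$, $\tanh$, $x/2$, $x/3$ and the linear length-ODE scheme; this is one of the core technical constructions the paper needs to develop (so that the class can dispense with explicit $\signname$ and $\times$), and once it is in hand every intermediate quantity above lies in $\manonclasslighttanh$ and the lemma follows from the quantitative bound just established.
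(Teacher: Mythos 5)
Your proposal takes essentially the same route as the paper's proof: write $\sig(a,b,x) = \frac{\relu(x-a)-\relu(x-b)}{b-a}$, replace each $\relu$ by the $\tanh$-approximation $tY(t,K)$ from Lemma~\ref{lemmeRelu}, and choose the internal scale $K$ large enough to absorb the factor $\frac{1}{b-a}$ after the triangle inequality. The only substantive difference is the choice of scale: the paper sets $K = z\,2^{1+c}$ for a fixed integer $c$ with $\frac{1}{b-a}\le 2^{c}$, so that the rescaling is multiplication by a constant power of $2$ (iterated doubling), whereas you take $K = z/(b-a)$ directly. Both yield the desired bound, and your one-variable-calculus estimate $|\relu(t)-tY(t,K)|\le \frac{1}{2eK}$ valid for all $K>0$ is correctly extracted from the paper's proof of Lemma~\ref{lemmeRelu}; in fact your final inequality $\frac{1}{ez}<2^{-m}$ is a touch cleaner than the paper's displayed constants.

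Your closing caveat is legitimate and sharper than what the paper offers here: the one-line proof in the paper also writes products like $(x-a)Y(x-a,K)$ and $K(x-a)$ without a word on how they can live in $\manonclasslighttanh$, which lacks $\times$ among its basic functions. You are right that this is exactly the kind of obstacle the paper later works hard to sidestep (cf.\ the discussion around $\TTtanh$ and $\EncodeMul$, introduced precisely to avoid general multiplications). So the ``delicate point'' you flag is a real gap in the paper's presentation of this lemma, not a misunderstanding on your part. What you could add to tighten your own argument: multiplication by a fixed \emph{power of two} (hence the paper's $2^{1+c}$ choice) reduces to iterated doubling inside a linear length-ODE, which is unambiguously available; products of two \emph{generic} reals (such as $(x-a)$ with $Y(\cdot,\cdot)$) are the genuinely problematic ones. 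Aligning your $K$ with the paper's power-of-two normalisation removes one of the two troublesome products from your construction, leaving only the one the paper also leaves implicit.
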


	\begin{figure}[h]
	\hfill \includegraphics[width=6cm]{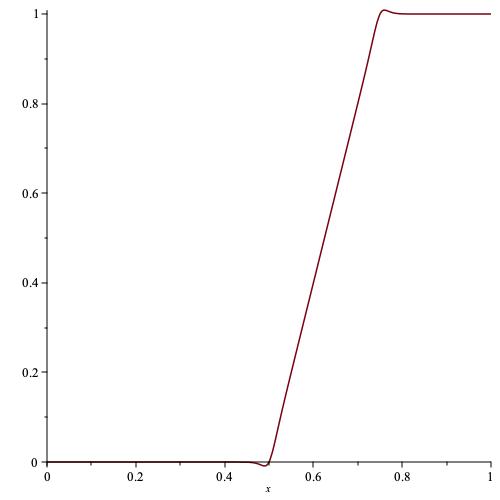} \hfill \includegraphics[width=6cm]{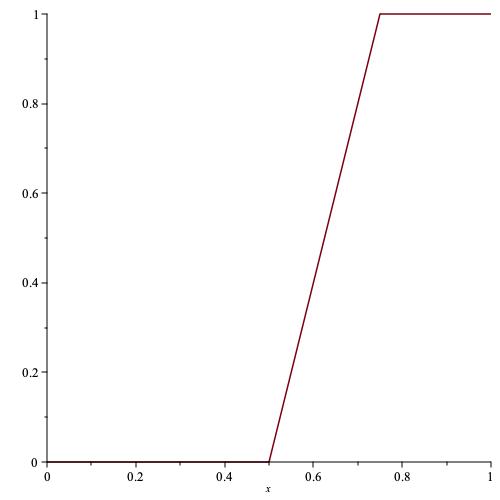} \hfill 
	\caption{Graphical representation of $\sigtanh(2,\frac12,\frac34,x)$ and $\sigtanh(2^{5},\frac12,\frac34,x)$ obtained with maple.} \label{fig:pasrelu}
	\end{figure}

\begin{proof}%[\textcolor{red}{Preuve alternative}]
	We can write $\sig(a,b,x) = \frac{\relu(x-a)-\relu(x-b)}{b-a}$.  Consider $\sigtanh(z, a, b, x)=  \frac{(x-a) Y(x-a, z2^{1+c}) - (x-b) Y(x-b, z2^{1+c}) }{b-a}$. Thus, 
	$| \sigtanh(2^{m+1+c}, a, b, x) - \sig(a,b,x)| \leq 
	\frac{2. 2^{-m-1-c}}{b-a}$, using the triangle inequality. Take  $c$ such that $\frac1{b-a} \le 2^{c}$.
\end{proof}

The existence of the following function will play an important role to obtain some various other functions.

\begin{theorem} \label{th:xifonda}
 There exists some function (illustrated by  Figure \ref{fig:xideuxun}) $\xi: \N^{2}\to \R$ in $\manonclasslighttanh$ such that for all $n,m\in \N$ and $x\in [- 2^{n} , 2^{n}]$, whenever $ x \in [\lfloor x \rfloor + \frac{1}{8}, \lfloor x \rfloor + \frac{7}{8}],$ $$\left|\xi(2^m,2^{n},x)  - \left\{ x -\frac18 \right\}\right| \le 2^{-m}.$$
 \end{theorem}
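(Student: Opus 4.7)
The plan is to compute $\lfloor x\rfloor$ by an $(n{+}1)$-step bit-stripping recursion and then return $x-\tfrac18-\lfloor x\rfloor$; on the safe zone one has $\{x\}\in[\tfrac18,\tfrac78]$, so $\{x\}-\tfrac18=\{x-\tfrac18\}$. Shifting to make the input positive, set $\alpha_0:=x+2^n\in[0,2^{n+1}]$ and write $\lfloor\alpha_0\rfloor=\sum_{j=0}^{n}b_j 2^j$. For $t=0,1,\dots,n$, iterate $\alpha_{t+1}=\alpha_t-\beta_t[\alpha_t\ge\beta_t]$ with $\beta_t=2^{n-t}$; an easy induction gives $\alpha_t=\sum_{j=0}^{n-t}b_j 2^j+\{x\}$, hence $\alpha_{n+1}=\{x\}$.

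The crucial geometric fact is that the safe-zone assumption propagates: at every step $|\alpha_t-\beta_t|\ge\tfrac18$, since $b_{n-t}=1$ forces $\alpha_t\ge 2^{n-t}+\{x\}\ge 2^{n-t}+\tfrac18$, while $b_{n-t}=0$ forces $\alpha_t\le 2^{n-t}-1+\{x\}\le 2^{n-t}-\tfrac18$. This $\tfrac18$ margin allows me to replace the indicator by the smooth sigmoid $\tilde B(\alpha,\beta,K):=\sigtanh(K,\beta-\tfrac{1}{16},\beta+\tfrac{1}{16},\alpha)$, which by Lemma~\ref{lem:solution:a:tout} approximates $[\alpha\ge\beta]$ to within $1/K$ whenever $|\alpha-\beta|\ge\tfrac1{16}$. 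Writing $\tilde\alpha_{t+1}=\tilde\alpha_t-\beta_t\tilde B(\tilde\alpha_t,\beta_t,K)$, telescoping the one-step error yields $|\tilde\alpha_t-\alpha_t|\le 2^{n+1}/K$. Taking $K=2^{m+n+5}$ keeps the running error below $\tfrac1{16}$ throughout (so the inductive separation argument survives) and gives $|\tilde\alpha_{n+1}-\{x\}|\le 2^{-m}$.

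To place this inside $\manonclasslighttanh$, I package the recursion as a linear length ODE whose state is the single scalar $\tilde\alpha$, initialised to $x+2^n$. The auxiliary $\beta_t=2^{n-t}$ is supplied not as a second state component but through the $\tu h$-channel of the scheme \eqref{ourform}, via the trivial length ODE $G(2^0,2^n)=2^n$, $G(2^{t+1},2^n)=G(2^t,2^n)/2$, which is essentially linear in $G$ with coefficient $\tfrac12$ and therefore lives in $\manonclasslighttanh$. The precision $K=2^{m+n+5}$ is a constant multiple of the product of the given inputs $2^m$ and $2^n$. The update for $\tilde\alpha$ is essentially linear in $\tilde\alpha$ because every nonlinearity in $\tilde B$ sits inside $\tanh$, contributing degree zero. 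Choosing the first argument of the length ODE equal to $2^{n+1}$ provides the required $n+1$ iterations, and a final subtraction of $\tfrac18$ yields $\xi(2^m,2^n,x)$.

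The real obstacle is precisely this essential-linearity constraint. The most naive implementation would make both $\alpha$ and $\beta$ state variables, but then the update $\beta\cdot\tilde B(\alpha,\beta)$ contains a $\beta^2$ contribution and violates the linearity requirement. Routing $\beta$ through the $\tu h$-channel dissolves the issue, since $\tu h$ enters $\bar u$ as an arbitrary parameter, free of any degree restriction.
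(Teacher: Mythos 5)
Your proof is correct and follows the same core strategy as the paper's: peel off the bits of $\lfloor x\rfloor$ from most significant to least, using a soft threshold $\sigtanh$ whose transition window sits inside the $\tfrac18$-wide dead zone guaranteed by the hypothesis $\{x\}\in[\tfrac18,\tfrac78]$, and telescope the per-step sigmoid error into a final $2^{-m}$ bound. Where you diverge is in the bookkeeping. The paper builds a one-sided function $\xi'$ for $x\ge 0$ and then handles negative inputs by mirroring, $\xi(x)=\xi'(x)-\xi'(-x)+\tfrac34-\tfrac34\sigtanh(\cdot,0,\tfrac18,x)$; your additive shift $\alpha_0=x+2^n$ makes that entire correction disappear, at the cost of one extra bit to strip. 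The paper also lets the sigmoid sharpness grow with the step index ($2^{m+t}$), whereas you fix $K=2^{m+n+5}$ once and sum a geometric error series; both yield the needed bound, but the fixed-$K$ version is cleaner to package as a single linear length ODE. Finally, you spell out why the update $\tilde\alpha-\beta_t\,\tilde B(\tilde\alpha,\beta_t,K)$ respects essential linearity in the state even though it is quadratic in $\beta_t$ — routing $\beta_t=2^{n-t}$ through the $\tu h$-channel — which the paper does implicitly when it introduces its auxiliary function $H$ with the $2^{n-1-t}$ coefficient but does not comment on. One small wording caveat: $2^{m+n+5}$ is not available as a ``constant multiple of the product'' since $\manonclasslighttanh$ lacks $\times$; you need to build $2^{m+n}$ yourself by a (trivial) linear length ODE doubling $2^n$ for $m$ steps, which is routine but worth stating.
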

 
 \begin{figure}[h]
	\hfill \includegraphics[width=6cm]{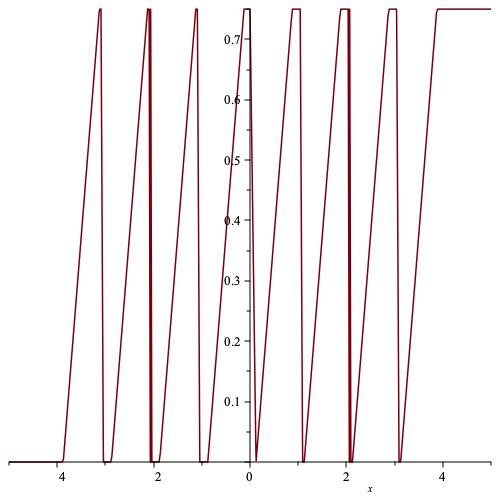} 
	\hfill
	\includegraphics[width=6cm]{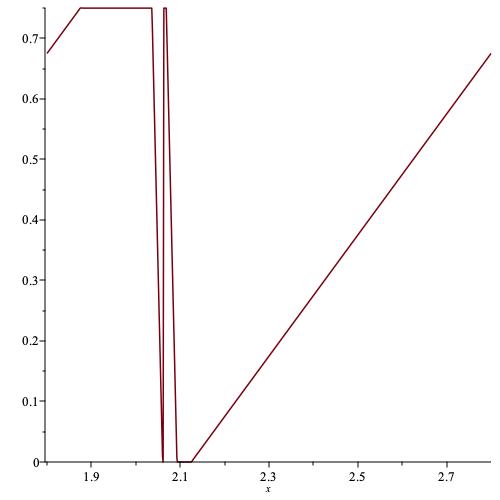}
	\hfill 
	\caption{Graphical representations of $\xi(2,4,x)$  obtained with maple: some details on the right. \label{fig:xideuxun}}
\end{figure}

%
%The main idea of the proof is, by parity, to reduce to construct an auxiliary function $\xi'$ that works for $x \ge 0$, writing  $\xi(2^m, N,x)=\xi'(2^{m+2},N,x)-\xi'(2^{m+2},N,-x)+\frac34-\frac34 \sigtanh(2^{m+2},0,\frac18,x)$, and then  proving that $\xi'$ is definable in $\manonclasslighttanh$, using an adhoc   recursive (in $n$) definition of it.
%
%With more details:

\newcommand\PREUVEXI{
\begin{proof}
	If we take $\xi'$ that satisfies the constraint only when $x \ge 0$, and that values $0$ for $x \le 0$, then $\frac34-\xi'(\cdot,\cdot,-x)$ would satisfy the constraint when $x \le 0$, but values $3/4$ for $x \ge 0$. So,  $$\xi(2^m, N,x)=\xi'(2^{m+2},N,x)-\xi'(2^{m+2},N,-x)+\frac34-\frac34 \sigtanh(2^{m+2},0,\frac18,x)$$ would work for all $x$. 
	So it remains to construct $\xi'$ such that for all $n\in \N$, $x\in [0 , 
	2^{n}]$ and $m\in\N$, whenever $ x \in [\lfloor x \rfloor + \frac{1}{8}, 
	\lfloor x \rfloor + \frac{7}{8}] $  , $|\xi'(2^{m},2^n,x)-\{ 
	x-\frac18\}| \le 2^{-m}$, and $|\xi'(2^{m},N,x)-0] \le 2^{-m}$ for $x \le 0$.
	
	Let $s(x)=\frac34\sig(\lfloor x \rfloor +\frac18,\lfloor x \rfloor +\frac78,x)$. Over $[\lfloor x \rfloor + \frac{1}{8}, 
	\lfloor x \rfloor + \frac{7}{8}]$, we have $s(x)=\{ 
	x-\frac18\}$. Actually, we will even construct $\xi'$ with the stronger properties that   whenever $ x \in [\lfloor x \rfloor + \frac{1}{8} -2^{-m}, 
	\lfloor x \rfloor + \frac{7}{8}+2^{-m}]$, $|\xi'(2^{m},2^n,x)-s(
	x-\frac18)| \le 2^{-m}$.
	
	It suffices to define $\xi'$  by induction by $$
	\left\{ \begin{array}{lll} \xi'(2^{m},2^{0},x) 
	&=& \frac34 	\sigtanh(2^{m},\frac18,\frac78,x) \\ 
	\xi'(2^{m},2^{n+1},x)&=& \xi'(2^{m+1},2^{n},F(2^{m+1},2^{n},x))
	\end{array}\right.$$
	where $$F(2^{m+1},K,x)= x- K.\sigtanh(2^{m+1},K+\frac1{32},K+\frac3{32},x).$$
	
	Let $I_{\lfloor x \rfloor}$ be $[\lfloor x \rfloor + \frac{1}{8}, \lfloor x \rfloor + \frac{7}{8}] $,   $x \in  I_{\lfloor x \rfloor}$, and let us first study the value of $F(2^{m+1},2^{n},x)$: 
	
	\begin{itemize}
		\item If $x \leq 2^{n}$, by definition of $\sigtanh$, $|F(2^{m+1},2^n,x) - x| \le 2^{-(m+1)}$, with $x \in  I_{\lfloor x \rfloor}$.
		\shortericalp{	\item The case $2^{n} < x < 2^{n} + \frac18$ cannot happen as we assume $x \in  I_{\lfloor x \rfloor}$.}
		\item If $2^{n} + \frac18  \leq x $ then $|F(2^{m+1},2^n, x) - (x-2^n)| \le 2^{-(m+1)}$ with $x-2^n \in  I_{ \lfloor x \rfloor-2^{n}} $.	\end{itemize}
	
	Now, the property is true by induction. Indeed, it is true for $n=0$ by definition of $\xi'(2^{m},2^{0},x)$. We now assume it is true for some $n\in \N$. We have
	$\xi'(2^{m},2^{n+1},x)=\xi'(2^{m+1},2^{n},F(2^{m+1},2^{n},x))$. Thus, by induction hypothesis, $|\xi'(2^{m+1},2^{n},F(2^{m+1},2^{n},x))-s( F(2^{m+1},2^{n},x) -1/8)| \le 2^{-(m+1)}$.
	Now:
	\begin{itemize}
		\item If $x \leq 2^{n}$, by definition of $\sigtanh$, $|F(2^{m+1},2^n,x) - x| \le 2^{-(m+1)}$, and as $s$ is $1$-Lipschitz,
		$|s( F(2^{m+1},2^{n},x) -\frac18) -s(x-\frac18) | \le |F(2^{m+1},2^{n},x) - x | \le 2^{-(m+1)}$. Consequently,
		$|\xi'(2^{m},2^{n+1},x)-s(
		x-\frac18)| \le 2^{-m}$ and the property holds for $n+1$.
		
		\shortericalp{	\item The case $2^{n} < x < 2^{n} + \frac18$ cannot happen with our constraint $x \in  I_{\lfloor x \rfloor}$.}
		%			\item If $2^n - \frac{3}{4} < x < 2^n - \frac{1}{2}$, we have that $\lfloor x \rfloor = k-1$. In this case, we can rewrite $I_1$ by $[k-\frac{3}{2}, k-\frac{3}{4}]$, which contradicts the hypothesis $x \in I_{1}$, so this case never happens.
		\item If $2^{n} + \frac18  \leq x $ then $|F(2^{m+1},2^n, x) -(x-2^n)| \le 2^{-(m+1)} $  and as $s$ is $1$-Lipschitz,
		$|s( F(2^{m+1},2^{n},x) -\frac18) -s(x-2^n- \frac18) | \le |F(2^{m+1},2^{n},x) - x-2^n | \le 2^{-(m+1)}$. Consequently,
		$|\xi'(2^{m},2^{n+1},x)-s(
		x-2^n-\frac18)| \le 2^{-m}$ and the property holds for $n+1$.
	\end{itemize}
	
	There remains to prove that the function $\xi'$ is in $\manonclasslighttanh$. Unfortunately, this is not clear from the recursive definition, but this can be written in another way, from which this follows. Indeed, we have from an easy induction that  $\xi'(2^{m},2^{n},x)= F(2^{m+n-1},2^{0},F(2^{m+n-2},2^{1},F(2^{m+n-3},2^{2}$ $(\dots, F(2^{m},2^{n-1},x))))),$ if we define $$F(2^{m},2^{0},x)=\xi'(2^{m},2^{0},x) = \frac34 \sigtanh(2^{m},\frac18,\frac78,x).$$
	
	Then, we can obtain $\xi'(2^{m},2^{n},x)=H(2^{m},2^{n-1},2^{n},x)$ with 
	\begin{align*}
		H(2^{m},2^{0}, 2^n, x) &= F(2^{m},2^{n-1}, x) \\
		H(2^{m},2^{t+1}, 2^n, x) &= F(2^{m+t},2^{n-1-t}, H(2^{m},2^{t}, 2^n, x)) \\
		&= H(2^{m},2^{t}, 2^n, x) - 2^{n-1-t}.\sigtanh(2^{m+t},2^{n-1-t},2^{n-1-t}\\
		& \quad +\frac18, H(2^{m},2^{t}, 2^n, x))
	\end{align*}
	
	Such a recurrence can be then seen as a linear length ordinary differential equation, in the length of its first argument. It follows that $\xi'$ is in 
	$\manonclasslighttanh$. 
\end{proof}
}

\CORPSPREUVEDE{\PREUVEXI}

\MFCSshorter{
From the construction of the previous  functions, we obtain a bestiary of various functions
\olivierplusimportant{vérifier effets de bords}}

\newcommand\statementun{$\xi_1, \xi_2 : \N^2 \times \R \mapsto \R ~  \in \manonclasslighttanh$  
	%piecewise continuous 
	such that, for all $n,m \in \N$,  $\lfloor x \rfloor \in [- 2^{n}+1, 2^{n}]$, 
	\begin{itemize}
	\item 
	 whenever $ x \in [\lfloor x \rfloor - \frac{1}{2}, \lfloor x \rfloor + \frac{1}{4}], $  $$|\xi_1(2^m, 2^n,x) - \{ x \}| \le 2^{-m},$$
	 \item 
	 and whenever $ x \in [\lfloor x \rfloor, \lfloor x \rfloor + \frac{3}{4}],$  $$|\xi_2(2^m, 2^n,x)-\{ x \}| \le 2^{-m}.$$
	 \end{itemize}
	 }

\newcommand\statementdeux{
$\sigma_1, \sigma_2 : \N^2 \times \R \mapsto \R ~  \in \manonclasslighttanh$  
	such that, for all $n,m \in \N$,  $\lfloor x \rfloor \in [-2^{n}+1 , 2^{n}]$,
	\begin{itemize}
	\item 
	 whenever  $ x \in [\lfloor x \rfloor - \frac{1}{2}, \lfloor x \rfloor + \frac{1}{4}]$, $$|\sigma_1(2^m, 2^n,x)-\lfloor x \rfloor|\le 2^{-m},$$
	 \item
	 and whenever  $ x \in I_{2}=[\lfloor x \rfloor, \lfloor x \rfloor + \frac{3}{4}]$, $$|\sigma_2(2^m, 2^n,x)-\lfloor x \rfloor|\le 2^{-m}.$$
	 \end{itemize}
}

\newcommand\statementtrois{
$\lambda : \N^2 \times\R \mapsto [0,1]   \in \manonclasslighttanh$  
	%piecewise continuous 
 such that for all $m,n\in\N$,  $\lfloor x \rfloor \in [-2^{n}+1 , 2^{n}]$, 
 \begin{itemize}
 \item
 whenever  $ x \in [\lfloor x \rfloor + \frac{1}{4}, \lfloor x \rfloor + \frac{1}{2}] $, $$|\lambda(2^m, 2^n,x)-0| \le 2^{-m},$$
 \item 
	 and whenever $  x \in [\lfloor x \rfloor + \frac{3}{4}, \lfloor x \rfloor +1] $, $$|\lambda(2^m, 2^n,x)-1|\le 2^{-m}.$$
	 \end{itemize}
}
%	\olivier{Mis ailleurs ce texte:
%	In particular, whenever we know from the above constraints 
%	\begin{itemize}
%	\item that $\lambda(2^n,x) = 0$, we know that $\sigma_{2}(2^n,x) = lfloor x \rfloor$, 
%	\item or that $\lambda(2^n,x) = 1$ we know that $\sigma_{1}(2^n,x) = lfloor x \rfloor$, 
%	\item or that  $\lambda(2^n,x) \in (0,1)$, we know that $\sigma_1(2^n,x) = \lfloor x \rfloor +1 $ and $\sigma_2(2^n,x) = \lfloor x \rfloor$.
%	\end{itemize}}

\newcommand\statementquatre{
$\mod_{2} : \N^2 \times\R \mapsto [0,1]   \in \manonclasslighttanh$  
	%piecewise continuous 
such that for all $m,n\in\N$,  $\lfloor x \rfloor \in [-2^{n}+1 , 2^{n}]$, 
	whenever  $ x \in [\lfloor x \rfloor -\frac14, \lfloor x \rfloor
	 + \frac{1}{4}] $, $$|\mod_{2}(2^m, 2^n, x)-\lfloor x \rfloor \mod 2| \le 2^{-m}.$$
}

\newcommand\statementcinq{$\div_{2} : \N^2 \times\R \mapsto [0,1]  \in \manonclasslighttanh$  
	%piecewise continuous 
	 such that for all $m,n\in\N$,  $\lfloor x \rfloor \in [-2^{n}+1 , 2^{n}]$, 
	 whenever  $ x \in [\lfloor x \rfloor - \frac14, \lfloor x \rfloor +
	  \frac{1}{4}] $, $$|\div_{2}(2^m, 2^n, x)-\lfloor x \rfloor / / 2| \le 2^{-m},$$ where  $/ /$ is  the integer division.}

%\olivier{Attention, du coup, tout cassé les images.}

%
%\begin{corollary}[A bestiary of  functions] \label{corobestiary}
%	There exist
%	\begin{enumerate}
%	\item \statementun
%	\item \statementdeux
%	\item \statementtrois
%	\item \statementquatre
%	\item \statementcinq
%	\end{enumerate}
%\end{corollary}
%
%\begin{proof}
%Take  $\xi_{1}(M,N,x)=\xi(M,N,x-\frac38) -\frac12$,  $\xi_{2}(M,N,x)=\xi(N,x-\frac78)$, $\sigma_i (M, N,x) = x - \xi_i(M, N,x)$,
%$\lambda(M, N,x)=\sigtanh(2M,1/4,1/2,\xi(2M, N,x-9/8))$, $\mod_{2}(M, N,x)=1-\lambda(M, N/2,\frac12x+\frac78)$,
%$\div{2}(M,N,x)=\frac12(\sigma_{1}(M,N,x)-\mod_{2}(M, N,x))$.
%\end{proof}
%

\newcommand\preuvebestiary{

%For readability, we split the proof of Corollary \ref{corobestiary} into several Corollaries.

\begin{corollary}\label{lem:xi}
	There exists (see Figure \ref{fig:xi}) \statementun
\end{corollary}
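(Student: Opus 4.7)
The plan is to derive both $\xi_1$ and $\xi_2$ from Theorem~\ref{th:xifonda} by an affine shift of the first argument. Recall that $\xi(2^m,2^n,x)$ approximates the continuous ``sawtooth'' $x \mapsto x - \tfrac{1}{8} - k$, with $k = \lfloor x \rfloor$, whenever $x \in [k + \tfrac{1}{8}, k + \tfrac{7}{8}]$. The key observation is that to approximate $\{x\} = x - k$ on an interval around an integer $k$, it suffices to translate the argument by a constant $c$ so that $x + c$ lies in $[k + \tfrac{1}{8}, k + \tfrac{7}{8}]$ (keeping standard floor equal to $k$), then correct for the induced constant offset.

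For $\xi_2$ I would set $\xi_2(2^m,2^n,x) = \xi(2^m,2^{n+1},x + \tfrac{1}{8})$. If $x \in [k, k + \tfrac{3}{4}]$, then $x + \tfrac{1}{8} \in [k + \tfrac{1}{8}, k + \tfrac{7}{8}]$ and its standard floor is still $k$, so Theorem~\ref{th:xifonda} delivers an approximation of $(x + \tfrac{1}{8}) - \tfrac{1}{8} - k = x - k = \{x\}$ to precision $2^{-m}$. For $\xi_1$ I would set $\xi_1(2^m,2^n,x) = \xi(2^m,2^{n+1},x + \tfrac{5}{8}) - \tfrac{1}{2}$. If $x \in [k - \tfrac{1}{2}, k + \tfrac{1}{4}]$, the shifted argument $x + \tfrac{5}{8}$ again lands in $[k + \tfrac{1}{8}, k + \tfrac{7}{8}]$ with floor $k$, so Theorem~\ref{th:xifonda} approximates $(x - k) + \tfrac{1}{2}$; subtracting $\tfrac{1}{2}$ yields the intended $\{x\} = x - k$ (the signed offset from the reference integer $k = \lfloor x \rfloor$ of the statement, which is the only reading compatible with continuity across $x = k$). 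Bumping the second argument from $2^n$ to $2^{n+1} = 2^n + 2^n$ absorbs the shift so that $x + c$ remains within the validity range $[-2^{n+1}, 2^{n+1}]$ required by Theorem~\ref{th:xifonda}.

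Membership in $\manonclasslighttanh$ is routine: the constants $\tfrac{1}{2}$, $\tfrac{1}{8}$, $\tfrac{5}{8}$ are built from $\mathbf{1}$ using $+$ and iterated $\tfrac{x}{2}$; the doubling $2^n + 2^n$ is a single addition; and the composition with $\xi$ is a permitted scheme. No substantive obstacle remains: all the non-trivial work, namely the recursive halving of intervals and the control of the accumulated error, is already packaged inside Theorem~\ref{th:xifonda}, and this corollary merely re-expresses it via two simple translations.
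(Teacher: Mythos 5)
Your proof is correct and follows essentially the same route as the paper's: both derive $\xi_1,\xi_2$ from Theorem~\ref{th:xifonda} by an affine shift of the third argument plus, for $\xi_1$, subtracting the induced constant offset of $\tfrac12$. The only difference is cosmetic: the paper shifts by $-\tfrac78$ and $-\tfrac38$ (staying inside $[-2^n,2^n]$, so no change to the second argument is needed), whereas you shift by $+\tfrac18$ and $+\tfrac58$ — the same shifts modulo $1$ — which forces the harmless bump from $2^n$ to $2^{n+1}$ that you correctly supply.
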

\begin{figure}
	\hfill 
	\includegraphics[width=6cm]{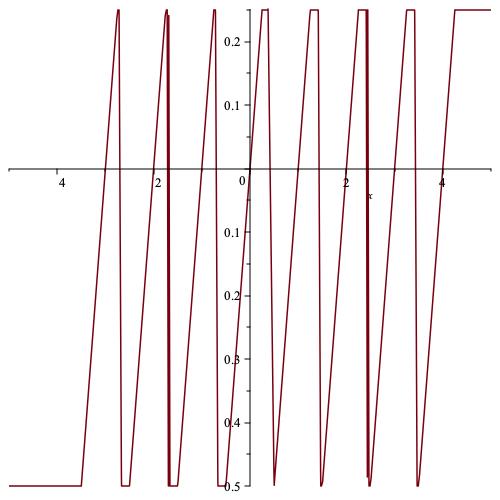} \hfill 
	\includegraphics[width=6cm]{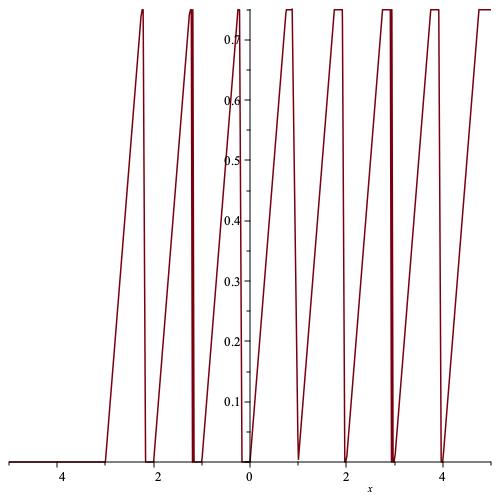}
	\hfill 
	\caption{Graphical representation of $\xi_{1}(2,4,x)$ and $\xi_{2}(2,4,x)$ obtained with maple. \label{fig:xi}}
\end{figure}

\begin{proof}
	Consider $$\xi_{1}(2^m, N,x)=\xi(2^m, N,x-\frac38) -\frac12$$ 
	and $$\xi_{2}(2^m, N,x)=\xi(2^m, N,x-\frac78).$$
\end{proof}

\begin{corollary}\label{lem:i}
There exists (see Figure \ref{fig:sigma}) \statementdeux 
\end{corollary}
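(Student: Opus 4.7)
The plan is to obtain $\sigma_1$ and $\sigma_2$ directly from the fractional-part approximations $\xi_1$ and $\xi_2$ already constructed in Corollary~\ref{lem:xi}, by exploiting the identity $\lfloor x \rfloor = x - \{x\}$. Concretely, I would set
$$\sigma_1(2^m, 2^n, x) = x - \xi_1(2^m, 2^n, x), \qquad \sigma_2(2^m, 2^n, x) = x - \xi_2(2^m, 2^n, x).$$

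The verification of the error bounds is then immediate from Corollary~\ref{lem:xi}: on the interval $[\lfloor x\rfloor - \tfrac12, \lfloor x\rfloor + \tfrac14]$ (with $\lfloor x\rfloor \in [-2^n+1, 2^n]$) we have $|\xi_1(2^m, 2^n, x) - \{x\}| \le 2^{-m}$, so
$$|\sigma_1(2^m, 2^n, x) - \lfloor x\rfloor| = |x - \xi_1(2^m, 2^n, x) - (x - \{x\})| = |\{x\} - \xi_1(2^m, 2^n, x)| \le 2^{-m},$$
and the analogous computation using $\xi_2$ yields the bound for $\sigma_2$ on $[\lfloor x\rfloor, \lfloor x\rfloor + \tfrac34]$.

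Finally, membership in $\manonclasslighttanh$ follows because $\xi_1, \xi_2 \in \manonclasslighttanh$ by Corollary~\ref{lem:xi}, the identity function (a projection) is a basic function of the class, and subtraction $\minus$ is one of the allowed basic operations. Thus $\sigma_1$ and $\sigma_2$ are obtained by a single composition from functions already in $\manonclasslighttanh$, which is closed under composition.

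I do not anticipate any real obstacle here: the construction is essentially a one-line combination, and both the error control and the class membership are inherited from $\xi_1, \xi_2$ without any additional scheme (no new length-ODE is required). The only point worth double-checking is that subtracting $\xi_i$ from the real argument $x$ preserves the interval conditions stated for $\lfloor x \rfloor$, which it does because the hypotheses on $x$ are stated identically in the two corollaries.
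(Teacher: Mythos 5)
Your construction $\sigma_i(2^m,2^n,x) = x - \xi_i(2^m,2^n,x)$ is exactly the paper's proof, and your verification of the error bounds and class membership is correct. The paper's own proof is even terser (it just states the subtraction), so you have if anything spelled out the same one-line argument in more detail.
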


\newcommand{\preuvelemi}{}
{
\begin{proof}
	Consider $\sigma_i (2^n,x) = x - \xi_i(2^n,x) $ with the function defined in Corollary \ref{lem:xi}.
	%	A graphical representation of $\sigma_{1}$ and $\sigma_{2}$ can be found in Figure \ref{fig:sigma}.
\end{proof}

\begin{figure}
	\hfill 
	\includegraphics[width=6cm]{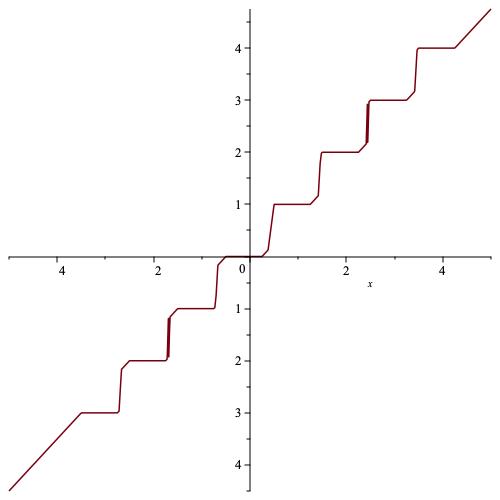} \hfill 
	\includegraphics[width=6cm]{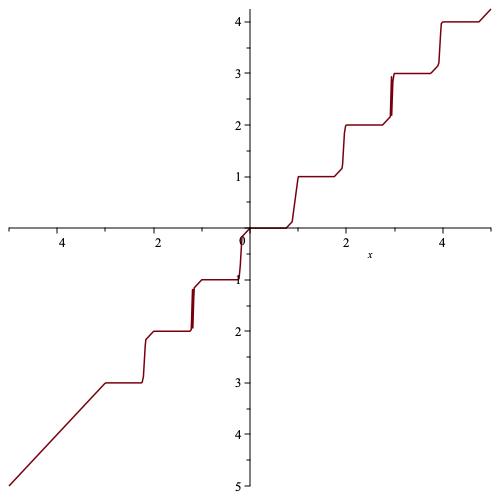}
	\hfill 
	\caption{Graphical representation of $\sigma_{1}(2,4,x)$ and $\sigma_{2}(2,4,x)$ obtained with maple. \label{fig:sigma}}
\end{figure}
}
\CORPSPREUVEDE{\preuvelemi}

\begin{corollary}\label{lem:lambda}
	There exists (see Figure \ref{fig:lambda})  \statementtrois
\end{corollary}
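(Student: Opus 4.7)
The plan is to obtain $\lambda$ by first sending the two target regions $[\lfloor x\rfloor + 1/4, \lfloor x\rfloor + 1/2]$ and $[\lfloor x\rfloor + 3/4, \lfloor x\rfloor + 1]$ to two disjoint plateaus of $[0,1]$ using a shifted copy of $\xi$ from Theorem~\ref{th:xifonda}, and then collapsing these plateaus to $0$ and $1$ by one application of $\sigtanh$ from Lemma~\ref{lem:solution:a:tout}.

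First I would define $g(2^{m'}, 2^n, x) = \xi(2^{m'}, 2\cdot 2^n, x - \tfrac18)$. The $1/8$ shift is tuned so that the two target regions land inside the safe window of $\xi$: for $x\in [k+1/4, k+1/2]$ one has $x-1/8 \in [k+1/8, k+3/8]$, and for $x\in[k+3/4, k+1]$ one has $x-1/8 \in [k+5/8, k+7/8]$, both contained in $[k+1/8, k+7/8]$. The doubling of the range parameter ($2\cdot 2^n$ in place of $2^n$) is a minor bookkeeping step to keep $x - 1/8$ inside the validity interval $[-2^{n'}, 2^{n'}]$ required by Theorem~\ref{th:xifonda}. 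By that theorem, $g(2^{m'}, 2^n, x) \approx \{x - 1/4\}$ with error at most $2^{-m'}$, so $g$ maps the first region into a $2^{-m'}$-neighborhood of $[0, 1/4]$ and the second into a $2^{-m'}$-neighborhood of $[1/2, 3/4]$.

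Since $1/4$, $1/2$ and $4 = 1/(1/2 - 1/4)$ all belong to $\manonclasslighttanh$ (built from $\mathbf{1}$, $x/2$, $\plus$, $\minus$), Lemma~\ref{lem:solution:a:tout} provides $\sigtanh(\cdot, 1/4, 1/2, \cdot)$, which approximates $\sig(1/4, 1/2, \cdot)$: identically $0$ on $[0, 1/4]$ and identically $1$ on $[1/2, 3/4]$. I would then set
\[
\lambda(2^m, 2^n, x) \;=\; \sigtanh\!\bigl(2^{m+1},\, \tfrac14,\, \tfrac12,\, g(2^{m+3}, 2^n, x)\bigr),
\]
and bound the error by the triangle inequality: the $\sigtanh$ approximation contributes at most $2^{-(m+1)}$, while the $2^{-(m+3)}$ input error of $g$ passes through the $4$-Lipschitz map $\sig(1/4, 1/2, \cdot)$ to contribute another $4 \cdot 2^{-(m+3)} = 2^{-(m+1)}$, summing to $2^{-m}$. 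Closure of $\manonclasslighttanh$ under composition then gives $\lambda \in \manonclasslighttanh$.

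The step I would expect to be the main obstacle is cosmetic rather than deep: keeping the range of $\lambda$ strictly inside $[0,1]$ as demanded by the stated type. The construction above yields values only in an $O(2^{-m})$-neighborhood of $[0,1]$. If strict containment is required, an equivalent alternative is
\[
\lambda(2^m, 2^n, x) \;=\; \tfrac{1 + \tanh\!\bigl(2^{m+3}\,(g(2^{m+3}, 2^n, x) - \tfrac38)\bigr)}{2},
\]
which visibly lies in $(0,1)$; here $3/8$ is the midpoint of the gap between the two plateau-images of $g$, and Lemmas~\ref{MonLemma4.2.4} and~\ref{MonLemma 4.2.5} give the desired $2^{-m}$ bounds since on each target region $g - 3/8$ is at distance at least $1/8$ from $0$ on the correct side, so that the argument of $\tanh$ has magnitude at least $2^{m}$ and the exponential tail estimate yields an error $\le 2\exp(-2^{m+1}) \le 2^{-m}$.
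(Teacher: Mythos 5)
Your proof is correct and follows essentially the same route as the paper: shift the argument so that the two target regions land in the safe window of $\xi$, apply $\xi$ to map them into $[0,1/4]$ and $[1/2,3/4]$, and collapse these plateaus with $\sigtanh(\cdot,1/4,1/2,\cdot)$. The paper's own proof uses the shift $x-9/8$ (which keeps the shifted argument inside $[-2^n,2^n]$ directly) where you use $x-1/8$ together with the doubled range parameter $2^{n+1}$, but since $9/8-1/8=1$ both yield the same fractional part $\{x-1/4\}$ and the same target intervals; your choice of precision $2^{m+3}$ for $\xi$ is if anything more careful about the $4$-Lipschitz constant of $\sig(1/4,1/2,\cdot)$ than the paper's $2^{m+1}$.
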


%\olivierpourmanon{Preuve beaucoup plus simple}

\begin{proof}
Consider $\lambda(2^m, 2^{n},x)= F(\xi(2^{m+1}, 2^{n},x-9/8))$ where $$F(x)=\sigtanh(2^{m+1},1/4,1/2,x).$$
\end{proof}

%A graphical representation of $\lambda$ can be found in Figure \ref{fig:lambda}.

\begin{figure}
	\centerline{
		\includegraphics[width=6cm]{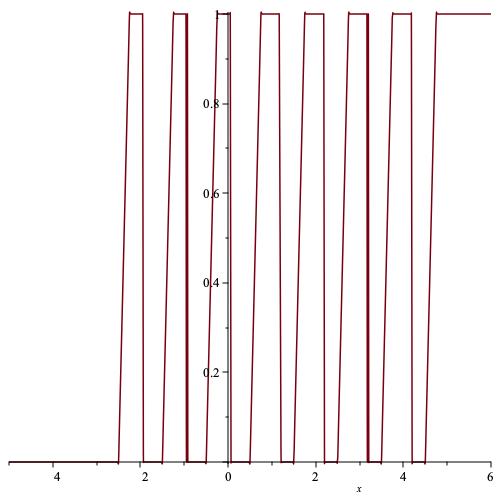} 
	}
	\caption{Graphical representation of $\lambda(2,4,x)$  obtained with maple. \label{fig:lambda}}
\end{figure}

%\CORPSPREUVEDE{\preuvelemlambda}

\begin{corollary}\label{lem:mod2}
	There exists (see Figure \ref{fig:moddeux})  \statementquatre
	\end{corollary}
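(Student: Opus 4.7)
The plan is to reduce the parity question $\lfloor x \rfloor \bmod 2$ to a question about the fractional part of $x/2$ shifted by a constant, which can then be read off by the function $\lambda$ from Corollary \ref{lem:lambda}. Concretely, I propose to set
$$\mod_{2}(2^m, 2^n, x) \;=\; \lambda\!\left(2^m,\, 2^n,\, \tfrac{x}{2} + \tfrac{3}{8}\right).$$
The function $x/2$ is basic in $\manonclasslighttanh$, the constant $3/8 = 1/2 - 1/8$ is obtained from $\mathbf{0}, \mathbf{1}, +, -$ and iterated applications of $x/2$, and $\lambda \in \manonclasslighttanh$ by Corollary \ref{lem:lambda}. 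Closure under composition then places $\mod_{2}$ in $\manonclasslighttanh$.

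For correctness, fix $m, n \in \N$ and $x$ with $\lfloor x \rfloor \in [-2^n+1,\, 2^n]$ and $x \in [\lfloor x \rfloor - 1/4,\, \lfloor x \rfloor + 1/4]$. Write $\lfloor x \rfloor = 2k + r$ with $r \in \{0,1\}$, and set $y := x/2 + 3/8$. If $r = 0$ then $x \in [2k - 1/4,\, 2k + 1/4]$, hence $y \in [k + 1/4,\, k + 1/2]$; if $r = 1$ then $x \in [2k + 3/4,\, 2k + 5/4]$, hence $y \in [k + 3/4,\, k + 1]$. In both cases $\lfloor y \rfloor = k$, and from $\lfloor x \rfloor \in [-2^n+1, 2^n]$ one checks $k \in [-2^{n-1}, 2^{n-1}] \subseteq [-2^n+1, 2^n]$, so the range hypothesis of $\lambda$ is met.

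Applying Corollary \ref{lem:lambda}: when $r = 0$ the argument $y$ lies in $[\lfloor y \rfloor + 1/4,\, \lfloor y \rfloor + 1/2]$, giving $|\lambda(2^m, 2^n, y) - 0| \le 2^{-m}$, which matches $\lfloor x \rfloor \bmod 2 = 0$; when $r = 1$ the argument $y$ lies in $[\lfloor y \rfloor + 3/4,\, \lfloor y \rfloor + 1]$, giving $|\lambda(2^m, 2^n, y) - 1| \le 2^{-m}$, which matches $\lfloor x \rfloor \bmod 2 = 1$. Either way $|\mod_{2}(2^m, 2^n, x) - \lfloor x \rfloor \bmod 2| \le 2^{-m}$. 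There is no substantive obstacle: the construction is a one-line composition and the verification is a routine case analysis on the parity of $\lfloor x \rfloor$, the only mild care being the choice of the shift $3/8$ so that the two parity classes land precisely in the two ``good'' intervals where $\lambda$ is specified.
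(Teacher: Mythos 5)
Your proposal is correct and follows essentially the same route as the paper: both reduce the parity of $\lfloor x \rfloor$ to a question about $x/2$ shifted by a constant and then read the answer off via $\lambda$ from Corollary \ref{lem:lambda}; the paper's formula is $1-\lambda(2^m, N/2, \tfrac12 x + \tfrac78)$ whereas yours is $\lambda(2^m, 2^n, \tfrac{x}{2}+\tfrac38)$, a cosmetic difference in the choice of shift and in whether one subtracts from $1$, with your choice of keeping the second argument at $2^n$ being slightly cleaner for the range bookkeeping.
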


\newcommand\preuvelemmoddeux{}
{
\begin{proof}
	We can take $$\mod_{2}(2^m, N,x)=1-\lambda(2^m, N/2,\frac12x+\frac78)$$ where $\lambda$ is the function given by Corollary \ref{lem:lambda}.
%	A graphical representation of $\mod{2}$ can be found in Figure \ref{fig:moddeux}.
\end{proof}

\begin{figure}
	\centerline{
		\includegraphics[width=6cm]{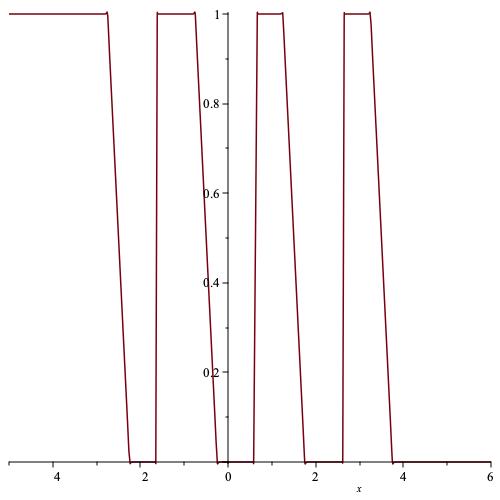} 
	}
	\caption{Graphical representation of $\mod_{2}(2,4,x)$  obtained with maple. \label{fig:moddeux}}
\end{figure}
}

\CORPSPREUVEDE{\preuvelemmoddeux}

\begin{corollary}\label{lem:div2}
	There exists (see Figure \ref{fig:divdeux}) \statementcinq
\end{corollary}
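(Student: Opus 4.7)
The plan is to reduce integer division by $2$ to the identity $\lfloor x\rfloor \!/\!/\, 2 = \frac{\lfloor x\rfloor - (\lfloor x\rfloor \bmod 2)}{2}$, and then plug in the approximants produced by Corollaries~\ref{lem:i} and~\ref{lem:mod2}. Concretely, I would define
\[
\div_{2}(2^m, 2^n, x) \;:=\; \frac{\sigma_1(2^m, 2^n, x) \,-\, \mod_{2}(2^m, 2^n, x)}{2}.
\]
Since $\frac{x}{2}$ is a basic function of $\manonclasslighttanh$ and both $\sigma_1$ and $\mod_{2}$ belong to $\manonclasslighttanh$ by Corollaries~\ref{lem:i} and~\ref{lem:mod2}, this expression lies in $\manonclasslighttanh$ by composition.

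The error analysis is routine. Under the hypothesis $x \in [\lfloor x\rfloor - \tfrac14, \lfloor x\rfloor + \tfrac14]$, the interval needed by $\mod_2$ is directly satisfied, and this interval is contained in $[\lfloor x\rfloor - \tfrac12, \lfloor x\rfloor + \tfrac14]$ which is the interval needed by $\sigma_1$. Therefore both guarantees apply, giving $|\sigma_1(2^m,2^n,x) - \lfloor x\rfloor| \le 2^{-m}$ and $|\mod_{2}(2^m,2^n,x) - (\lfloor x\rfloor \bmod 2)| \le 2^{-m}$. Using the decomposition $\lfloor x\rfloor = 2(\lfloor x\rfloor \!/\!/\, 2) + (\lfloor x\rfloor \bmod 2)$ and the triangle inequality,
\[
\bigl|\div_{2}(2^m, 2^n, x) - \lfloor x\rfloor \!/\!/\, 2\bigr| \;=\; \tfrac{1}{2}\bigl|(\sigma_1 - \lfloor x\rfloor) - (\mod_{2} - (\lfloor x\rfloor \bmod 2))\bigr| \;\le\; \tfrac{1}{2}(2^{-m} + 2^{-m}) \;=\; 2^{-m}.
\]

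There is no real obstacle here: the entire construction is a direct algebraic combination of two previously built approximants, and the constant $\tfrac12$ is available as a basic division. The only small care needed is to check that the narrower input window $[\lfloor x\rfloor - \tfrac14, \lfloor x\rfloor + \tfrac14]$ lies inside both of the windows on which $\sigma_1$ and $\mod_2$ satisfy their $2^{-m}$ guarantees, which is immediate.
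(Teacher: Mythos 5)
Your proof is correct and uses exactly the same construction as the paper, namely $\div_2 = \tfrac12(\sigma_1 - \mod_2)$; the paper simply omits the triangle-inequality error analysis that you spell out. (Incidentally, your choice of $\sigma_1$ rather than $\sigma_2$ is the right one, since the $\div_2$ window $[\lfloor x\rfloor - \tfrac14, \lfloor x\rfloor + \tfrac14]$ sits inside the $\sigma_1$ window but not the $\sigma_2$ window; the paper's prose mentions $\sigma_2$ but its displayed formula correctly uses $\sigma_1$.)
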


\newcommand\preuvelemdivdeux{}
{
\begin{proof}
	We can take $$\div{2}(2^m,N,x)=\frac12(\sigma_{1}(2^m,N,x)-\mod_{2}(2^m, N,x))$$
	where $\mod_{2}$ is the function given by Corollary \ref{lem:mod2}, 
	and $\sigma_{2}$ is the function given by Corollary \ref{lem:i}.
	
%	A graphical representation of $\mod{2}$ can be found in Figure \ref{fig:divdeux}.
\end{proof}

\begin{figure}
	\centerline{
		\includegraphics[width=6cm]{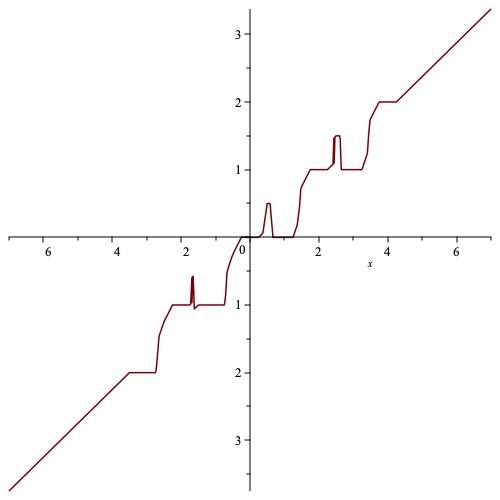} 
	}
	\caption{Graphical representation of $\div_{2}(2,4,x)$  obtained with maple. \label{fig:divdeux}}
\end{figure}
}

\CORPSPREUVEDE{\preuvelemdivdeux}

}

\CORPSPREUVEDE{\preuvebestiary}

\shortermcu{
\subsection{Some properties of sigmoids}
}

%
%\olivierpasimportant{Etait:
%\begin{lemma} %\label{tricksigmoid}
%Consider $\T(d,l) = \signb{d-3/4+l/2}$.
%For $l \in [0,1]$, we have $\T(0,l)=0$, and $\T(1,l)=l$.
%\end{lemma}
%}
Observing that for function $$\T(d,\ell) = 4 \sig(1,2,1/2+d+\ell/4) - 2,$$ for $\ell \in [0,1]$, we have 
$$\begin{array}{lll}
\T(0,\ell)&=&0 \\
\T(1,\ell)&=&\ell
\end{array}
.$$
 Applying Lemma \ref{lem:solution:a:tout} on this sigmoid, we get:
\begin{lemma} \label{tricksigmoidtanh}
There exists $\TTtanh \in \manonclasslighttanh$ such that, for $\ell \in [0,1]$, 
\begin{itemize}
\item if we take $|d'-0| \le 1/4$, 
then $|\TTtanh(2^{m},d',\ell)-0| \le 2^{-m},$ 
\item and if we take $|d'-1| \le 1/4$, then $|\TTtanh(2^{m},d',\ell)-\ell| \le 2^{-m}$.
\end{itemize}
\end{lemma}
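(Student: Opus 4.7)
The plan is to realise $\T$ as a composition of two piecewise-linear sigmoids (thereby avoiding any genuine multiplication between $\ell$ and an indicator of $d$), and then to analytically approximate each sigmoid by $\sigtanh$ via Lemma~\ref{lem:solution:a:tout}.

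First, I would define the piecewise-linear target
\[
\T(d, \ell) \;=\; 4\,\sig\bigl(1, 2,\; \sig(1/4, 3/4, d) + \ell/4\bigr).
\]
A direct case analysis confirms the required robust behaviour for $\ell \in [0,1]$: if $d \le 1/4$ then the inner sigmoid evaluates to $0$, the outer argument lies in $[0, 1/4]$ which is $\le 1$, hence $\T(d,\ell) = 0$; if $d \ge 3/4$ then the inner sigmoid is $1$, the outer argument $1 + \ell/4 \in [1, 5/4] \subset [1,2]$, so $\T(d,\ell) = 4 \cdot (\ell/4) = \ell$. The key conceptual move is that the inner $\sig(1/4, 3/4, \cdot)$ acts as a robust snap-to-endpoint on $d$, after which the outer additive combination with $\ell/4$ performs a clean switch without any true multiplication.

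Second, I would set
\[
\TTtanh(2^m, d', \ell) \;=\; 4\,\sigtanh\bigl(2^{m+c},\, 1, 2,\; \sigtanh(2^{m+c},\, 1/4, 3/4,\, d') + \ell/4\bigr)
\]
for a suitable constant $c$. Lemma~\ref{lem:solution:a:tout} gives $|\sigtanh(2^{m+c}, a, b, x) - \sig(a,b,x)| \le 2^{-(m+c)}$ uniformly. Since $\sig(1,2,\cdot)$ is $1$-Lipschitz, the inner approximation error contributes at most $2^{-(m+c)}$ to the outer argument, the outer $\sigtanh$ contributes at most another $2^{-(m+c)}$, and the multiplicative factor $4$ in front amplifies these by at most $4$. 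A fixed choice of $c$ (say $c=3$) therefore forces the total error below $2^{-m}$.

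Third, I would verify membership in $\manonclasslighttanh$. The numerical constants $1/4, 3/4, 1, 2, 4$ are built from $\mathbf{0}, \mathbf{1}, +, -, x/2$; the quantity $\ell/4$ is $\ell$ halved twice; $\sigtanh$ is in the class by Lemma~\ref{lem:solution:a:tout} (the reciprocals $1/(b-a)$, namely $2$ and $1$, are trivially available); and composition is closed. The only step requiring real care is the error bookkeeping in the nested $\sigtanh$ calls, but this is routine thanks to the Lipschitz property of $\sig$; the main obstacle that this construction has to \emph{avoid}, rather than confront, is the temptation to write the switch as $\chi(d)\cdot\ell$, which would violate the ban on multiplication in $\manonclasslighttanh$, and the preliminary sigmoid on $d$ is precisely what sidesteps it.
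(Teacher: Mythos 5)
Your proof is correct and follows essentially the same route as the paper: build an ideal piecewise-linear switch $\T$ as a composition of two sigmoids (a snap-to-$\{0,1\}$ sigmoid applied to $d'$, fed additively together with $\ell/4$ into a second sigmoid so that no genuine multiplication is needed), then replace each $\sig$ by $\sigtanh$ via Lemma~\ref{lem:solution:a:tout} and budget the errors using the $1$-Lipschitz property. Your displayed formula for $\T$ is in fact slightly cleaner than the paper's (whose $\T(d,\ell)=4\sig(1,2,1/2+d+\ell/4)-2$, as written, evaluates to $-2$ rather than $0$ on the first branch --- an evident slip), but the underlying decomposition, the use of Lemma~\ref{lem:solution:a:tout}, and the error bookkeeping are the same.
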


\newcommand\preuvetricksigmoidtanh{

We start by proving the following lemma, which is about some ideal sigmoids:

\begin{lemma} \label{tricksigmoid}
For $\ell \in [0,1]$, for $-1/4 \le d \le 1/4$, $\T(d,\ell)=0$, and  for $3/4 \le d \le 5/4$, $\T(d,\ell)= 4 (d-1) + \ell$.

Consider $\TT(d,\ell)= \T(\sig(1/4,3/4,x),\ell)$. 
\begin{itemize}
\item If we take $|d'-0| \le 1/4$, 
then $\TT(d',\ell)=0$, 
\item and if we take $|d'-1| \le 1/4$, then $\TT(d',\ell)=\ell$.
\end{itemize}
\end{lemma}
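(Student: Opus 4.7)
The plan is to prove the two piecewise identities for $\T$ first by direct inspection of the three regimes of $\sig(1,2,\cdot)$, and then derive the two properties of $\TT$ as immediate corollaries of the constancy of $\sig(1/4,3/4,\cdot)$ on the hypothesised intervals. Both parts are pure case analyses, with no induction or non-trivial estimates needed.

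For the closed-form identities for $\T$, I would handle the two $d$-ranges separately, by computing where the argument $1/2+d+\ell/4$ of $\sig(1,2,\cdot)$ lands. For $d\in[-1/4,1/4]$ and $\ell\in[0,1]$, that argument lies in $[1/4,1]$, where $\sig(1,2,\cdot)$ is identically zero by definition, so the entire $\sig$ contribution vanishes and the formula reduces to its constant part. For $d\in[3/4,5/4]$ and $\ell\in[0,1]$, the argument lies in $[5/4,7/4]\subseteq[1,2]$, the linear regime of $\sig(1,2,\cdot)$ where $\sig(1,2,x)=x-1$; substituting gives the affine expression in $d$ and $\ell$ that, after combining with the subtracted offset, matches the claimed $4(d-1)+\ell$.

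For the statements about $\TT=\T\circ(\sig(1/4,3/4,\cdot),\mathrm{id})$, I would use the two plateaus of $\sig(1/4,3/4,\cdot)$: it equals $0$ on $(-\infty,1/4]$ and $1$ on $[3/4,+\infty)$. The hypothesis $|d'-0|\le 1/4$ forces $d'\in[-1/4,1/4]$, so $\sig(1/4,3/4,d')=0\in[-1/4,1/4]$; feeding this into the first identity for $\T$ gives $\TT(d',\ell)=\T(0,\ell)=0$. Symmetrically, $|d'-1|\le 1/4$ forces $d'\in[3/4,5/4]$, so $\sig(1/4,3/4,d')=1\in[3/4,5/4]$, and the second identity yields $\TT(d',\ell)=\T(1,\ell)=4(1-1)+\ell=\ell$.

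The one step that needs genuine care, and which I expect to be the main obstacle, is the bookkeeping of the constant offsets in the closed form of $\T$: the multiplicative factor $4$ in front of $\sig(1,2,\cdot)$ and the subtracted constant must be chosen together so that the zero regime produces exactly $0$ and the linear regime produces exactly $4(d-1)+\ell$. If the bare constant does not yield zero on $[-1/4,1/4]$, the fix is to replace it by an auxiliary sigmoidal term that is identically $0$ there and identically $2$ on $[3/4,5/4]$ (for instance a scaled $\sig(1/4,3/4,d)$), which does not change any argument above. Once this alignment is verified, every remaining step of the proof is a one-line substitution.
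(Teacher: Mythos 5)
Your approach is the same case analysis the paper uses: determine which piece of $\sig(1,2,\cdot)$ applies in each $d$-range, substitute, and then propagate through $\TT$ via the two plateaus of $\sig(1/4,3/4,\cdot)$. The interesting point is that the worry you raise about the constant offset is not hypothetical: with the definition $\T(d,\ell)=4\sig(1,2,1/2+d+\ell/4)-2$ given just above the lemma, your computation in the first regime forces $\sig(1,2,\cdot)=0$ and hence $\T(d,\ell)=-2$, not $0$, so the lemma as stated is false, and the paper's one-line proof (which also asserts ``hence $\T(d,\ell)=0$'' after finding $\sig=0$) slides past exactly the error you flagged. The second identity, $\T(d,\ell)=4(d-1)+\ell$ on $d\in[3/4,5/4]$, does check out with the given $\T$, though the argument lands in $[5/4,2]$, not $[5/4,7/4]$ as you wrote (you dropped the $\ell/4$ contribution on the upper end; both ranges sit inside $[1,2]$, so nothing breaks). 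Your proposed repair, replacing the constant $-2$ by $-2\,\sig(1/4,3/4,d)$, which vanishes on $[-1/4,1/4]$ and equals $2$ on $[3/4,5/4]$, restores both identities exactly, and the $\TT$ consequences then follow as you describe; an equivalent and slightly simpler repair is to take $\T(d,\ell)=4\sig(1,2,d+\ell/4)$ with no additive constant. So your reasoning is sound and in fact catches a bug in the paper; the one thing I would push on is that you should have carried the computation of the ``constant part'' to its conclusion rather than leaving the diagnosis conditional, since the defect is definitely present.
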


\begin{proof}
	Just check that for $d \le 1/4$, we have $1/2+d+ \ell/4 \le 1$, and hence $\sig(1,2,1/2+d+\ell/4)=0$, so $\T(d,\ell)=0$.  
	For $3/4 \le d \le 5/4$, we have $5/4 \le 1/2+d+\ell/4 \le 2$, and hence $\sig(1,2,1/2+d+\ell/4)=d+\ell/4-1/2$, and hence $\T(1,\ell)=\ell$. The other observation follows.
\end{proof}

%\olivierpourmanon{New:}

Then we go to versions using $\tanh$:

\begin{lemma} %\label{tricksigmoidtanh}
Consider $\Ttanh(2^{m},d,l) = 4 \sigtanh(2^{m+2},1,2,1/2+d+l/4) - 2.$
For $\ell \in [0,1]$, we have $|\Ttanh(0,\ell)-0| \le 2^{-m},$  and $|\Ttanh(1,\ell)-\ell| \le 2^{-m}.$

For $-1/4 \le d \le 1/4$, $|\Ttanh(2^{m},d,\ell)-0| \le 2^{-m},$  and for $3/4 \le d \le 5/4$, $|\Ttanh(2^{m},d,\ell) -4 (d-1) + \ell| \le 2^{-m}$.

Consider $\TTtanh(2^{m},d,\ell)= \Ttanh(2^{m+1},\sigtanh(2^{m+1},1/4,3/4,x),\ell)$.
\begin{itemize}
\item  If we take $|d'-0| \le 1/4$, 
then $|\TTtanh(d',\ell)-0| \le 2^{-m}$, 
\item and if we take $|d'-1| \le 1/4$, then $|\TTtanh(d',\ell)-\ell| \le 2^{-m}$.
\end{itemize}
\end{lemma}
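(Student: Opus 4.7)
The plan is to reduce everything to Lemma \ref{tricksigmoid} by observing that $\Ttanh$ and $\TTtanh$ are obtained from $\T$ and $\TT$ by replacing each ideal sigmoid $\sig$ with its uniform $\tanh$-approximation $\sigtanh$ coming from Lemma \ref{lem:solution:a:tout}. The whole argument is then an error-propagation exercise, broken into two layers of composition.

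The first step is the uniform estimate $|\Ttanh(2^m,d,\ell)-\T(d,\ell)|\le 2^{-m}$: since $\Ttanh$ replaces $\sig(1,2,\cdot)$ by $\sigtanh(2^{m+2},1,2,\cdot)$ with the same outer affine factor $x\mapsto 4x-2$, Lemma \ref{lem:solution:a:tout} (with $b-a=1$) gives an error bounded by $4\cdot 2^{-(m+2)}=2^{-m}$. Combined with the explicit values of $\T$ furnished by Lemma \ref{tricksigmoid}---namely $\T(0,\ell)=0$, $\T(1,\ell)=\ell$, $\T(d,\ell)=0$ on $[-1/4,1/4]$, and $\T(d,\ell)=4(d-1)+\ell$ on $[3/4,5/4]$---the four pointwise bounds for $\Ttanh$ follow immediately by the triangle inequality. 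For $\TTtanh$ I would then set $d'':=\sigtanh(2^{m+1},1/4,3/4,d')$ and use Lemma \ref{lem:solution:a:tout} again to get $|d''|\le 2^{-(m+1)}$ when $|d'|\le 1/4$ (since $\sig(1/4,3/4,d')=0$ there) and $|d''-1|\le 2^{-(m+1)}$ when $|d'-1|\le 1/4$ (since $\sig(1/4,3/4,d')=1$ there). In the first case, $d''$ sits inside the left plateau $[-1/4,1/4]$ of $\T$, so the bound on $\Ttanh$ already proven yields $|\TTtanh(2^m,d',\ell)|\le 2^{-(m+1)}\le 2^{-m}$ directly.

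The delicate case is the second. In the linear region, combining $|\Ttanh(2^{m+1},d'',\ell)-(4(d''-1)+\ell)|\le 2^{-(m+1)}$ with $|4(d''-1)|\le 4\cdot 2^{-(m+1)}$ gives an error of up to $5\cdot 2^{-(m+1)}$ between $\TTtanh(2^m,d',\ell)$ and $\ell$, which is slightly worse than the target $2^{-m}$. The main obstacle is therefore absorbing the Lipschitz constant $4$ of $\Ttanh$ in its $d$-argument through the composition of the two approximations. The natural fix is to tighten the inner precisions by using $\sigtanh(2^{m+c_1},1/4,3/4,\cdot)$ and $\Ttanh(2^{m+c_2},\cdot,\ell)$ with $2^{-c_2}+4\cdot 2^{-c_1}\le 1$, for instance $c_1=3$ and $c_2=1$ (for which the sum equals exactly $1$); with this essentially cosmetic adjustment of constants, the composition error collapses to $2^{-m}$ and the claimed bounds on $\TTtanh$ follow.
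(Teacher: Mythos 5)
Your proof takes exactly the approach the paper intends (the paper's own proof text is just ``this is direct from previous lemma and Lemma \ref{lem:solution:a:tout}''), namely: replace the ideal sigmoids by $\sigtanh$, invoke the uniform bound $|\sigtanh(2^{m},a,b,\cdot)-\sig(a,b,\cdot)|\le 2^{-m}$, and propagate errors through the composition. Your calculations are correct, and you are right to flag the constant issue: with the paper's choices ($\sigtanh(2^{m+2},1,2,\cdot)$ inside $\Ttanh$ and $\sigtanh(2^{m+1},1/4,3/4,\cdot)$, $\Ttanh(2^{m+1},\cdot,\cdot)$ inside $\TTtanh$), the second bullet only yields
\[
|\TTtanh(2^{m},d',\ell)-\ell|\le 2^{-(m+1)} + 4\cdot 2^{-(m+1)} = \tfrac{5}{2}\cdot 2^{-m},
\]
because $\T$ has Lipschitz constant $4$ in $d$ on the plateau $[3/4,5/4]$, while on $[-1/4,1/4]$ it is constant (which is why the first bullet does give $2^{-(m+1)}\le 2^{-m}$ with no trouble). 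So as literally written, the lemma's bound is off by a constant factor in one of the two cases.

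Your proposed fix is sound: replacing the inner sigmoid by $\sigtanh(2^{m+c_{1}},1/4,3/4,\cdot)$ and the outer $\Ttanh$ call by $\Ttanh(2^{m+c_{2}},\cdot,\ell)$ with $2^{-c_{2}}+4\cdot 2^{-c_{1}}\le 1$ (e.g.\ $c_{1}=3$, $c_{2}=1$) restores the $2^{-m}$ bound. This is a cosmetic change and does not affect the parent statement, Lemma \ref{tricksigmoidtanh}, since the whole point there is that some choice of precision offsets works. Two small technical remarks you may wish to make explicit: (i) the variable $x$ in the paper's definition of $\TTtanh$ is a typo for $d$; (ii) the step ``$d''$ sits inside $[-1/4,1/4]$ (resp.\ $[3/4,5/4]$)'' uses $2^{-(m+1)}\le 1/4$, i.e.\ $m\ge 1$, which is harmless but worth noting if one wants the bound literally for all $m\in\N$.
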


\begin{proof}
This is direct from previous lemma and Lemma \ref{lem:solution:a:tout}.
\end{proof}

Then Lemma \ref{tricksigmoidtanh} follows.
}
\CORPSPREUVEDE{\preuvetricksigmoidtanh}

%\begin{proof}
%Just check. %Merci maple
%\end{proof}

\olivierplusimportant{En fait, ca peut donc se faire avec $\sigtanh$ sans trop de pb.}
\olivierplusimportant{
\manondufutur{
Etait écrit: 
Maisc ompliqué par rapport à la solution du dessus, sauf si on s'interdit de mutiplier des $x$ par des $\tanh(x)$.

If we want to do something similar using $\tanh$, from above descriptions, $$T(d,l)=100 \tanh(l/100 + (d - 1) 100) + 100 \tanh(-d + 1)/\tanh(1)$$
is very close to $0$ for $d=0$, and very close to $l$ for $d=1$, when $l \in [-1..1]$.  But, NOTICE THAT THIS expression requires the inverse of $\tanh(1)$.
}}

\begin{lemma}\label{lem:switch:tanh}
Let $\alpha_{1},  \alpha_{2},\dots,$ $\alpha_{n}$ be some integers, and $V_{1}, V_{2}, \dots,V_{n}$ some constants. There is some  function in $\manonclasslighttanh$, that we write 
$\sendtanh(2^{m},{\alpha_{i} \sendsymbol V_{i}})_{i \in \{1,\dots,n\}}$, that maps any $x \in [\alpha_{i}-1/4, \alpha_{i}+1/4]$ to a real at a distance at most $2^{-m}$ of $V_{i}$, for all $i \in \{1,\dots,n\}$. 
\end{lemma}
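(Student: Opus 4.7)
The plan is to build $\sendtanh$ as a sum over $i$ of approximate selector terms, each of which returns (approximately) $V_i$ when $x$ lies in $[\alpha_i - \tfrac14, \alpha_i + \tfrac14]$ and returns (approximately) $0$ otherwise. The two ingredients are (a) an approximate indicator of the target interval, built from Lemma \ref{lem:solution:a:tout}, and (b) the selector $\TTtanh$ built in Lemma \ref{tricksigmoidtanh}.

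First, for each $i$, I would construct an approximate indicator $d_i \in \manonclasslighttanh$ of $[\alpha_i - \tfrac12, \alpha_i + \tfrac12]$ as a difference of two sigmoids from Lemma \ref{lem:solution:a:tout}:
$$d_i(x) \;=\; \sigtanh\bigl(2^{k},\, \alpha_i - \tfrac12,\, \alpha_i - \tfrac38,\, x\bigr) \;-\; \sigtanh\bigl(2^{k},\, \alpha_i + \tfrac38,\, \alpha_i + \tfrac12,\, x\bigr).$$
For $k$ large enough, $d_i(x)$ is within $\tfrac14$ of $1$ on the plateau $[\alpha_i - \tfrac38, \alpha_i + \tfrac38]$ (and a fortiori on $[\alpha_i - \tfrac14, \alpha_i + \tfrac14]$), and within $\tfrac14$ of $0$ as soon as $|x - \alpha_i| \ge \tfrac12$. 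Since the $\alpha_i$ are distinct integers, any $x \in [\alpha_i - \tfrac14, \alpha_i + \tfrac14]$ lies at distance $\ge \tfrac34 > \tfrac12$ from every other $\alpha_j$, so at such $x$ we have $d_i(x) \approx 1$ and $d_j(x) \approx 0$ for every $j \ne i$.

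Next, I would plug each $d_i$ into the selector $\TTtanh$ and set
$$\sendtanh\bigl(2^m,\, (\alpha_i \sendsymbol V_i)_i\bigr)(x) \;=\; \sum_{i=1}^{n} \TTtanh\bigl(2^{k'},\, d_i(x),\, V_i\bigr).$$
By Lemma \ref{tricksigmoidtanh}, whenever $x \in [\alpha_i - \tfrac14, \alpha_i + \tfrac14]$ the $i$-th summand differs from $V_i$ by at most $2^{-k'}$, while each of the $n-1$ other summands differs from $0$ by at most $2^{-k'}$. Choosing $k$ large enough to make the hypotheses $|d_i - 1| \le \tfrac14$ and $|d_j| \le \tfrac14$ hold, and choosing $k' = m + \lceil \log_2 n \rceil$, bounds the cumulative error by $n \cdot 2^{-k'} \le 2^{-m}$. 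The resulting function is in $\manonclasslighttanh$ by closure under composition, since $\sigtanh$ and $\TTtanh$ have already been placed in the class and only $+,-$ are used to combine them.

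The main subtlety I expect is that Lemma \ref{tricksigmoidtanh} is stated only for $\ell \in [0,1]$, whereas the $V_i$ can be arbitrary constants. Because the $V_i$ are fixed constants rather than variables, I would bypass this by either (i) enlarging the thresholds $1,2$ in the definition of $\T$ so that $\tfrac12 + d + V_i/4$ still lies in the linear range of the underlying sigmoid for the particular $V_i$ at hand, or (ii) prescaling: write $V_i = V_i' \cdot 2^{p_i}$ with $V_i' \in [0,1]$, apply $\TTtanh(\cdot, d_i(x), V_i')$, and scale back using the constant multiplier $2^{p_i}$ built from repeated additions. Either manoeuvre stays inside $\manonclasslighttanh$, since $n$ is finite and each constant is fixed, and yields the desired uniform $2^{-m}$ approximation on every target interval.
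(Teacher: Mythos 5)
Your proposal is correct in spirit but takes a genuinely different route from the paper's proof, and it introduces an unnecessary complication that you then have to work around.

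The paper first sorts the $\alpha_i$ and forms a single \emph{telescoping sum}
$$V_{1} + \signbtanh{2^{m+c}}{x-\alpha_{1}}\,(V_{2}- V_{1}) + \signbtanh{2^{m+c}}{x-\alpha_{2}}\,(V_{3}- V_{2}) + \dots + \signbtanh{2^{m+c}}{x-\alpha_{n-1}}\,(V_{n}- V_{n-1}),$$
where $\signbtanh{2^{m+c}}{\,\cdot\,}=\sigtanh(2^{m+c},1/4,3/4,\,\cdot\,)$, and $c$ is chosen with $n\max_{j}|V_{j+1}-V_{j}|\le 2^{c}$. When $x\in[\alpha_i-\tfrac14,\alpha_i+\tfrac14]$ the shifted sigmoids at thresholds $\alpha_1,\dots,\alpha_{i-1}$ are essentially $1$, those at $\alpha_i,\dots,\alpha_{n-1}$ essentially $0$, and the partial sum telescopes to $V_i$. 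One sigmoid per threshold, and every step-difference $(V_{j+1}-V_j)$ enters only as a fixed constant multiplying a bounded function, so there is no need for $\TTtanh$ at all. In contrast you build, for each $i$, an approximate indicator $d_i$ of the interval (difference of two $\sigtanh$'s) and then route $V_i$ through $\TTtanh(2^{k'},d_i(x),\cdot)$ to select it. This is more symmetric (no sorting, each $\alpha_i$ treated independently), but it reintroduces exactly the restriction $\ell\in[0,1]$ that $\TTtanh$ was designed for, forcing your prescaling/threshold-enlarging detour. That detour is avoidable: since $V_i$ is a fixed constant here, not a function of the input, you never need $\TTtanh$—just form $I_i(x)\cdot V_i$ by multiplying the indicator by the constant, which is a finite combination of additions and halvings, exactly as the paper multiplies by $(V_{j+1}-V_j)$.

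Two residual points in your write-up are worth tightening. First, your fix~(ii) ``write $V_i = V_i'\cdot 2^{p_i}$ with $V_i'\in[0,1]$'' silently assumes $V_i\ge 0$; negative constants need a separate sign handled by subtraction. Second, both your proposal and the paper's proof tacitly assume the constants $V_i$ are representable in $\manonclasslighttanh$ (e.g.\ dyadics or other constants generable from $\mathbf{0},\mathbf{1},\plus,\minus,\tfrac{x}{2},\tfrac{x}{3}$); for a truly arbitrary real $V_i$ there is no such constant function in the class, so this hypothesis should be made explicit. Neither is a fatal gap, but in your version the $[0,1]$ issue is a self-inflicted one; the paper's telescoping construction simply never encounters it.
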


\newcommand\preuvesend{

We prove the following lemma about some ideal sigmoids: 
define $\signb{x}$ as $\sig(1/4,3/4,x)$ and 
 $\signbtanh{2^{m}}{x}$ as $\sigtanh(2^{m},1/4,3/4,x)$.

\begin{lemma}\label{lem:switch}
Assume you are given some integers $\alpha_{1},  \alpha_{2},\dots,$ $\alpha_{n}$, and  some constants  $V_{1}, V_{2}, \dots,V_{n}$. Then there is some  function in $\manonclasslighttanh$, written $\send({\alpha_{i} \sendsymbol V_{i}})_{i \in \{1,\dots,n\}}$, that maps any $x \in [\alpha_{i}-1/4, \alpha_{i}+1/4]$ to $V_{i}$, for all $i \in \{1,\dots,n\}$. 
\end{lemma}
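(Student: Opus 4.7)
The plan is to build $\send$ as a telescoping sum of sigmoid transitions. Without loss of generality, reorder the indices so that $\alpha_1 < \alpha_2 < \cdots < \alpha_n$ (the $\alpha_i$ must be pairwise distinct for the stated behaviour to be well defined). Because the $\alpha_i$ are integers, we have $\alpha_{i+1} \geq \alpha_i + 1$, so the target intervals $[\alpha_i - 1/4,\, \alpha_i + 1/4]$ are pairwise disjoint and separated by gaps of width at least $1/2$, which is exactly what is needed to fit a sigmoid transition between two consecutive intervals.

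For each $i \in \{1,\ldots,n-1\}$, I would introduce the transition function $s_i(x) = \sig(\alpha_i + 1/4,\, \alpha_{i+1} - 1/4,\, x)$. By the definition of $\sig$ recalled just before Lemma~\ref{lem:solution:a:tout}, we have $s_i(x) = 0$ for $x \leq \alpha_i + 1/4$ and $s_i(x) = 1$ for $x \geq \alpha_{i+1} - 1/4$, so $s_i$ is identically $0$ on every target interval $[\alpha_j - 1/4,\,\alpha_j + 1/4]$ with $j \leq i$ and identically $1$ on every such interval with $j > i$. I would then define
$$\send(x) \;=\; V_1 + \sum_{i=1}^{n-1} (V_{i+1} - V_i)\, s_i(x).$$
For $x \in [\alpha_k - 1/4,\,\alpha_k + 1/4]$, the sum collapses to $V_1 + \sum_{i=1}^{k-1}(V_{i+1}-V_i) = V_k$ by telescoping, which is precisely the required property.

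What remains is to certify that this $\send$ belongs to the intended class. Each $s_i$ is the ideal sigmoid $\sig$ evaluated at constants built from integers, so the construction reduces to finitely many additions, subtractions, and multiplications by the fixed constants $V_{i+1}-V_i$ and $V_1$; this gives the \emph{ideal} version. For the tanh version (i.e.\ when we genuinely require a function in $\manonclasslighttanh$), I would use Lemma~\ref{lem:solution:a:tout} to replace each $\sig(\alpha_i+1/4,\,\alpha_{i+1}-1/4,\,\cdot)$ by its $\sigtanh$ approximation with precision parameter scaled by $2^m$ and by a bound on $|V_{i+1}-V_i|$, then propagate the errors through the finite sum. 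I do not foresee a real obstacle here: the only mild point is to keep the precision budget under control across the $n-1$ transitions, which is handled by absorbing the factor $n \cdot \max_i |V_{i+1}-V_i|$ into an additive shift of $m$, exactly in the spirit of how $\sigtanh$ is used throughout Section~\ref{sec:functions}.
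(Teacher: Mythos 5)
Your proof is correct and follows essentially the same route as the paper's: sort the $\alpha_i$ and build a telescoping sum $V_1 + \sum_{i<n}(V_{i+1}-V_i)\cdot(\text{sigmoid transition at }\alpha_i)$. The only cosmetic difference is that the paper uses the fixed-width transition $\signb{x-\alpha_i}=\sig(1/4,3/4,x-\alpha_i)$ whereas you use the interval-adapted $\sig(\alpha_i+1/4,\alpha_{i+1}-1/4,x)$; both vanish on $[\alpha_j-1/4,\alpha_j+1/4]$ for $j\le i$ and equal $1$ there for $j>i$, so both give the same telescoping collapse, and your discussion of the $\sigtanh$ error budget matches how the paper handles Lemma~\ref{lem:switch:tanh}.
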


\begin{proof}
Sort the $\alpha_{i}$ so that $\alpha_{1} < \alpha_{2} <\dots, \alpha_{n}$. Then consider $T_{1} + \signb{x-\alpha_{1}} (T_{2}- T_{1}) + \signb{x-\alpha_{2}} (T_{3}- E_{3}) + \dots + \signb{x-\alpha_{n-1}} (T_{n}- T_{n-1}).$
\end{proof}

We now go to versions with $\tanh$.

%\olivier{Open pb: A't'on besoin du lemme qui suit? Et meme du lemma d'avant en fait? Pas certain.}

\begin{proof}[Proof of Lemma \ref{lem:switch:tanh}]
Sort the $\alpha_{i}$ so that $\alpha_{1} < \alpha_{2} <\dots, \alpha_{n}$. Then consider $T_{1} + \signbtanh{2^{m+c}}{x-\alpha_{1}} (T_{2}- T_{1}) + \signbtanh{2^{m+c}}{x-\alpha_{2}} (T_{3}- T_{2}) + \dots + \signbtanh{2^{m+c}}{x-\alpha_{n-1}} (T_{n}- T_{n-1}).$
for some constant $c$ so that $n \max_{j} (T_{j}-T_{j+1}) \le 2^{c}$.
\end{proof}

}
\CORPSPREUVEDE{\preuvesend}

{More generally:}

\begin{lemma} \label{lem:switch:pairs:tanh}
Let $N$ be some integer. Let $\alpha_{1},  \alpha_{2},\dots, \alpha_{n}$ be some integers, and   $V_{i,j}$ for $1 \le i \le n$ some constants, with $0 \le j < N$. Then there is some  function in $\manonclasslighttanh$, that we write 
$\sendtanh(2^{m},{(\alpha_{i},j) \sendsymbol V_{i,j}})_{i \in \{1,\dots,n\}, j \in \{0,\dots,N-1\}}$, that maps any $x  \in  [\alpha_{i}-1/4, \alpha_{i}+1/4]$
 and $y \in [j-1/4,j+1/4]$ to a real at a distance at most $2^{-m}$ of $V_{i,j}$, for all $i \in \{1,\dots,n\}$, $j \in \{0,\dots,N-1\}$.
\end{lemma}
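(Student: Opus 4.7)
The plan is to reduce this two‐parameter switch to the one‐parameter switch of Lemma \ref{lem:switch:tanh}, applied twice in a nested fashion. First, for each fixed $i \in \{1,\dots,n\}$, I invoke Lemma \ref{lem:switch:tanh} on the family $\{j \sendsymbol V_{i,j}\}_{j \in \{0,\dots,N-1\}}$ to obtain a function $G_i(2^{m},y) \in \manonclasslighttanh$ satisfying $|G_i(2^{m},y) - V_{i,j}| \le 2^{-m}$ whenever $y \in [j - 1/4, j + 1/4]$. Then I combine these $n$ functions over the $x$‐coordinate by mimicking the explicit construction used for Lemma \ref{lem:switch:tanh}: after sorting $\alpha_1 < \alpha_2 < \dots < \alpha_n$, set
$$ \sendtanh(2^{m},((\alpha_i,j) \sendsymbol V_{i,j})) \;:=\; G_1(2^{m+c},y) \;+\; \sum_{k=1}^{n-1} \signbtanh{2^{m+c}}{x - \beta_k}\bigl(G_{k+1}(2^{m+c},y) - G_k(2^{m+c},y)\bigr), $$
where $\beta_k = (\alpha_k + \alpha_{k+1})/2$ and $c$ is a constant to be chosen.

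Correctness proceeds by a direct triangle inequality. Suppose $x \in [\alpha_i - 1/4, \alpha_i + 1/4]$ and $y \in [j - 1/4, j + 1/4]$. By Lemma \ref{lem:solution:a:tout}, each sigmoid factor $\signbtanh{2^{m+c}}{x - \beta_k}$ is within $2^{-(m+c)}$ of $1$ when $k < i$ and within $2^{-(m+c)}$ of $0$ when $k \ge i$. Replacing these factors by their ideal $0/1$ values makes the sum telescope to $G_i(2^{m+c},y)$, which is itself within $2^{-(m+c)}$ of $V_{i,j}$. The total error is therefore bounded by $2^{-(m+c)} + (n-1)\cdot 2M \cdot 2^{-(m+c)}$, where $M = \max_{i,j}|V_{i,j}|$, so choosing $c$ with $2^{c} \ge (2nM+1)$ gives an error at most $2^{-m}$, as required.

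Membership in $\manonclasslighttanh$ is immediate, since the construction uses only composition and the basic building blocks $\signbtanh{\cdot}{\cdot}$, $G_i$, $+$, $-$, which are all in $\manonclasslighttanh$ (the $G_i$ by the first application of Lemma \ref{lem:switch:tanh}, and $\signbtanh{\cdot}{\cdot}$ by Lemma \ref{lem:solution:a:tout}). The only genuine obstacle is the bookkeeping on the precision offset $c$: we must choose it large enough to simultaneously absorb the errors coming from the inner $G_i$ approximations and from the outer sigmoid approximations, but since both error sources are linear in $2^{-(m+c)}$ with coefficients depending only on $n$ and $M$ (and not on $m$), a single uniform choice of $c$ suffices.
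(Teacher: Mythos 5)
Your overall strategy differs from the paper's: the paper encodes the pair $(x,y)$ into a single number, reducing to one application of the one‑parameter Lemma~\ref{lem:switch:tanh} via $\sendtanh(2^m, N\alpha_i+j \sendsymbol V_{i,j})\bigl(N\,\sendtanh(2^{m+c}, \alpha_i\sendsymbol\alpha_i)(x)+y\bigr)$. You instead build $G_i(\cdot,y)$ for each $i$ and then telescope over $x$. That nested decomposition is conceptually natural, but it has a genuine gap precisely at the point where the paper's encoding trick is doing the real work.

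The gap is the multiplication. Your expression contains the term $\signbtanh{2^{m+c}}{x - \beta_k}\bigl(G_{k+1}(2^{m+c},y) - G_k(2^{m+c},y)\bigr)$, a product of a non‑constant function of $x$ with a non‑constant function of $y$. The class $\manonclasslighttanh$ contains $+$, $-$, $\tanh$, $x/2$, $x/3$ but \emph{not} multiplication — the paper explicitly points this out in the remark following the definition of $\manonclasslighttanh$, and the whole design of Section~\ref{sec:simulatingmt} (introducing $\TTtanh$ via Lemma~\ref{tricksigmoidtanh}, using $\EncodeMul$, etc.) is aimed at routing around the absence of multiplication. In the paper's own proof of Lemma~\ref{lem:switch:tanh} the factors $T_{k+1}-T_k$ are \emph{constants}, so the products there are scalar multiples, not genuine multiplications of two functions. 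Your step replaces those constants with the functions $G_j(\cdot,y)$, so ``membership in $\manonclasslighttanh$ is immediate'' is not correct. You could conceivably salvage the approach with $\TTtanh$ from Lemma~\ref{tricksigmoidtanh} applied to $\bigl(\signbtanh{\cdot}{x-\alpha_k}, G_{k+1}-G_k\bigr)$, but that requires first shifting and rescaling $G_{k+1}-G_k$ into $[0,1]$ and keeping track of the additional error, which is a nontrivial add‑on you would need to spell out.

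A secondary slip: you shift the sigmoid by $\beta_k=(\alpha_k+\alpha_{k+1})/2$, but $\signbtanh{\cdot}{z}$ approximates $\sig(1/4,3/4,z)$, which transitions on $[1/4,3/4]$. When $\alpha_{k+1}=\alpha_k+1$ this places the transition squarely over the interval $[\alpha_{k+1}-1/4,\alpha_{k+1}+1/4]$ (e.g. $\sig(1/4,3/4,1/2)=1/2$ at $x=\alpha_{k+1}$), so the factor is not close to $1$ for $k<i$ as you claim. The correct offset, as in the paper's proof of Lemma~\ref{lem:switch:tanh}, is $\alpha_k$ itself, which pushes the whole transition into the gap $(\alpha_k+1/4,\alpha_k+3/4)$ between the two admissible input intervals.
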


\newcommand\preuvesendpairs{
We start by proving the following lemma talking about some ideal sigmoids: 

\begin{lemma} \label{lem:switch:pairs}
Let $N$ be some integer. 
Assume some integers $\alpha_{1},  \alpha_{2},\dots, \alpha_{n}$, and  some constants $V_{i,j}$ for $1 \le i \le n$, and $0 \le j < N$ are given. Then there is some  function in $\manonclasslighttanh$, that we write 
$\send({(\alpha_{i},j) \sendsymbol V_{i,j}})_{i \in \{1,\dots,n\}, j \in \{0,\dots,N-1\}}$, that maps any $x  \in  [\alpha_{i}-1/4, \alpha_{i}+1/4]$
 and $y \in [j-1/4,j+1/4]$ to $V_{i,j}$, for all $i \in \{1,\dots,n\}$, $j \in \{0,\dots,N-1\}$.
\end{lemma}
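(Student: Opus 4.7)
The plan is to reduce the two-variable switch to two nested applications of the one-variable switch of Lemma \ref{lem:switch}. First, for each fixed $j \in \{0,\dots,N-1\}$, build
$$F_{j}(x) \;=\; \send\bigl((\alpha_{i} \sendsymbol V_{i,j})\bigr)_{i \in \{1,\dots,n\}}$$
by Lemma \ref{lem:switch}, so that $F_{j}(x) = V_{i,j}$ whenever $x \in [\alpha_{i}-1/4, \alpha_{i}+1/4]$. This leaves the problem of returning $F_{j}(x)$ exactly when $y$ lies in the block $[j-1/4, j+1/4]$; the $y$-variable now plays the role of the single switching variable of Lemma \ref{lem:switch}, with consecutive integer thresholds $0,1,\dots,N-1$.

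For this second switch I would use the same telescoping trick as in Lemma \ref{lem:switch}, but now with coefficients $F_{j}(x)-F_{j-1}(x)$ that depend on $x$: define
$$G(x,y) \;=\; F_{0}(x) \;+\; \sum_{j=1}^{N-1} \signb{y-(j-1)}\,\bigl(F_{j}(x) - F_{j-1}(x)\bigr).$$
Verification is immediate from the definition of $\signb$: for $y \in [j-1/4, j+1/4]$, every term with $k \le j$ has $y-(k-1) \ge 3/4$, so $\signb{y-(k-1)} = 1$; every term with $k>j$ has $y-(k-1) \le 1/4$, so $\signb{y-(k-1)} = 0$. The partial sum telescopes to $F_{j}(x)$, hence $G(x,y) = V_{i,j}$ on the desired rectangle $[\alpha_{i}-1/4,\alpha_{i}+1/4] \times [j-1/4, j+1/4]$.

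There is no serious obstacle to membership in $\manonclasslighttanh$: each $F_{j}$ lies in $\manonclasslighttanh$ by Lemma \ref{lem:switch}, $\signb$ of an affine argument is in $\manonclasslighttanh$ by Lemma \ref{lem:solution:a:tout}, and $G$ is obtained from these by sum, difference and product, all of which are closure operations. The only point requiring a small amount of bookkeeping — deferred to the subsequent $\tanh$-approximation version, Lemma \ref{lem:switch:pairs:tanh} — is error propagation: one replaces each exact $\signb{\cdot}$ by $\signbtanh{2^{m+c}}{\cdot}$ and each exact $\send$ by $\sendtanh$ at precision $2^{m+c}$, so that a single triangle inequality bounds the resulting error by $C_{n,N}\,(\max_{i,j}|V_{i,j}|)\,2^{-m-c}$, and $c$ is chosen depending only on $n$, $N$ and the $V_{i,j}$ so that this is below $2^{-m}$.
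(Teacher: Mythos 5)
Your telescoping identity and its verification are fine as far as exact, ideal sigmoids go: $G(x,y)=F_0(x)+\sum_{j=1}^{N-1}\signb{y-(j-1)}\bigl(F_j(x)-F_{j-1}(x)\bigr)$ does indeed collapse to $F_j(x)=V_{i,j}$ on each rectangle $[\alpha_i-1/4,\alpha_i+1/4]\times[j-1/4,j+1/4]$. But there is a genuine obstruction in the last paragraph: you assert that $G$ ``is obtained from these by sum, difference and \emph{product}, all of which are closure operations.'' Multiplication is deliberately \emph{not} among the closure operations of $\manonclasslighttanh$ --- the class is $[\mathbf{0},\mathbf{1},\projection{i}{k},\length{x},\plus,\minus,\tanh,\frac{x}{2},\frac{x}{3};\textit{composition},\textit{linear length ODE}]$, and the Remark after its definition stresses it is $\linearderivlength$ \emph{without} the multiplication function. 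Indeed the whole point of Section~\ref{sec:simulatingmt} is to ``avoid multiplications.'' Your construction multiplies the non-constant factor $\signb{y-(j-1)}$ by another non-constant factor $F_j(x)-F_{j-1}(x)$, and this product does not survive the $\tanh$-replacement you defer to Lemma~\ref{lem:switch:pairs:tanh}: the resulting expression is not a composition of functions already known to be in $\manonclasslighttanh$, so the triangle-inequality bookkeeping you describe is not the issue --- class membership is.

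The paper's proof avoids this by never multiplying two non-constant functions. It arithmetises the pair $(x,y)$ into the single affine variable $Nx+y$, so that the rectangle $[\alpha_i-1/4,\alpha_i+1/4]\times[j-1/4,j+1/4]$ maps into a small interval around the integer $N\alpha_i+j$, and then applies the \emph{one-variable} switch $\send(N\alpha_i+j\sendsymbol V_{i,j})$ to $Nx+y$. Because $x$ may not be exactly $\alpha_i$, the error would be amplified by $N$, so the paper first snaps $x$ onto the nearest $\alpha_i$ via the one-variable switch $\send(\alpha_i\sendsymbol\alpha_i)(x)$, and feeds $N\cdot\send(\alpha_i\sendsymbol\alpha_i)(x)+y$ into the outer switch. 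Everything reduces to compositions of one-variable switches with affine maps (multiplication by the fixed integer $N$ is just iterated addition), which is exactly the form that can be carried over to the $\tanh$ setting in Lemma~\ref{lem:switch:pairs:tanh} with only a precision constant $c$ to absorb the factor $N$. If you want to salvage your decomposition, you would need to replace each product $\signb{y-(j-1)}\cdot(F_j-F_{j-1})$ by something like $\TTtanh$ from Lemma~\ref{tricksigmoidtanh}; but that lemma requires its second argument to stay in $[0,1]$, so you would also have to normalise the differences $V_{i,j}-V_{i,j-1}$, at which point the paper's $Nx+y$ encoding is simply shorter.
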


\begin{proof}
If we define the function $$\send({(\alpha_{i},j) \sendsymbol V_{i,j}})_{i \in \{1,\dots,n\}, j \in \{1,\dots,N\}}(x,y)$$  by 
$\send(N\alpha_{i}+j \sendsymbol V_{i,j})_{i \in \{1,\dots,n\}, j \in \{1,\dots,N\}}(Nx+y)
$
this works when $x=\alpha_{i}$ for some $i$.
Considering instead $\send(N\alpha_{i}+j \sendsymbol V_{i,j})_{i \in \{1,\dots,n\}, j \in \{1,\dots,N\}}$ $(N
\send(\alpha_{i} \sendsymbol \alpha_{i})_{i \in \{1,\dots,n\}}(x)+y)$ works for any $x  \in  [\alpha_{i}-1/4, \alpha_{i}+1/4]$.

\end{proof}

We then go to versions with $\tanh$:

\begin{proof}[Proof of Lemma \ref{lem:switch:pairs:tanh}]
If we define the function $$\sendtanh(2^{m},{(\alpha_{i},j) \sendsymbol V_{i,j}})_{i \in \{1,\dots,n\}, j \in \{1,\dots,N\}}(x,y)$$ by $\sendtanh(2^{m},N\alpha_{i}+j \sendsymbol V_{i,j})_{i \in \{1,\dots,n\}, j \in \{1,\dots,N\}}(Nx+y)$ this works when $x=\alpha_{i}$ for some $i$.
Considering instead $$\sendtanh(2^{m},N\alpha_{i}+j \sendsymbol V_{i,j})_{i \in \{1,\dots,n\}, j \in \{1,\dots,N\}}(N
\sendtanh(2^{m+c}, \alpha_{i} \sendsymbol \alpha_{i})_{i \in \{1,\dots,n\}}(x)+y)$$ that works for any $x  \in  [\alpha_{i}-1/4, \alpha_{i}+1/4]$, 
for some constant $c$ selected as above. %\olivier{be explicit.}
\end{proof}
}
\CORPSPREUVEDE{\preuvesendpairs}

\section{Simulating Turing machines with functions of $\manonclasslighttanh$ }
\label{sec:simulatingmt}

This section is devoted to the simulation of a Turing machine using some  analytic functions and in particular functions from $\manonclasslighttanh$. We use some ideas from \cite{BlancBournezMCU22vantardise} but with several improvements, as we need to deal with errors and avoid multiplications.

Consider without loss of generality some Turing machine $M= (Q, \{0,1,3\}, q_{\init}, \delta, F)$ using the  symbols $0,\symboleun,\symboledeux$, where $B=0$ is the blank symbol. 
\begin{remark} The reason for the choice of symbols $\symboleun$ and $\symboledeux$ will be made clear later. 
\end{remark}
 We assume $Q=\{0,1,\dots,|Q|-1\}$.  Let 
$ \dots  l_{-k} l_{-k+1} \dots l_{-1} l_{0} r_0 r_1 \dots r_n .\dots$
denote the content of the tape of the Turing machine $M$. In this representation, the head is in front of symbol $r_{0}$, and $l_i, r_{i} \in  \{0,\symboleun,\symboledeux\}$ for all $i$. 
Such a configuration $C$ can be denoted by $C=(q,l,r)$, where $l,r \in \Sigma^{\omega}$ are words over alphabet $\Sigma=\{0, \symboleun,\symboledeux\}$ and $q \in Q$ denotes the internal state of $M$. Write: $\encodagemot: \Sigma^{\omega} \to \R$ for the function that maps a word $w=w_{0} w_{1} w_{2} \dots$ to the dyadic
$\encodagemot(w)= \sum_{n \geq 0} w_n \base^{-(n+1)}$.

The idea is that such a configuration $C$ can also be encoded by some element $\bar C=(q, \bar l,\bar r) \in \N \times \R^{2}$, by considering $\bar r= \encodagemot(r)$ and $\bar l=\encodagemot(l)$. 
In other words,  we encode the configuration of a bi-infinite tape Turing machine $M$ by real numbers using their radix \base{}  encoding, but using only digits $\symboleun$,$\symboledeux$.
Notice that this lives in $Q \times [0,1]^{2}$. Denoting the image of $\encodagemot: \Sigma^{\omega} \to \R$ by $\Image$, this even lives in $Q \times \Image^{2}$. 
\begin{remark}
Notice that $\Image$ is a Cantor-like set: it corresponds to the rational numbers that can be written using only $1$ and $3$ in base $4$. We write $\Image_{S}$ for those with at most $S$ digits after the point (i.e. of the form $n/4^{S}$ for some integer $n$).
\end{remark}

\shortermcu{
A key point is to observe that }

\begin{lemma}
We can construct some function $\bar {\Next}$ in $\manonclasslighttanh$ that simulates one step of $M$: given a configuration $C$, writing $C'$ for the next configuration, we have for all integer $m$, $\| \bar{\Next}(2^{m},\bar C) - \bar C' \| \le 2^{-m}$.
\end{lemma}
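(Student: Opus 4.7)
The construction of $\bar{\Next}$ proceeds in three stages: reading the symbol currently scanned by the head, looking up the transition, and updating the tape encoding. For the first stage, I would extract the leading base-$4$ digit $r_0 \in \{0,1,3\}$ from $\bar r$. The choice of digit set $\{0,1,3\}$ rather than $\{0,1,2\}$ is crucial here: any tail $\bar r'' \in \Image$ lies in $[0,1/2] \cup [3/4,1)$, so $4\bar r \in [r_0, r_0 + 1/2] \cup [r_0 + 3/4, r_0 + 1)$, meaning $4\bar r$ is always within distance $1/2$ of either $r_0$ or $r_0 + 1$, but is uniformly bounded away from $r_0 + 1/2$. Composing the approximate floors $\sigma_1,\sigma_2$ of Corollary~\ref{lem:i} (applied to suitably shifted versions of $4\bar r$) with a call to $\sendtanh$ (Lemma~\ref{lem:switch:tanh}) that sends both ``near $r_0$'' and ``near $r_0+1$'' back to $r_0$, I would produce a real $d$ with $|d - r_0| \le 2^{-m'}$ for any prescribed precision $m'$. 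The leading digit $l_0$ of $\bar l$ is extracted by exactly the same procedure.

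For the second stage, I would apply Lemma~\ref{lem:switch:pairs:tanh} three times to the pair $(q,d)$, with indices ranging over $\{0,\dots,|Q|-1\} \times \{0,1,3\}$, so as to tabulate the three coordinates of the transition $\delta(q,r_0)$: the new state $q'$, the symbol $w \in \{0,1,3\}$ to be written, and a direction flag $\mathrm{dir} \in \{-1,0,1\}$ encoding left/stay/right.

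For the third stage, each of the three possible moves yields a simple closed-form update: ``stay'' gives $\bar l' = \bar l$ and $\bar r' = \bar r + (w-r_0)/4$; ``right'' gives $\bar l' = (\bar l + w)/4$ and $\bar r' = 4\bar r - r_0$; and ``left'' is symmetric, using $l_0$ in place of $r_0$. I would compute all three candidate pairs $(\bar l', \bar r')$ and combine them via $\sendtanh$ driven by $\mathrm{dir}$ (centres $-1,0,1$), obtaining the final triple $(q', \bar l', \bar r')$.

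The main obstacle is the control of error propagation: each multiplication by $4$, each digit extraction, and each invocation of $\sendtanh$ amplifies the current error by a bounded multiplicative constant and introduces an additive term of the chosen precision. However, the whole construction involves only a constant number $K$ of such operations (depending on $|Q|$, $|\Sigma|$ and the description of $M$, but not on $m$ nor on $\bar C$), so it suffices to invoke every internal approximator with precision $2^{-(m+c)}$ for a suitable constant $c = c(K,M)$ large enough to absorb all amplification factors. All primitives used above lie in $\manonclasslighttanh$, and this class is closed under composition, so $\bar{\Next} \in \manonclasslighttanh$.
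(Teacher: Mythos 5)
Your overall architecture (extract the scanned digit, look up the transition, compute the three candidate updates, select among them) is broadly the same as the paper's, and your observation about the gap structure of the base-$4$ encoding is the right one (modulo some sloppiness: the tail of an $\Image$-encoding actually lives in $\{0\}\cup[1/4,1/3]\cup[3/4,1]$, so your interval $[0,1/2]$ is not ``bounded away from $r_0+1/2$'' as written; the true gap is $(1/3,3/4)$). But there is a genuine gap in Stage~3.

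The class $\manonclasslighttanh$ has no multiplication among its basic functions — this is precisely what makes the lemma delicate — yet your selection step implicitly requires multiplying a data-dependent mask by a data-dependent candidate. You propose to ``combine them via $\sendtanh$ driven by $\mathrm{dir}$,'' but $\sendtanh$ (Lemma~\ref{lem:switch:tanh}) maps an integer-valued input to one of finitely many \emph{constants} $V_i$; its output values cannot be the variable quantities $(\bar l', \bar r')$ that depend on $\bar l$ and $\bar r$. Writing the selection as $\sum_k \mathbb{1}[\mathrm{dir}=k]\cdot(\bar l_k',\bar r_k')$ just re-introduces the forbidden product of two non-constant expressions. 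The paper resolves exactly this obstacle with a dedicated multiplication-free conditional: the function $\TTtanh$ of Lemma~\ref{tricksigmoidtanh}, which satisfies $\TTtanh(2^m,d',\ell)\approx 0$ when $d'\approx 0$ and $\TTtanh(2^m,d',\ell)\approx\ell$ when $d'\approx 1$ for $\ell\in[0,1]$, built on top of the piecewise-affine trick $\T(d,\ell)=4\sig(1,2,1/2+d+\ell/4)-2$. The paper then expresses $\bar l'$ and $\bar r'$ as sums of $\TTtanh$-terms whose first argument is a $\sendtanh$-computed indicator and whose second argument is the candidate expression — never multiplying two variables. Your proof as written has no substitute for this step, and without it the construction does not stay inside $\manonclasslighttanh$.

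A secondary point: your state-lookup stage is fine (it matches the paper's use of Lemma~\ref{lem:switch:pairs:tanh}), but be aware that the candidate expressions themselves involve $\lfloor 4\bar l\rfloor$ and $\{4\bar l\}$, which the paper approximates by a bespoke $\sigma(x)=\sig(1/4,3/4,x)+\sig(9/4,11/4,x)$ and $\xi=\mathrm{id}-\sigma$, exploiting the gap $\{0\}\cup[1,2]\cup[3,4]$ directly rather than via a generic approximate-floor $\sigma_i$ plus a $\sendtanh$ correction. Either route can be made to work, but the paper's is more economical and keeps the error bookkeeping simple.
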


\begin{proof}
We can write  $l = l_0 l^\bullet $ and $r = r_0r^\bullet $, where $l_{0}$ and $r_{0}$ are the first letters of $l$ and $r$, and $l^\bullet$ and $r^\bullet$  corresponding to the (possibly infinite) word  $l_{-1} l_{-2} \dots$ and $r_{1} r_{2} \dots$ respectively.

%\vspace{-0.2cm}
	\begin{center}
	\begin{tabular}{c c|c|c|c c}
		\hline 
		... & $l^\bullet $ & $l_0 $ & $r_0$ & $ r^\bullet$ & ... \\ 
		\hline 
		\multicolumn{1}{c}{} & 
		\multicolumn{2}{@{}c@{}}{$\underbrace{\hspace*{\dimexpr6\tabcolsep+3\arrayrulewidth}\hphantom{012}}_{l}$} & 
		\multicolumn{2}{@{}c@{}}{$\underbrace{\hspace*{\dimexpr6\tabcolsep+3\arrayrulewidth}\hphantom{3}}_{r}$}
	\end{tabular} 	
	\end{center}

%	\vspace{-0.21cm}

The function $ {\Next}$ is of the form $\mathit{\Next}(q,l,r) = \mathit{\Next}(q,l^\bullet l_0,r_0r^\bullet) = (q', l', r')$ defined as a definition by case:
\[ (q', l', r') = 
\left\{ \begin{array}{ll}
(q', l^\bullet l_0 x, r^\bullet) & \mbox{whenever $\delta(q,r_{0}) = (q',x, \rightarrow)$} \\
(q', l^\bullet, l_0 x r^\bullet) & \mbox{whenever $\delta(q,r_{0}) = (q',x, \leftarrow)$} \\
\end{array} \right.
\]

This provides a first candidate for the function $\bar{\Next}$: Consider the similar function working over the representation of the configurations as reals, considering $r_{0} =  \lfloor \base \bar{r}\rfloor$ 
\begin{align*}
\mathit{\bar{\Next}}(q,\bar l, \bar r) &= \mathit{\bar{\Next}}(q,\bar{l^\bullet l_0},\bar{r_0r^\bullet}) = (q', \bar{l'}, \bar{r'})\\
&= \left\{ 
\begin{array}{ll}
	(q', \bar{l^\bullet l_0 x}, \bar{r^\bullet}) & \mbox{whenever $\delta(q,r_{0}) = (q',x, \rightarrow)$} \\
  (q', \bar{l^\bullet}, \bar{l_0 x r^\bullet}) & \mbox{whenever $\delta(q,r_{0}) = (q',x, \leftarrow)$} \\
\end{array}
\right.
\end{align*} 

%\vspace{-0.5cm}
\begin{equation}
\begin{array}{l} \label{textaremplacer}
%\mbox{where $r_{0} =  \lfloor \base \bar{r}\rfloor$ and  \hfill } \\
\mbox{
	$\bullet$  in the first case ``$\rightarrow$'' : $\bar{l'} = \base^{-1} \bar l + \base^{-1} x $ and $\bar{r'}  = \bar{r^\bullet} = \{\base \bar r\} $} \\
\mbox{
	$\bullet$ in the second case ``$\leftarrow$'' : $\bar{l'} =\bar{ l^\bullet} = \{\base \bar l \} $ and $\bar{r'} = \base^{-2} \{\base \bar r\}  + \base^{-2} x + \lfloor \base \bar{l}\rfloor / 4$ }\end{array}
\end{equation}

We introduce the following functions: $\rightarrow : Q \times \{0,1,3\} \mapsto \{0,1\}$ and $\leftarrow:Q \times \{0,1,3\} \mapsto \{0,1\}$ such that $\rightarrow(q,a)$ (respectively: $\leftarrow(q,a)$)
is $1$ when $\delta(q,a) = (\cdot,\cdot, \rightarrow)$ (resp. $(\cdot,\cdot, \leftarrow)$), i.e. the head moves right (resp. left), and $0$ otherwise. We define  $nextq^{q}_{a}=q'$ if $\delta(q,a)= (q',\cdot,\cdot)$, i.e. values $(q',x,m)$ for some $x$ and  $m \in \{\leftarrow,\rightarrow\}$.

\manonplusimportant{commenté: 
\[ \rightarrow(q,a) = \left\{ 
\begin{array}{ll}
1 & \mbox{if $\delta(q,a) = (\cdot,\cdot, \rightarrow) \in \Delta$} \\
0 & \mbox{else}
\end{array}
\right.
\]
and 
\[ \leftarrow(q,a) = \left\{ 
\begin{array}{ll}
1 & \mbox{if $\delta(q,a) = (\cdot,\cdot, \leftarrow) \in \Delta$} \\
0 & \mbox{else}
\end{array}
\right.
\]
For $D = \{\rightarrow, \leftarrow \} $, }
%\olivier{AVAIT $r_{0}$ DANS CETTE FORMULE. PROPAGER. MEME DANS MCU JOURNAL.}
We can rewrite $ \mathit{\bar{\Next}}(q,\bar l, \bar r) = (q',\bar l',\bar r')$ as
$$\bar{l'} = \displaystyle \sum_{q,r_{0}} \left[\rightarrow(q, r_{0}) \left( \frac{\bar{l}}{4} + \frac{x}{4} \right) + \leftarrow(q, r_{0}) \left\{ 4 \bar{l}\right\}\right] $$
and
$$\bar{r'} = \displaystyle \sum_{q,r_{0}} \left[\rightarrow(q, r_{0}) \left\{ 4 \bar{r} \right\} + \leftarrow(q, r_{0}) \left( \frac{\left\{ 4r  \right\} }{4^2} + \frac{x}{4^2} + \frac{\lfloor 4 \bar{l} \rfloor }{4}\right)\right],$$
and, using notation of Lemma \ref{lem:switch:pairs:tanh},
$q'=\send((q,r) \sendsymbol nextq^{q}_{r})_{q \in Q, r\in \{0,1,3\}}(q,\lfloor 4 \bar{r} \rfloor ).$

\manonplusimportant{commenté: 
$$\bar{l'} = \sum_{\left. \begin{array}{ll}
	~~~~~~~~q\in Q \\
	\delta(q,r_0) = (q',x,D)\\
	\end{array}
	\right. } \rightarrow(q, r_0) \left( \frac{\bar{l}}{4} + \frac{x}{4} \right) + \leftarrow(q, r_0) \left\{ 4 \bar{l}\right\} $$

$$\bar{r'} =  \sum_{ \left. \begin{array}{ll}
	~~~~~~~~q\in Q \\
	\delta(q,r_0) = (q',x,D)\\
	\end{array}
	\right. } \rightarrow(q, r_0) \left\{ 4 \bar{r} \right\} + \leftarrow(q, r_0) \left( \frac{\left\{ 4r \right\} }{4^2} + \frac{x}{4} + \lfloor 4 \bar{l} \rfloor \right) $$
}

Our problem with such expressions is that they involve some discontinuous functions such as the integer part and the fractional part function, and we would rather have analytic (hence continuous) functions.
However, a key point is that from our trick of using only symbols $\symboleun$ and $\symboledeux$, we are sure
that in an expression like $\lfloor 4 \bar r \rfloor$, either it values $0$ (this is the specific case where there remain only blanks in $r$), or that $\base \bar r$ lives in an interval
$[\symboleun,2]$ or in interval $[\symboledeux,4]$. 

That means that we could replace $\lfloor \base \bar r \rfloor$ by $\sigma(\base \bar r)$
 if we take $\sigma$ as some continuous function that would be affine and values respectively $0$, $1$ and $3$ on $\{0\} \cup [\symboleun,2] \cup [\symboledeux,4]$
 (that is to say matches $\lfloor \base \bar r \rfloor$ on this domain). A possible candidate is $\sigma(x)=\sig(1/4,3/4,x) + \sig(9/4,11/4,x)$.  
  Then considering $\xi(x)=x-\sigma(x)$, then  $\xi(\base \bar r)$ would be the same as $\left\{ \base \bar r \right\}$:  
that is, considering  $r_{0}= \sigma(4\bar{r})$,  replacing in the above expression every $\left\{ \base \cdot \right\}$ by $\xi(\cdot)$, and every $\lfloor \cdot \rfloor$ by $\sigma( \cdot)$, and get something that would still work the same, but using only %piecewise 
continuous functions.

But,  we would like to go to some analytic functions and not only continuous functions, and it is well-known that an analytic function that equals some affine function on some interval (e.g. on [1,2]) must be affine, and hence cannot be $3$ on $[3,4]$. But the point is that we can try to tolerate errors, and replace $\sig(\cdot,\cdot)$ by $\sigtanh(2^{m+c},\cdot,\cdot)$ in the expressions above for $\sigma$ and $\xi$, taking $c$ such that $(3+1/4^{2})3|Q|  \le 2^{c}$. This would just introduce some error at most  $(3+1/4^{2})3|Q| 2^{-c}2^{-m}\le 2^{-m}$.

\olivierplusimportant{Mieux expliquer cela: il faudrait etre plus explicite}
\begin{remark}  We could also replace every $\rightarrow(q,r)$ in above expressions for $\bar l'$ and $\bar r'$ by $\sendtanh(k, (q,r) \sendsymbol \rightarrow(q,r))(q,\sigma(4 \bar r))$, for a suitable error bound $k$, and symmetrically for $\leftarrow(q,r)$.
However, if we do so, we still might have some multiplications in the above expressions. 
\end{remark}
The key is to use Lemma \ref{tricksigmoidtanh}: we can also write the above expressions as
$$
\begin{array}{ll}
\bar{l'} =   \sum_{q,r}& \left[ 
\TTtanh\left(2^{m+c},\sendtanh(2^{2},(q,r) \sendsymbol \rightarrow(q,r)) (q,\sigma(4 \bar r)), 
 \frac{\bar{l}}{4} + \frac{x}{4} \right) \right. \\
&+ \left.
\TTtanh\left(2^{m+c},\sendtanh(2^{2},(q,r) \sendsymbol \leftarrow(q,r)) (q,\sigma(4 \bar r)),
\xi( 4 \bar{l}) \right) \right]
\end{array}
$$
$$
\begin{array}{ll}
\bar{r'} =   \sum_{q,r}& \left[ 
\TTtanh\left(2^{m+c},\sendtanh(2^{2},(q,r) \sendsymbol \rightarrow(q,r)) (q,\sigma(4 \bar r)),
\xi( 4 \bar{r} ) \right) \right. \\
&+ \left.
\TTtanh\left(2^{m+c},\sendtanh(2^{2},(q,r) \sendsymbol \leftarrow(q,r)) (q,\sigma(4 \bar r)),
\frac{\xi( 4r )}{4^2} + \frac{x}{4^2} + \frac{\sigma(4 \bar{l})}{4} \right) \right]
\end{array}
$$
and still have the same bound on the error.
%\olivierpasimportant{Plus besoin:
%The fact that these are ideal neural functions follows from Lemma \ref{lem:sigmoids}. 
%}
\end{proof}

%\vspace{-0.2cm}
Once we have one step, we would like to simulate some arbitrary computation of a Turing machine, by considering the iterations of function $\Next$. 
%\olivier{Ce qui est moche c'est que on utilise des fois $\ell$ et des fois $|.|$ pour la longueur. en dehors du fait que la valeur absolue .... existe aussi. Ne pas utiliser la valeur absolue pour longueur: vérifeir que rien oublié.}

The problem with the above construction is that even if we start from the exact encoding $\bar C$ of a configuration, it  introduces some error (even if at most $2^{-m}$). If we want to apply again the function $\Next$, then we will start not exactly from the encoding of a configuration. Looking at the choice of the function $\sigma$, a small error can be tolerated (roughly if the process does not involve points at a distance less than $1/4$ of $\Image$), but this error is amplified (roughly multiplied by $4$ on some component), before introducing some new errors (even if at most $2^{-m}$).  The point is that if we repeat the process, very soon it will be amplified, up to a level where we have no true idea or control about what becomes the value of the above function. 

However, if we know some bound on the space used by the Turing machine, we can correct it to get at most some fixed additive error: a Turing machine using a space $S$ uses at most $S$ cells to the right and to the left of the initial position of its head. Consequently, a configuration $C=(q,l,r)$ of such a machine involves words $l$ and $r$ of length at most $S$. Their encoding $\bar l$, and $\bar r$ are expected to remain in $\Image_{S+1}$. Consider $\round_{S+1}(\overline{l})=\lfloor 4^{S+1} \bar l   \rfloor / 4^{S+1}.$ For a point $\overline{l}$ of $\Image_{S+1}$,  $4^{S+1} \bar l$ is an integer, and $\overline{l} = \round_{S+1}(\overline{l})$. But now, for a point $\tilde{\overline{l}}$ at a distance less than $4^{-(S+2)}$ from a point $\overline{l} \in \Image_{S+1}$,  $\round_{S+1}(\tilde{\overline{l}})=\overline{l}$. In other words,
$\round_{S+1}$ ``deletes'' errors of order $4^{-(S+2)}$.
Consequently, we can replace every $\bar l$ in the above expressions by $\sigma_1(2^{2S+4},2^{2 S+3}, 4^{S+1} \bar l) / 4^{S+1},$
as this is close to $\round_{S+1}(\bar l)$, and the same for $\bar r$, 
where $\sigma_{1}$ is the function from Corollary \ref{lem:i}. 
We could also replace $m$ by $m+2S+4$ to guarantee that $2^{-m} \le 4^{-(S+2)}$. 
We get the following important improvement of the previous lemma:

%\olivierpourmanon{NEW NEW NEW}
%
%\olivierpourmanon{Est-ce qu'on comrpend le ``up to space S''}

\begin{lemma}
We can construct some function $\bar {\Next}$ in $\manonclasslighttanh$ that simulates one step of $M$, i.e. that computes the $\Next$ function sending a configuration $\bar C$ of Turing machine $M$ to $\bar C'$, where $C'$ is the next one:  $\| \Next(2^{m}, 2^{S},\bar C) - \bar C' \| \le 2^{-m}$.  

Furthermore, it is robust to errors on its input, up to space $S$: considering $\|\tilde{C}-\bar C\| \le 4^{-(S+2)}$, $\| \Next(2^{m}, 2^{S},\tilde C) - \bar C' \| \le 2^{-m}$ remains true.
\end{lemma}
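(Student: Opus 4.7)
The plan is to modify the construction from the previous lemma by pre-composing it with a rounding step that absorbs input errors up to $4^{-(S+2)}$. Following the discussion preceding the statement, any configuration of $M$ using at most space $S$ has encoding $(\bar l,\bar r)\in\Image_{S+1}^2$, so the scaled values $4^{S+1}\bar l$ and $4^{S+1}\bar r$ are integers with absolute value at most $4^{S+1}\le 2^{2S+2}$. If $\tilde{\bar l}$ satisfies $|\tilde{\bar l}-\bar l|\le 4^{-(S+2)}$, then $4^{S+1}\tilde{\bar l}$ lies within $1/4$ of the integer $4^{S+1}\bar l$, exactly in the regime where the function $\sigma_1$ of Corollary \ref{lem:i} approximates the floor.

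Concretely, I would define the ``cleanup'' operator
\[
\round^{*}(z)\;=\;\frac{\sigma_{1}(2^{m+2S+4},\,2^{2S+3},\,4^{S+1}z)}{4^{S+1}},
\]
which, by the specification of $\sigma_{1}$, satisfies $|\round^{*}(\tilde{\bar l})-\bar l|\le 2^{-(m+2S+4)}/4^{S+1}$ whenever $\tilde{\bar l}$ is within $4^{-(S+2)}$ of some $\bar l\in\Image_{S+1}$, and analogously for $\bar r$. This is essentially an exact recovery of the ideal encoding, with a residual error much smaller than $2^{-m}$.

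Next, I would plug $\round^{*}(\tilde{\bar l})$ and $\round^{*}(\tilde{\bar r})$ in place of $\bar l$ and $\bar r$ into the expressions for $\bar l'$, $\bar r'$ and $q'$ given in the proof of the previous lemma, using the enlarged precision parameter $2^{m+2S+4}$ inside every occurrence of $\sigtanh$, $\TTtanh$ and $\sendtanh$. Because $\round^{*}$, $\sigma$, $\xi$, $\sendtanh$, $\TTtanh$, together with the previously constructed $\bar{\Next}$, all belong to $\manonclasslighttanh$, and this class is closed under composition, the resulting function lies in $\manonclasslighttanh$.

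The main obstacle is a careful accounting of how the residual error propagates. The cleanup yields inputs that differ from the exact $\bar l$, $\bar r$ by at most $2^{-(m+2S+4)}/4^{S+1}$; after being multiplied by $4$ inside $\xi(4\cdot)$ and $\sigma(4\cdot)$, and combined with at most $O(|Q|)$ additive analytic-sigmoid approximations each of precision $2^{-(m+2S+4)}$, the cumulative error on each of $\bar l'$ and $\bar r'$ is bounded by a constant (depending only on $|Q|$) times $2^{-(m+2S+4)}$, which is at most $2^{-m}$ once the precision parameter is enlarged by an extra constant $c$ as in the proof of Lemma~\ref{lem:switch:tanh}. The discrete output $q'$ is produced via $\sendtanh$ at the same enlarged precision, which likewise guarantees error at most $2^{-m}$. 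Since the case $\|\tilde C-\bar C\|=0$ is included, the first claim of the lemma is a particular instance of the robust claim, concluding the proof.
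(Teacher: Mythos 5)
Your proposal is correct and follows essentially the same route as the paper: observe that a space-$S$ configuration encodes into $\Image_{S+1}$, pre-compose the $\bar{\Next}$ expressions with a $\sigma_1$-based approximate rounding $z\mapsto\sigma_1(\cdot,2^{2S+3},4^{S+1}z)/4^{S+1}$ (the paper's $\round_{S+1}$), and enlarge the precision parameter by roughly $2S+4$ bits so the residual analytic error stays below $2^{-m}$. The only superficial difference is that you fold the $m$-dependence directly into the precision argument of the cleanup itself, which is a harmless (slightly more conservative) variant of the paper's choice.
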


\begin{proposition} \label{prop:deux} 
Consider some Turing machine $M$ that computes some function $f: \Sigma^{*} \to \Sigma^{*}$ in some time $T(\ell(\omega))$ on input $\omega$.  One can construct some function $\tilde{ f}: \N^{2} \times \R \to \R$ in $\manonclasslighttanh$ that does the same: $\tilde{ f}(2^{m}, 2^{T(\ell(\omega))},
\encodagemot(\omega))$ that is at most $2^{-m}$ far from $\encodagemot(f(\omega))$.
\end{proposition}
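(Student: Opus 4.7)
The plan is to iterate the one-step simulator $\bar{\Next}$ for $T(\ell(\omega))$ steps by means of a linear length ODE, exploiting the robustness clause to prevent error accumulation.

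First I would fix an internal precision $2^{m'}$, computed from the two input arguments as $2^{m'} = 2^m \cdot 2^{2T+4}$, so that $2^{-m'} \le \min(2^{-m},\, 4^{-(T+2)})$. Since the machine's space usage $S$ never exceeds its running time $T$, the quantity $4^{-(T+2)}$ is below the robustness threshold $4^{-(S+2)}$ of the previous lemma at every step. Next I would define the trajectory function $G$ by the length ODE
\[
G(2^{0},\cdot) = (q_{\init},\, 0,\, \encodagemot(\omega)), \qquad G(2^{t+1}, \cdot) = \bar{\Next}\bigl(2^{m'},\, 2^{T},\, G(2^{t},\cdot)\bigr).
\]
To see that this fits Definition \ref{def:linear lengt ODE}, I would observe that $\bar{\Next}$ is in fact \emph{essentially constant} in its configuration argument: every occurrence of that argument passes through a $\basictanh$ (via $\sendtanh$, the rounding sigmoids, or $\TTtanh$) and hence contributes degree $0$ to the \polynomialbtanh{} expression. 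The update map is therefore trivially essentially linear in $G(2^{t},\cdot)$, the scheme is a bona fide linear length ODE, and $G \in \manonclasslighttanh$.

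The correctness then follows by a short induction on $t \le T$: writing $\bar{C}_t$ for the exact encoding of the $t$-th configuration of $M$, one shows $\|G(2^{t}, \cdot) - \bar{C}_t\| \le 2^{-m'}$. The base case is exact; for the inductive step, the hypothesis gives input error at most $2^{-m'} \le 4^{-(T+2)}$, so the robustness clause of the preceding lemma yields $\|G(2^{t+1}, \cdot) - \bar{C}_{t+1}\| \le 2^{-m'}$. Taking $\tilde{f}(2^m, 2^T, \encodagemot(\omega))$ to be the projection of $G(2^{T}, \cdot)$ onto the coordinate storing the output tape then gives an answer within $2^{-m'} \le 2^{-m}$ of $\encodagemot(f(\omega))$, as required.

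The main obstacle I anticipate is the technical bookkeeping around building $2^{m'}$ from $2^m$ and $2^T$ using only operations available in $\manonclasslighttanh$ (this amounts to a composition of a handful of linear length ODEs producing $n \mapsto 2^{p(n)}$ as sketched after Corollary~2.4), together with the exact indexing of the length ODE: since $\ell(2^T) = T+1$, one must phrase the recursion as yielding configuration $t$ at the point $\ell(x) - 1 = t$, or equivalently reinterpret the initial condition so that the induction terminates at the right step. Once these routine adjustments are carried out, every remaining step is a direct instantiation of the robustness property that was explicitly engineered into $\bar{\Next}$ for precisely this use.
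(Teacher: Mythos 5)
Your proof takes essentially the same route as the paper's: define $\bar{Exec}$ by a linear length ODE that iterates $\bar{\Next}$ starting from the initial configuration $(q_{\init},0,\encodagemot(\omega))$, use the robustness of $\bar{\Next}$ to prevent error blow-up over $T$ iterations, and project onto the output component. Your treatment of the internal precision $m'=m+2T+4$ is a welcome explicit handling of a point the paper leaves implicit (it only remarks ``we could also replace $m$ by $m+2S+4$'').

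One factual slip, however. You assert that $\bar{\Next}$ is \emph{essentially constant} in its configuration argument because ``every occurrence of that argument passes through a $\basictanh$.'' That is not how the pieces are built: $\sigtanh(z,a,b,x)=\frac{(x-a)Y(x-a,\cdot)-(x-b)Y(x-b,\cdot)}{b-a}$ with $Y(u,K)=\frac{1+\tanh(Ku)}{2}$, so $x$ appears \emph{outside} the $\tanh$ in the factors $(x-a)$ and $(x-b)$, and $\deg(x,\sigtanh)=1$, not $0$. The same is true of $\TTtanh$ in its last argument, of $\sendtanh$, and of the rounding functions $\sigma_1$. Consequently $\bar{\Next}$ is not essentially constant in $(\bar l,\bar r)$; it is \emph{essentially linear} (each of these building blocks is of the form $A\cdot x+B$ with $A,B$ of degree $0$, and this form is preserved by the linear compositions used in the $\bar{l'},\bar{r'}$ expressions). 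Your downstream conclusion --- that the scheme is a bona fide linear length ODE --- is therefore correct, but the reasoning you give for it does not actually establish it; you should verify essential \emph{linearity} directly rather than deducing it from an (incorrect) claim of essential constancy.
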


\begin{proof}
The idea is to define the function $\bar {Exec}$ that maps some time $2^{t}$ and some initial configuration $C$ to the configuration  at time $t$.  This can be obtained  using previous lemmas by 
$$
\left\{
\begin{array}{lll}
 \bar {Exec}(2^{m},0,2^{T},C)  &=&  C  \\
\bar {Exec} 
  (2^{m}, 2^{t+1},  2^{T},C) 
 &=& \bar{\Next}(2^{m},2^{T},\bar {Exec}(2^{m},2^{t},2^{T},C)).
 \end{array}\right
 .$$
%\end{align*} 
%where $T$ is some upperbound on the computation time. 
	
We can then get the value of the computation as $\bar {Exec}(2^{m}, 2^{T(\ell(\omega))},2^{T(\ell(\omega))},C_{\init})$
	on input $\omega$, considering $C_{\init}=(q_{0},0,\encodagemot(\omega))$.
	By applying some projection, we get the following function
	$\tilde{ f}(2^{m},2^{T},y)= \projection{3}{3}(\bar {Exec}(2^{m}, 2^{T},2^{T}, (q_{0},0,y)))$ that satisfies the property.	\end{proof}

%\vspace{-0.1cm}
	
Actually, in order to get $\FPspace$, observe that we can also replace the linear length ODE by a linear ODE.

\begin{proposition} \label{prop:trendeux} %\label{prop:deux} 
Consider some Turing machine $M$ that computes some function $f: \Sigma^{*} \to \Sigma^{*}$ in some polynomial space $S(\ell(\omega))$ on input $\omega$.  One can construct some function $\tilde{ f}: \N^{2} \times \R \to \R$ in $\spaceclasstanh$ that does the same: we have $\tilde{ f}(2^{m}, 2^{S(\ell(\omega))},
\encodagemot(\omega))$ that is at most $2^{-m}$ far from $\encodagemot(f(\omega))$.
\end{proposition}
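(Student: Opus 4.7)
The plan is to mimic the proof of Proposition~\ref{prop:deux}, with the linear length ODE iteration of $\bar{\Next}$ replaced by a \emph{robust} linear ODE (in the sense of Definition~\ref{schema:space}). The benefit is that a genuine ODE integrated up to real time $2^{O(S(\ell(\omega)))}$ can simulate the $2^{O(S(\ell(\omega)))}$ steps that an $S$-space Turing machine may need, while staying inside $\spaceclasstanh$.

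First, I would verify the key algebraic observation that enables a robust linear-ODE iteration: in the construction of $\bar{\Next}(2^m,2^S,\bar C)$, every occurrence of the configuration components $(q,\bar l,\bar r)=\bar C$ appears inside one of the $\tanh$-based combinators $\sigtanh$, $\sendtanh$, $\TTtanh$ (or the built-from-them $\sigma,\xi$). By the rules for $\deg$ in Definition~\ref{def:essentiallylinear}, such subterms contribute $0$ to the degree, so $\bar{\Next}$ is \emph{essentially constant} in $\bar C$ as a $\basictanh$-polynomial. Then I would produce a clock function $\phi(t)\in\manonclasslighttanh$, roughly periodic of period $1$, supported near half-integers, with integral over each $[k,k+1]$ much larger than a polynomial in $m+S$, built by composing the bestiary of Corollaries~\ref{lem:xi}--\ref{lem:div2} applied to the first variable.

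I would then define $\bar{Exec}$ as the solution of the system
\begin{equation*}
\bar{Exec}(2^m,0,2^S,C)=C,\qquad \frac{\partial\bar{Exec}}{\partial t}=\phi(t)\,\bigl(\bar{\Next}(2^{m+c},2^S,\bar{Exec})-\bar{Exec}\bigr),
\end{equation*}
where $c$ is a polynomial-sized precision shift. The right-hand side is essentially linear in $\bar{Exec}$: the term $-\phi(t)\,\bar{Exec}$ is linear in $\bar{Exec}$ with coefficient essentially constant in it, while $\phi(t)\,\bar{\Next}(\cdots,\bar{Exec})$ is essentially constant in $\bar{Exec}$ by the previous paragraph. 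The candidate $\tilde f$ is obtained exactly as in the proof of Proposition~\ref{prop:deux} by projecting the third coordinate of $\bar{Exec}(2^m,2^{c_0 S(\ell(\omega))},2^{S(\ell(\omega))},(q_0,0,y))$, where $c_0$ is chosen so that $2^{c_0 S}$ exceeds the worst-case runtime $2^{O(S)}$ of the machine on space $S$.

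The main obstacle is verifying the polynomial numerical stability clause of Definition~\ref{schema:space}. I would argue it by induction on the integer time $k\le 2^{c_0 S(\ell(\omega))}$: assuming $\bar{Exec}(2^m,k,2^S,C)$ lies within $4^{-(S+2)}$ of the exact configuration $\bar C_k$ of the machine, the error-correction property of $\bar{\Next}$ from the previous lemma (inherited from using $\sigma_1$ as an approximation of $\round_{S+1}$) gives $\bar{\Next}(2^{m+c},2^S,\bar{Exec}(2^m,k,\dots))=\bar C_{k+1}$ up to error $2^{-(m+c)}$; on the interval $[k,k+1]$ the target $\bar C_{k+1}$ is essentially constant and $\phi$ produces a near-complete exponential relaxation towards it, so $\bar{Exec}(2^m,k+1,\dots)$ is again within $4^{-(S+2)}$ of $\bar C_{k+1}$ provided $c$ is polynomial in $m+S$. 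The same estimate propagates perturbations of magnitude $2^{-p(n+\ell(\tu y))}$ on the inputs, on $\phi$, or on the vector field, into perturbations of the same order along the whole orbit, which is exactly the polynomial numerical stability required. Putting everything together yields $\tilde f\in\spaceclasstanh$ with $|\tilde f(2^m,2^{S(\ell(\omega))},\encodagemot(\omega))-\encodagemot(f(\omega))|\le 2^{-m}$, as desired.
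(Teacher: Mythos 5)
Your high-level plan is right: replace the length-ODE iteration of $\bar{\Next}$ from Proposition~\ref{prop:deux} by a robust linear ODE in $t$, using that $\bar{\Next}$ is essentially constant in $\bar{C}$ (its argument only enters through $\sigtanh$, $\sendtanh$, $\TTtanh$, so it contributes degree $0$), and derive numerical stability from the rounding step. But the clock-function construction rests on a misreading of the framework. In this paper, \emph{all} ODEs are discrete: $\frac{\partial \tu f}{\partial t}$ is by definition the finite difference $\tu f(t+1,\cdot)-\tu f(t,\cdot)$, and Definition~\ref{schema:space} is a recurrence with a numerical-stability clause, not an initial value problem in continuous time. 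There is no real time, no intervals $[k,k+1]$ over which to integrate, and no ``exponential relaxation'' towards a target; inventing a periodic $\phi(t)$ with a large integral over each unit interval has no meaning here and is entirely unnecessary.

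The paper's proof is the degenerate case $\phi\equiv 1$ of your scheme, written directly as a recurrence:
\[
\bar{Exec}(2^m,0,2^S,C)=C,\qquad
\bar{Exec}(2^m,t+1,2^S,C)=\bar{\Next}(2^m,2^S,\bar{Exec}(2^m,t,2^S,C)),
\]
i.e.\ the discrete ODE $\frac{\partial \bar{Exec}}{\partial t} = \bar{\Next}(2^m,2^S,\bar{Exec})-\bar{Exec}$, which is essentially linear in $\bar{Exec}$ for exactly the reason you observe. The polynomial numerical stability is then a one-line consequence of the rounding built into $\bar{\Next}$: to obtain $\bar{Exec}$ to precision $2^{-n}$ it suffices to work everywhere at precision $4^{-\max(m,n,S+2)}$, uniformly in $t$, because the rounding re-contracts errors at every step; no induction over $k$ or analysis of an auxiliary vector field is required. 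Your $c_0$ concern (running for $2^{c_0 S}$ rather than $2^S$ steps so as to cover the true worst-case runtime) is a fair and slightly more careful point than what the paper writes, but it is orthogonal to the ODE mechanism and can be folded in by mild padding of $S$.
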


\begin{proof}
The idea is the same, but not working with powers of $2$, and with linear ODE:  define the function $\bar {Exec}$ that maps some time $t$ and some initial configuration $C$ to the configuration  at time $t$.  This can be obtained the using previous lemma by 
$$
\left\{
\begin{array}{lll} 
 \bar {Exec}(2^{m},0,2^{S},C) &=& 
 C \\
\bar {Exec} 
  (2^{m}, {t+1},  2^{S},C) 
 &=& \bar{\Next}(2^{m},2^{S},\bar {Exec}(2^{m},{t},2^{S},C)).
 \end{array}\right.
 $$
 
 In order to claim this is a robust linear ODE, we need to state that  ${Exec} 
  (2^{m}, {t},  2^{S},C)$ is polynomially numerically stable: but this holds, since to estimate this value at $2^{-n}$  it is sufficient to work at precision $4^{-max(m,n,S+2)}$ (independently of $t$, from the  rounding).
   
We can then get the value of the computation as $\bar {Exec}(2^{m}, 2^{S(\ell(\omega))},2^{S(\ell(\omega))},C_{\init})$
	on input $\omega$, considering $C_{\init}=(q_{0},0,\encodagemot(\omega))$.
	By applying some projection, we get the following function
	$\tilde{ f}(2^{m},2^{S},y)= \projection{3}{3}(\bar {Exec}(2^{m}, S,2^{S}, (q_{0},0,y)))$ that satisfies the property.	
\end{proof}

\section{Converting integers and dyadics to words, and conversely}
\label{sec:convertion}

One point of our simulations of Turing machines is that they work over $\Image$, through encoding $\encodagemot$, while we would like to talk about integers and real numbers: we need to be able to convert an integer (more generally a dyadic) into some  encoding over $\Image$ and conversely. 

Fix the following encoding: every digit in the binary expansion of $d$  is encoded by a pair of symbols in the radix $4$ expansion of $\overline{d} \in \Image \cap [0,1]$: digit $0$ (respectively: $1$) is encoded by $11$ (resp. $13$) if before the ``decimal'' point in $d$, and digit $0$ (respectively: $1$) is encoded by $31$ (resp. $33$) if after. For example, for $d=101.1$ in base $2$, $\overline{d}=0.13111333$ in base $4$. 

%By iterating $\ell(n)$ times the function 
%$$
%F( \overline{r_1}, \overline{l_2}) = 
%\left\{
%\begin{array}{ll} 
%(\div_{2}(\overline{r_1}), (\overline{l_2} + 5)/4) &  \mbox{ whenever } \mod_{2}( \overline{r_1})=0 \\
%(\div_{2}(\overline{r_1}), (\overline{l_2} + 7)/4) &  \mbox{ whenever } \mod_{2}( \overline{r_1})=1. \\
%\end{array}\right. 
%$$
%%\vspace{-0.1cm}
%over $(n,0)$, and then projecting on the second argument, we can prove:

\begin{lemma}[{From $\N$ to $\Image$}]  \label{lem:manquant} We can construct some function $\Decode: \N^{2} \to \R$ in $\manonclasslighttanh$ that maps $m$ and $n$ to some point at distance less than $2^{-m}$ from $\encodagemot(\overline{n})$.
\end{lemma}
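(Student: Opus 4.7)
The plan is to simulate, via a linear length-ODE, the peeling of binary digits of $n$ (from least to most significant), while accumulating the corresponding base-$4$ encoding in a second variable. Writing $n = \sum_{i<k} b_{i}\, 2^{i}$ in binary with $k = \length{n}$, and setting $R_{0} = n$, $R_{i+1} = R_{i}// 2$, $b_{i} = R_{i} \bmod 2$, a direct computation shows that $\encodagemot(\overline{n}) = H_{k}$, where $H_{0} = 0$ and
\[
H_{i+1} \;=\; \frac{H_{i}}{16} + \frac{1}{4} + \frac{1 + 2\, b_{i}}{16},
\]
the update corresponding geometrically to prepending the pair ``$11$'' or ``$13$'' (encoding $b_{i}=0$ or $1$) in front of the leading base-$4$ digits accumulated so far.

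Next, I would realise this joint recurrence as a linear length-ODE in $\manonclasslighttanh$. Using the approximate functions $\mod_{2}$ and $\div_{2}$ from Corollaries~\ref{lem:mod2} and \ref{lem:div2}, I would define $F(x, m, n) = (r, h)$ by
\[
F(2^{0}, m, n) = (n, 0), \quad F(2^{t+1}, m, n) = \Bigl(\div_{2}(2^{m+2}, 2^{\length{n}}, r),\ \tfrac{h}{16} + \tfrac{1}{4} + \tfrac{1 + 2\,\mod_{2}(2^{m+2}, 2^{\length{n}}, r)}{16}\Bigr).
\]
This fits the linear length-ODE scheme of Definition~\ref{def:essentiallylinear}: the new $r$-component is essentially constant in $(r,h)$, since $\div_{2}$ and $\mod_{2}$ are built from $\basictanh$ expressions and hence contribute degree $0$; the new $h$-component is affine in $h$. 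I would then set $\Decode(2^{m}, n) = \pi_{2}(F(n, m, n))$; passing $n$ as the length parameter of the ODE forces $\length{n}$ iterations (any immaterial off-by-one can be absorbed by duplicating the first argument or adjusting the base case).

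The last step is the error analysis, which I expect to be the main obstacle. The crucial observation is that errors do not accumulate multiplicatively across iterations: as soon as $r_{i}$ is within $1/4$ of the true integer $R_{i}$, the guarantee of $\div_{2}$ ensures $|r_{i+1} - R_{i+1}| \le 2^{-(m+2)} < 1/4$, regardless of the previous error. Thus the integer component is re-rounded at every step, and the hypothesis on the input of $\div_{2}, \mod_{2}$ is maintained inductively. The $h$-component then satisfies a contracting error recurrence of the form $\varepsilon_{i+1} \le \varepsilon_{i}/16 + 2 \cdot 2^{-(m+2)}/16$, whose solution is bounded by $2^{-(m+2)} \cdot \tfrac{16}{15} < 2^{-m}$, uniformly in $i$. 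Taking $i = \length{n}$ yields the desired bound $|\Decode(2^{m}, n) - \encodagemot(\overline{n})| < 2^{-m}$.
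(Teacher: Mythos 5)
Your proof is correct and follows essentially the same strategy as the paper's: iterate $\length{n}$ times via a linear length ODE, peeling the least significant bit of an integer register with (approximate) $\div_{2}$ and $\mod_{2}$, and prepending the corresponding base-$4$ pair to an accumulator. Your update formula $H_{i+1}=(H_{i}+5+2b_{i})/16$ is the right one (equivalently $(H_{i}+5)/16$ or $(H_{i}+7)/16$), and expressing the case split as the affine term $\tfrac{1+2\mod_{2}(\cdot)}{16}$ is a pleasant simplification of the paper's use of $\sendtanh$ — both fit the scheme since multiplying $\mod_{2}$ by a fixed integer is just repeated addition. There are two points worth tightening. First, $\div_{2}(\cdot,\cdot,r)=\tfrac12\bigl(\sigma_{1}(\cdot,\cdot,r)-\mod_{2}(\cdot,\cdot,r)\bigr)$ with $\sigma_{1}(\cdot,\cdot,r)=r-\xi_{1}(\cdot,\cdot,r)$, so it is \emph{essentially linear} (degree $1$) in $r$, not essentially constant as you claim; fortunately essential linearity is exactly what Definition~\ref{def:essentiallylinear} requires, so your conclusion stands even though the stated reason is off. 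Second, your error analysis is genuinely sharper than the paper's: you exploit the fact that $\div_{2}$ re-rounds the integer register at every step (so its error stays at $2^{-(m+2)}\le\tfrac14$, never accumulating) and that the accumulator error recurrence contracts by a factor $1/16$, yielding a uniform bound well below $2^{-m}$ with a \emph{fixed} precision parameter $2^{m+2}$. The paper instead works at precision $2^{m+\length{n}}$ and settles for the cruder additive bound $\length{n}\cdot 2^{-m-\length{n}}\le 2^{-m}$; both are valid, but yours makes the non-accumulation explicit, which is the conceptually important point.
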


\newcommand\preuvetrenteneuf{
\begin{proof} \label{trenteneuf} Recall the functions provided by Corollaries \ref{lem:mod2} and \ref{lem:div2}.
The idea is to iterate $\ell(n)$ times function 
$$
F( \overline{r_1}, \overline{l_2}) = 
\left\{
\begin{array}{ll} 
(\div_{2}(\overline{r_1}), (\overline{l_2} + 5)/4) &  \mbox{ whenever } \mod_{2}( \overline{r_1})=0 \\
(\div_{2}(\overline{r_1}), (\overline{l_2} + 7)/4) &  \mbox{ whenever } \mod_{2}( \overline{r_1})=1. \\
\end{array}\right. 
$$
over $(n,0)$, and then projects on the second argument.   

This can be done in $\manonclasslighttanh$ by considering 
$F(2^{m},2^{M},\overline{r_1}, \overline{l_2}) =\sendtanh(2^{m+1},$ \\
$0 \sendsymbol (\div_{2}(2^{m+1},2^{M},\overline{r_1}), (\overline{l_2} + 5)/4),$ \\
$1 \sendsymbol (\div_{2}(2^{m+1},2^{M},\overline{r_1}), (\overline{l_2} + 7)/4)) (( \overline{r_1}),$ and then
$$\Decode(2^{m},n)= \projection{2}{2}(G(2^{m+\ell(n)},2^{\ell(n)+1},2^{\ell(n)},n,0)),$$ with

$$\left\{ 
\begin{array}{lll} 
G(2^{m},2^{M},2^{l},2^{0},n,0) &=& (n,0) \\
 G (2^{m}, 2^{M},2^{l+1},r,l)  
 &=&F(2^{m},2^{M},G(2^{m},2^{l},r,l)).
 \end{array}\right.$$
 The global error will be at most $2^{-m-\ell(n)} \times \ell(n) \le 2^{-m}$.
\end{proof}
}
\CORPSPREUVEDE{\preuvetrenteneuf}

This technique can be extended to consider decoding of tuples: there is a function $\Decode: \N^{d+1} \to \R$ in $\manonclasslighttanh$ that maps $m$ and $\tu n$ to some point at distance less than $2^{-m}$ from $\encodagemot(\overline{\tu n})$, with $\overline{\tu n}$ defined componentwise.

Conversely, given $\overline{d}$, we need a way to construct $d$. Actually, as we will need to avoid multiplications, we state that we can even do something stronger: given $\overline{d}$, and (some bounded) $\lambda$ we can construct $\lambda d$.

\begin{lemma}[{From $\Image$ to $\R$}, and multiplying in parallel]  \label{lem:codage:manon}
We can construct some  function $\EncodeMul: \N^{2} \times [0,1] \times \R \to \R$ in $\manonclasslighttanh$ that  maps $m$, $2^{S}$, $\encodagemot(\overline{d})$  and (bounded) $\lambda$ to some real at distance at most $2^{-m}$ from  $\lambda d$, whenever  $\overline{d}$ is of length less than $S$.
\end{lemma}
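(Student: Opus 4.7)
The plan is to iterate a per-binary-digit update $S$ times using a linear length ODE of the form \eqref{ourform} with first argument $2^S$. The state is a triple $(u, s, r) \in \R^3$, initialised to $(0, \lambda, \encodagemot(\overline{d}))$. Intuitively, $u$ accumulates the running value (with $u \approx \lambda d$ at the end), $s$ is the current scale for post-decimal bits (starting at $\lambda$ and halved at every post-decimal bit encountered), and $r$ holds the yet-to-be-processed suffix of $\encodagemot(\overline{d})$.

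At each iteration, I extract from $r$ its two leading base-$4$ digits. Using $\sigtanh$-based functions (Lemma \ref{lem:solution:a:tout}) tailored to the three admissible values $\{0, 1, 3\}$, with transitions placed in the forbidden gaps $(0, 1)$ and $(2, 3)$, I obtain a \emph{type} digit $t \in \{0, 1, 3\}$ (standing for end of word, pre-decimal bit, post-decimal bit respectively) and a \emph{value} digit $b' \in \{0, 1, 3\}$, from which the bit $b \in \{0, 1\}$ is recovered via $\sendtanh$ (mapping $1 \sendsymbol 0$ and $3 \sendsymbol 1$). The remaining encoding then updates as $r' = 16r - 4t - b'$, while $(u, s)$ is updated case-wise: $(u', s') = (u, s)$ if $t = 0$; $(u', s') = (2u + \lambda b,\, s)$ if $t = 1$; and $(u', s') = (u + s b / 2,\, s/2)$ if $t = 3$.

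The key technical point is that, using $\sendtanh$/$\TTtanh$, this case analysis can be packaged as a single expression $(u', s', r') = A(r, \lambda) \cdot (u, s, r)^T + B(r, \lambda)$, where $A$ and $B$ depend on $r$ only through $\tanh$-based extractions of $t$ and $b$, and are therefore essentially constant in the state $(u, s, r)$. This makes the update essentially linear in the state in the sense of Definition \ref{def:essentiallylinear}; in particular, the $s$-linear term appearing in $u'$ for $t = 3$ is allowed because it sits as an off-diagonal coefficient of $A$, not as a quadratic state term. Wrapped as a linear length ODE scheme \eqref{ourform} in the first argument $2^S$, this places the resulting function inside $\manonclasslighttanh$; the final output is obtained by projecting onto the first coordinate.

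The main obstacle is error control. Each $\sigtanh$ extraction at internal precision $2^M$ introduces an error of order $2^{-M}$ in the extracted $t$ and $b'$, and the resulting error in $r$ is amplified by a factor of at most $16$ per iteration, since $r' = 16 r - 4t - b'$. Taking $M$ polynomial in $m$ and $S$ (e.g.\ $M = m + 5 S$) ensures simultaneously that (i) throughout the $S$ iterations the sigmoid extractions remain correct, since their transition widths tolerate an $O(1)$ drift in $4r$, and (ii) the final error on $u$ is at most $2^{-m}$. If preferred, one could instead insert a rounding step on $r$ at every iteration, analogous to the $\round_{S+1}$ trick used in the Turing machine simulation, at the cost of an extra $\sigma_1$ application, so that $r$ remains within distance $4^{-(S+2)}$ of $\Image_S$ at all times.
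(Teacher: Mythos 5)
Your proposal follows the paper's overall strategy: iterate $S$ times, via a linear length ODE, a step that strips two base-$4$ digits off $\encodagemot(\overline{d})$ (using $\tanh$-based approximations of integer and fractional part tuned to the digit set $\{0,1,3\}$) and updates an accumulator tracking $\lambda d$. The genuine difference --- and it is in fact a correction --- is your extra scale register $s$. The paper's step keeps only the triple $(\overline{r_1},\overline{l_2},\lambda)$ and updates $\overline{l_2}\mapsto(\overline{l_2}+\lambda b)/2$ on post-decimal bits; but since the encoding lists post-decimal bits most-significant-first (cf.\ $d=101.1\mapsto\overline{d}=0.13111333$), this assigns the wrong positional weights: tracing on $d=101.1$ yields $3\lambda$ rather than $5.5\lambda$. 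Your rule $u'=u+sb/2$, $s'=s/2$ gives the $k$-th post-decimal bit its correct weight $\lambda 2^{-k}$, and the coefficient of $s$ in the $u'$-row of the matrix $\tu A$ is a $\tanh$-extraction of $r$, hence of degree $0$ in the state, so the scheme is still essentially linear in the sense of Definition~\ref{def:essentiallylinear}. You also explicitly treat the end-of-word sentinel $t=0$, which the paper's four-way case analysis omits. On precision, the paper reuses the $\round_{S+1}$ trick, while you bound the amplification directly (factor $16$ per step over $S$ steps, so a fixed internal precision linear in $m$ and $S$ suffices); both arguments are sound, and you correctly identify the rounding as an interchangeable alternative.
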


\newcommand\preuvelemcodagemanon{
\begin{proof}
The idea is to do as in the proof of previous lemma, but considering 
$$
F( \overline{r_1}, \overline{l_2},\lambda) = 
\left\{
\begin{array}{ll} (\sigma(16 \overline{r_1}), 2 \overline{l_2} + 0,\lambda) &  \mbox{ whenever } i( 16 \overline{r_1})= %\overline{11}=
5\\
(\sigma(16 \overline{r_1}), 2 \overline{l_2} + \lambda,\lambda) &  \mbox{ whenever } i( 16 \overline{r_1})= %\overline{13}=
7\\
 (\sigma(16 \overline{r_1}), (\overline{l_2} + 0)/2,\lambda) &  \mbox{ whenever } i( 16 \overline{r_1})= %\overline{31}=
 13\\
 (\sigma(16 \overline{r_1}), (\overline{l_2} + \lambda)/2,\lambda) &  \mbox{ whenever } i( 16 \overline{r_1})= %\overline{33}=
 15
\end{array}\right.
$$
iterated $S$ times over suitable approximation of the rounding $\round_{S+1}(\encodagemot(\overline{d}),0,\lambda))$, with  $\sigma$ and $\xi$ %and $i$ 
constructed as  approximation of the integer and fractional part, as before. %in previous proofs.
\end{proof}
}
\CORPSPREUVEDE{\preuvelemcodagemanon}
		
\section{Proofs and applications}
\label{sec:applications}

%When we say that a function $f: S_{1}  \times \dots \times S_{d} \to \R^{d'}$ is (respectively: polynomial time or space) computable this will always be in the sense of computable analysis: see e.g. \cite{brattka2008tutorial,Wei00}. See appendix for a full formalisation.
  Theorem \ref{trucchosethmain}  follows from  point 1. of next Proposition for one inclusion, and previous simulation of Turing machines for the other.

\MFCSshorter{
We start by stating the following  (similar to the statement about $\manonclass$ in \cite{BlancBournezMCU22vantardise}):}

%\begin{proposition} \label{prop:mcu:un}
%All functions of $\manonclasslighttanh$  are computable (in the sense of computable analysis) in polynomial time.
%\end{proposition}

%\begin{proposition} \label{prop:mcu:un:pspace}
%All functions of $\spaceclass$  are computable (in the sense of computable analysis) in polynomial space.
%\end{proposition}

\begin{proposition}  \label{prop:mcu:un}
	\begin{enumerate}
\item  All functions of $\manonclasslighttanh$  are computable (in the sense of computable analysis) in polynomial time.

\item  All functions of $\spaceclass$  are computable (in the sense of computable analysis) in polynomial space.
\end{enumerate}
\end{proposition}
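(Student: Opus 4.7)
The plan is to proceed, for each part, by structural induction on the derivation of the function in the class, mirroring and slightly extending the argument given for $\manonclass$ in \cite{BlancBournezMCU22vantardise,MCUJournalManon}. Under the formalism of computable analysis for functions mixing discrete and real arguments recalled in Appendix \ref{sec:defanalysecalculable}, the goal is to exhibit, for each function in the class, a polynomial-time (respectively polynomial-space) oracle Turing machine producing a dyadic approximation of the output to any requested precision $2^{-n}$. For the base cases $\mathbf{0}, \mathbf{1}, \projection{i}{k}, \plus, \minus, x/2, x/3, \length{\cdot}$ this is classical, and $\tanh$ is polynomial-time computable in the sense of computable analysis by a standard truncated Taylor-series argument with rigorous error bounds. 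The closure under composition is handled as usual, carrying through the induction the invariant that the bit size of each intermediate output remains polynomial in the input size (this is where one relies on the convention, recalled in the introduction, that outputs are counted as part of the resource budget).

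For part 1, the only genuinely new scheme is the linear length ODE. By Lemma \ref{myfundobge} and the equivalent reformulation \eqref{ourform}, a solution $\tu f(x,\tu y)$ can be evaluated by iterating $\ell(x)$ times an update $\tu F(t+1,\tu y)=\overline{\tu u}(\tu F(t,\tu y),\tu h(2^{t}-1,\tu y),2^{t},\tu y)$ with $\overline{\tu u}$ essentially linear in $\tu F$. Essential linearity prevents doubly exponential blow-up: the bit size of $\tu F(t,\tu y)$ remains polynomial in $t$ and in $\ell(\tu y)$, and to guarantee a final precision $2^{-n}$ it suffices to carry out the iteration at working precision $2^{-q(n,\ell(x),\ell(\tu y))}$ for a polynomial $q$ absorbing the geometric propagation of round-off errors through the linear step. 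Each of the $\ell(x)=O(\log x)$ iterations then runs in polynomial time by the inductive hypothesis applied to $\tu u$, $\tu h$ and $\tu g$, so the overall cost stays polynomial.

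For part 2, all inductive cases already present in $\manonclasslighttanh$ are covered by part 1 since $\FPtime\subseteq\FPspace$, and the new ingredient is the robust linear ODE scheme of Definition \ref{def:roblinear lengt ODE}. Here one crucially exploits \emph{both} hypotheses: boundedness of $\tu f$ and polynomial numerical stability. To approximate $\tu f(x,\tu y)$ at precision $2^{-n}$, pick the polynomial $p$ given by numerical stability and run Euler's method on $\partial_{x}\tu f=\tu u(\tu f,\tu h,x,\tu y)$ with step size $2^{-r(n,\ell(x),\ell(\tu y))}$ chosen so that the classical Euler truncation error on $[0,x]$ stays within $2^{-p(n+\ell(\tu y))}$; by numerical stability, the resulting dyadic approximation is then within $2^{-n}$ of $\tu f(x,\tu y)$. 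The number of Euler steps is exponential in the input length, but at any instant one only needs to store the current approximation of $\tu f$ — which, thanks to boundedness, is a dyadic of polynomial bit size — together with a step counter; the auxiliary calls to $\tu u$ and $\tu h$ run in polynomial space by induction and can reuse the same workspace.

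The main obstacle in this plan, as I see it, is the last point of part 2: one must argue carefully that the composition of polynomial-space subroutines inside a loop with exponentially many iterations still fits in polynomial space. This is the standard but delicate observation that $\FPspace$ is closed under composition \emph{and} under polynomially many sequential calls, provided intermediate values have polynomially bounded size — which is exactly what the boundedness assumption on $\tu f$ guarantees. The polynomial numerical stability hypothesis is what lets us know, \emph{before} starting the simulation, the exact working precision and step size required, hence making the whole procedure uniform and bounding its space usage a priori.
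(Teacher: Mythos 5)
Your part 1 argument is essentially the paper's. The paper also reduces via Lemma~\ref{myfundobge} to the recurrence \eqref{ourform}, and its Appendix~\ref{ouestlapreuve} establishes Lemma~\ref{fundamencoreg} by observing that essential linearity of the update keeps the integer parts of $\tu F(t,\tu y)$ at polynomially bounded bit length, and by a careful but elementary propagation-of-error computation (the paper works from the explicit closed-form solution \eqref{formulemagique}, but evaluating that formula by dynamic programming is exactly the forward iteration you describe, so the two presentations coincide).

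Your part 2, however, contains a conceptual error. The scheme of Definition~\ref{schema:space} is a \emph{discrete} ODE: throughout this paper $\frac{\partial \tu f(x,\tu y)}{\partial x}$ denotes the finite difference $\tu f(x+1,\tu y)-\tu f(x,\tu y)$, and $x$ ranges over integers, as the usage in the proof of Proposition~\ref{prop:trendeux} makes explicit. Consequently there is no ``Euler's method'' to run, no step size $2^{-r(\cdots)}$ to choose, and no ``classical Euler truncation error on $[0,x]$'' to control: the evolution is forced to proceed in unit integer steps $0,1,2,\dots,x$, and the only error that can enter is round-off error from doing each step at finite working precision. That is precisely what condition~2.\ of Definition~\ref{schema:space} (polynomial numerical stability) is there to handle, and it is the whole point of that hypothesis. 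Replacing your Euler paragraph with: ``iterate the discrete recurrence for $t=0,\dots,x$, carrying each step at working precision $2^{-\epsilon(n)}$ where $\epsilon$ is the polynomial from the stability hypothesis; stability guarantees the final value is $2^{-n}$-close to $\tu f(x,\tu y)$'' would fix this. The remainder of your part 2 --- that boundedness of $\tu f$ keeps the stored state at polynomial bit size, that a step counter takes polynomial space, that the sub-calls to $\tu u$ and $\tu h$ reuse the workspace, and that $\FPspace$ is closed under such exponentially long sequential loops over polynomially sized state --- is correct and in fact usefully fleshes out the paper's much terser argument.

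One smaller remark: for the base case $\tanh$, your appeal to a truncated Taylor series needs to be done with mild care on the whole real line (the Taylor series around $0$ only converges on $|x|<\pi/2$); one typically reduces to a bounded interval via $\tanh(x)=1-2/(e^{2x}+1)$ and range reduction. This is standard and not a gap, but worth acknowledging if you want the base case to be airtight.
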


%\newcommand\preuvemanquante{
%\begin{proof}
%The proof is similar to the proof of Proposition 35, (1.), reasoning about space instead of time complexity.
%
%This works if the schema given by  Definition \ref{schema:space} guarantees $\tu f$ is in $\FPspace$, when $\tu u$, $\tu g$, and $\tu h$ are. 
%
%The point is, in Definition \ref{schema:space}, we assumed $\tu f$ to be both bounded and satisfying \textbf{2.}, i.e. polynomial numerical robustness. With these hypotheses, it is sufficient to work with the precision given by this robustness condition and these conditions guarantee the validity of computing with such approximated values.
%
%Above conditions are made on purpose to guarantee that the most natural algorithm to simulate the ODE, working with this precision, will lead to a correct simulation.
%
%\begin{remark}
%Assuming condition 1. of Definition \ref{schema:space} would not be sufficient: the problem is $\tu f(x,\tu y)$ may polynomially grow too fast or have a modulus function that would grow too fast. 
%\end{remark}
%
%\end{proof}
%}
%\CORPSPREUVEDE{\preuvemanquante}

%\vspace{-0.2cm}

\begin{proof}
The proof  consists in observing this holds for the basic functions and that composition preserves polynomial time (respectively: space) computability and also by linear length ODEs. This latter fact is established by computable analysis arguments, reasoning on some explicit formula giving the solution of linear length ODE. The proof is similar to the statement about $\manonclass$ in \cite{BlancBournezMCU22vantardise}. In order to be self-content, we repeat in Appendix \ref{ouestlapreuve}, how this is established in \cite{BlancBournezMCU22vantardise}.

Regarding space, the main issue is the need to prove the schema given by  Definition \ref{schema:space} guarantees $\tu f$ is in $\FPspace$ when $\tu u$, $\tu g$, and $\tu h$ are. Assuming condition 1. of Definition \ref{schema:space} would not be sufficient: the problem is that $\tu f(x,\tu y)$ may polynomially grow too fast or have a modulus function that would grow too fast. The point is, in Definition \ref{schema:space}, we assumed $\tu f$ to be both bounded and satisfying \textbf{2.}, i.e. polynomial numerical robustness. With these hypotheses, it is sufficient to work with the precision given by this robustness condition and these conditions guarantee the validity of computing with such approximated values.
\end{proof}

We now go to various applications of the proposition and of our toolbox. First, we state a characterisation of $\FPtime$ for general functions, covering both the case of a function $\tu f: \N^{d} \to \R^{d'}$, $\tu f: \R^{d} \to \R^{d'}$ as a special case: only the first type (sequences) was covered by \cite{BlancBournezMCU22vantardise}.

\begin{theorem}[Theorem \ref{th:main:twop}] \label{th:main:one:ex:p}
A function  $\tu f: \R^{d}  \times \N^{d''} \to \R^{d'}$ is computable in polynomial time iff there exists $\tilde{\tu f}:\R^{d} \times \N^{d''+2} \to \R^{d'} \in \manonclasslighttanh$ such that
for all $\tu x \in \R^{d}$, $X \in \N$, $ \tu x \in\left[-2^{X}, 2^{X}\right]$, $\tu m  \in \N^{d''}$, $n \in \N$,
$\|\tilde{\tu f}(\tu x, \tu m,2^{X},2^{n}) - \tu f(\tu x, \tu m) \| \le 2^{-n}.$
\end{theorem}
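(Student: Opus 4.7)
My plan is to treat the two inclusions separately; the forward direction is immediate from earlier results and the reverse direction is the technical core.

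\emph{Easy direction} ($\Leftarrow$). Suppose such a $\tilde{\tu f}\in\manonclasslighttanh$ exists. By Proposition \ref{prop:mcu:un}(1), $\tilde{\tu f}$ is polynomial-time computable in the sense of computable analysis. Given inputs $\tu x, \tu m, n$, I would first compute a suitable bound $X$ with $\tu x\in[-2^X,2^X]$ (for instance by doubling, in time polynomial in $X$, which is itself polynomial in a valid name of $\tu x$), then evaluate $\tilde{\tu f}(\tu x, \tu m, 2^X, 2^n)$ at precision $2^{-(n+1)}$. The triangle inequality with the approximation hypothesis yields a $2^{-n}$ approximation of $\tu f(\tu x,\tu m)$, which is exactly the definition of polynomial-time computability over the reals.

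\emph{Hard direction} ($\Rightarrow$). I unfold the definition of polynomial-time computability: there exist polynomials $p,q$ and a Turing machine $M$ that, given $\tu m$, $n$, and a dyadic $\tu d$ with $\|\tu d-\tu x\|\le 2^{-q(n,X,\ell(\tu m))}$, outputs in time $p(n,X,\ell(\tu m))$ a dyadic $\tu d'$ with $\|\tu d'-\tu f(\tu x,\tu m)\|\le 2^{-(n+1)}$. Then I construct $\tilde{\tu f}$ as a three-stage pipeline, each stage realised in $\manonclasslighttanh$. Stage one rounds $\tu x$ coordinate-wise to a dyadic of precision $2^{-N}$ with $N=q(n,X,\ell(\tu m))+1$, using $\sigma_1$ from Corollary \ref{lem:i}: the quantity $\sigma_1(2^{m'},2^{X+N+3},2^N\tu x)/2^N$ approximates a dyadic that is $2^{-N}$-close to $\tu x$, with a controllable additive error of $2^{-m'}/2^N$ that I make negligible by choosing $m'$ polynomial in $n,X,\ell(\tu m)$. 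Stage two converts the resulting integer (i.e. $2^N\tu x$ rounded) together with $\tu m$ into a word in $\Image$ via the $\Decode$ function of Lemma \ref{lem:manquant}. Stage three simulates $M$ on this encoded input for $2^{p(n,X,\ell(\tu m))}$ steps using Proposition \ref{prop:deux}, then extracts the real value of the output via $\EncodeMul$ from Lemma \ref{lem:codage:manon} with $\lambda=1$.

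The main obstacle is error propagation. Each of the three stages can be made arbitrarily accurate by feeding it a sufficient precision parameter (all of which are at most polynomial in $n,X,\ell(\tu m)$); since $\manonclasslighttanh$ is closed under composition and linear length ODEs can iterate polynomially many times, this choice is compatible with the class. The subtle point is that the rounding in stage one is not truly analytic: near the half-integers where $\sigma_1$ is inaccurate, the produced dyadic may differ from any legitimate rounding of $\tu x$ by more than $2^{-N}$. This is harmless because $M$ is polynomial-time computable \emph{as a computable-analysis function}, so it must in fact work for any dyadic $\tu d$ at distance $\le 2^{-q(n,X,\ell(\tu m))+1}$ from $\tu x$; it therefore suffices to choose stage one's precision a constant number of bits finer than $2^{-N}$ and to verify that $\sigma_1$ returns a number which, while not necessarily the nearest dyadic, is still a valid input for $M$. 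A secondary bookkeeping obstacle is ensuring that the $\Decode$ step properly handles negative reals and variable-length integer parts; this is managed by first splitting $\tu x$ into sign and magnitude and encoding $X$ as part of the word, both operations being straightforward linear-length-ODE constructions from the primitives of Section \ref{sec:functions}.
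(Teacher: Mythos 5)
Your easy direction is correct and matches the paper. Your hard direction follows the right general skeleton (round, $\Decode$, simulate via Proposition \ref{prop:deux}, convert back via $\EncodeMul$), but it contains a genuine gap at exactly the point the paper identifies as the core difficulty of the real-valued case.

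The problem is with Stage one. Corollary \ref{lem:i} guarantees that $\sigma_1(2^{m'},2^{n},z)$ is $2^{-m'}$-close to $\lfloor z \rfloor$ \emph{only} when $\{z\}\in[0,\tfrac14]\cup[\tfrac12,1]$; in the remaining ``bad zone'' $\{z\}\in(\tfrac14,\tfrac12)$ the statement gives no bound at all, and none can be given: $\sigma_1$ is analytic, hence continuous, so between an argument where it outputs $\approx\lfloor z\rfloor$ and one where it outputs $\approx\lfloor z\rfloor+1$ it must pass through every intermediate real. No amount of extra precision in $m'$ removes this transition zone — increasing $m'$ narrows the region where $\sigma_1$ is \emph{inaccurate within its guaranteed interval}, but does nothing to the interval $(\tfrac14,\tfrac12)$ where there is no guarantee. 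Consequently, if $2^{N}x_i$ happens to land there, the value fed into $\Decode$ is some non-integer real, $\Decode$ and the simulated machine receive a meaningless input, and the output of your pipeline can be arbitrary. Your proposed fix — ``choose stage one's precision a constant number of bits finer and verify that $\sigma_1$ returns a valid input for $M$'' — does not close this: the issue is not error magnitude but the topological impossibility of continuously rounding a real to an integer, which the paper explicitly calls out as ``where the approach of \cite{BlancBournezMCU22vantardise} is stuck.''

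The paper's remedy is the adaptive barycentric technique you did not reproduce: compute \emph{two} candidate answers, $\Formula_1$ using $\sigma_1$ and $\Formula_2$ using $\sigma_2$ (accurate on complementary intervals), then blend them with the function $\lambda$ of Corollary \ref{lem:lambda}, which indicates which of the two roundings is trustworthy at the current point. Concretely, whenever $\lambda\approx 0$ the $\sigma_2$-branch is correct; whenever $\lambda\approx 1$ the $\sigma_1$-branch is correct; in between, \emph{both} branches are in their good zones (they may round to adjacent integers, but both yield dyadics close enough to $x$ for $M$ to return a valid $2^{-n}$-approximation), so any convex combination is also valid. Since direct multiplication by $\lambda$ is not available in $\manonclasslighttanh$, the paper folds the weight into the last argument of $\EncodeMul$, giving formula \eqref{letrucalafin}. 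Without this two-branch construction your proof does not go through.
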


The reverse implication of Theorem \ref{th:main:one:ex:p} follows from Proposition \ref{prop:mcu:un}, (1.) and arguments from computable analysis.
%
%%\begin{proposition}[{\cite{BlancBournezMCU22vantardise}}] \label{prop:mcu:un}
%%All functions of $\manonclasslighttanh$  are computable (in the sense of computable analysis) in polynomial time.
%%\end{proposition}
%
\newcommand\preuvereversethmainoneexp{
\begin{proof}[Proof of reverse implication of Theorem \ref{th:main:one:ex:p}]
	Assume there exists $\tilde{\tu f}:\R^{d} \times \N^{d''+2} \to \R^{d'} \in \manonclasslighttanh$ such that
	for all $\tu x \in \R^{d}$, $X \in \N$, $ \tu x \in\left[-2^{X}, 2^{X}\right]$, $\tu m  \in \N^{d''}$, $n \in \N$,
	$\|\tilde{\tu f}(\tu x, \tu m,2^{X},2^{n}) - \tu f(\tu x, \tu m) \| \le 2^{-n}.$
	
	From Proposition  \ref{prop:mcu:un}, (1.), we know that $\tilde{\tu f}$ is computable in polynomial time (in the binary length of its arguments).
	Then $\tu f(\tu x,\tu m)$ is computable: indeed, given $\tu x$, $\tu m$ and $n$, we can approximate $\tu f(\tu x, \tu m)$ at precision $2^{-n}$ on $[-2^{X}, 2^{X}]$ as follows: approximate $\tilde{\tu f}(\tu x, \tu m,2^{X},2^{n+1})$ at precision $2^{-(n+1)}$ by some rational $q$, and output $q$. We will then have
	$$
	\begin{array}{ll}
		\|q-\tu f(\tu x, \tu m)\| &\le \|q-\tilde{\tu f}(\tu x, \tu m,2^{X},2^{n+1}) \| + \|\tilde{\tu f}(\tu x, \tu m,2^{X},2^{n+1})-\tu f(\tu x, \tu m)\| \\ &\le 2^{-(n+1)} + 2^{-(n+1)} \\ &\le 2^{-n}.
	\end{array}
	$$
	All of this is done in polynomial time in $n$ and the size of $\tu m$, and hence we get that $\tu f$ is polynomial time computable from definitions.
\end{proof}
}
\CORPSPREUVEDE{\preuvereversethmainoneexp}
For the direct implication, for sequences, that is to say, functions of type $\tu f: \N^{d''} \to \R^{d'}$ (i.e. $d=0$, the case considered in \cite{BlancBournezMCU22vantardise}) we are almost done: reasoning componentwise, we only need to consider  $ f: \N^{d''} \to \R$ (i.e. $d'=1$).  As the function is polynomial time computable, this means that there is a  polynomial time computable function $g: \N^{d''+1} \to \{\symboleun,\symboledeux\}^{*}$ so that on $\tu m,2^{n}$, it provides the encoding $\bar{\phi(\tu m,n)}$ of some dyadic  $\phi(\tu m,n)$ with $\|\phi(\tu m,n)-\tu f(\tu m)\| \le 2^{-n}$ for all $\tu m$.  
The problem is then to decode, compute and encode the result to produce this dyadic, using our previous toolbox. 

More precisely, from Proposition \ref{prop:deux},  we get $\tilde{g}$ with  
 $$|\tilde{g}(2^{e},2^{p(max(\tu m,n))}, \Decode(2^{e},\tu m,n)) -\encodagemot(g(\tu m,n)) | \le 2^{-e}$$ for some polynomial $p$ 
corresponding to the time required to compute  $g$, and $e=\max(p(max(\tu m,n)),n)$. Then we  need to transform the  value to the correct dyadic: we mean 
   $$\tilde{\tu f}(\tu m,n)=\EncodeMul(2^{e},2^{t},\tilde{g}(2^{e},2^{t},\Decode(2^{e},\tu m,n)),1),$$ where $t=p(max(\tu m,n))$,   $e=\max(p(max(\tu m,n)),n)$ provides a solution such that
 $\|\tilde{\tu f}(\tu m,2^{n})-\tu f(\tu m)\| \le 2^{-n}.$

\begin{remark} This is basically what is done  in  \cite{BlancBournezMCU22vantardise}, except that we do it here with  analytic functions. 
However, as already observed in  \cite{BlancBournezMCU22vantardise}, this cannot be done for the case $d \ge 1$, i.e. for example for $f: \R \to \R$. The problem is that we used the fact that we can decode: $\Decode$ maps an integer $n$ to its encoding $\overline{n}$ (but is not guaranteed to do something valid on non-integers).  There cannot exist such functions that would be valid over all reals, as such functions must be continuous, and there is no way to map continuously real numbers to finite words. This is where the approach of the article \cite{BlancBournezMCU22vantardise} is stuck. 
\end{remark}

To solve this, we use an adaptive barycentric technique.  For simplicity, and pedagogy, we discuss only the case of a polynomial time computable function $f: \R \times \N \to \R$. 
From standard arguments from computable analysis (see e.g. [Corollary 2.21]\cite{Ko91}), the following holds and the point is to be able to realise all this with functions from $\manonclasslighttanh$.
\begin{lemma}
	Assume $f: \R \times \N \to \R$ is computable in polynomial time. There exists some polynomial $m: \N^{2} \to \N$ and  some $\tilde{f}: \N^{4} \to \Z$ computable in polynomial time
	such that for all $x\in \R$, $\| 2^{-n} \tilde{f}(\lfloor 2^{m(n,M)}  x  \rfloor, u, 2^{M}, 2^{n}) - f(x,u) \| \leq 2^{-n}$ whenever $\frac{x}{2^{m(n,M)}} \in [-2^{M},2^{M}]$.
\end{lemma}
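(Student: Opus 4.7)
The plan is to derive the statement by unfolding the standard definition of polynomial-time computability from computable analysis, following the argument of [Corollary~2.21]\cite{Ko91}. First, I would invoke the definition: since $f\colon \R\times\N\to \R$ is polynomial-time computable, there exist an oracle Turing machine $\mathcal{M}$ and a polynomial $p$ such that, given an oracle for dyadic approximations of $x$, an integer $u$, and a precision parameter $1^{n}$, $\mathcal{M}$ outputs a dyadic rational within $2^{-n}$ of $f(x,u)$ in time bounded by $p(n+M+\ell(u))$ whenever $x\in[-2^{M},2^{M}]$.

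Second, since $\mathcal{M}$ runs in polynomial time, it can only query the oracle at precision $2^{-m(n,M,\ell(u))}$ for some polynomial $m$; by a standard type-$2$ argument, this yields a polynomial modulus of continuity for $f$: for $x,x'\in[-2^{M},2^{M}]$ with $|x-x'|\le 2^{-m(n,M)}$, one has $|f(x,u)-f(x',u)|\le 2^{-(n+2)}$ (absorbing the $u$-dependence into the polynomial and re-tuning constants so the error bound is $2^{-(n+2)}$ rather than $2^{-n}$).

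Third, I would define $\tilde{f}(k,u,2^{M},2^{n})$ as follows: interpret $k$ as $\lfloor 2^{m(n,M)}x\rfloor$, so that the dyadic $k\cdot 2^{-m(n,M)}$ lies within $2^{-m(n,M)}$ of $x$; simulate $\mathcal{M}$ on input $u$ with precision parameter $n+2$ and the exact dyadic $k\cdot 2^{-m(n,M)}$ playing the role of oracle, to obtain a dyadic rational $d$ with $|d-f(k\cdot 2^{-m(n,M)},u)|\le 2^{-(n+2)}$; and finally return the nearest integer to $2^{n}d$. The triangle inequality over three error terms, namely the final rounding, the algorithmic error, and the modulus of continuity, gives
\[
\bigl|2^{-n}\tilde{f}(k,u,2^{M},2^{n})-f(x,u)\bigr|\le 2^{-(n+1)}+2^{-(n+2)}+2^{-(n+2)}\le 2^{-n},
\]
as required.

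Finally, I would check the polynomial-time bound on $\tilde{f}$: the length of $k$ is at most $m(n,M)+M+O(1)$, and simulating $\mathcal{M}$ plus the rounding step costs time polynomial in $n+M+\ell(u)$; since $m$ is a polynomial, the total cost is polynomial in $\ell(k)+\ell(u)+M+n$. The main (and only) subtlety is to balance the error budget between the three contributions so that they sum to at most $2^{-n}$; no deep new idea is required beyond a faithful compilation of the computable-analysis definitions.
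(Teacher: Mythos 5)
Your proof is correct and matches the intended argument: the paper does not supply its own proof of this lemma, it simply invokes ``standard arguments from computable analysis'' with a pointer to [Corollary~2.21]\cite{Ko91}, and your write-up is a faithful unfolding of exactly that standard oracle-machine/modulus-of-continuity argument (max query precision gives a polynomial modulus $m$, feed the dyadic $k\cdot 2^{-m(n,M)}$ as a synthetic oracle, round the output, split the error budget three ways). One small mismatch worth flagging: the statement's domain condition reads $x/2^{m(n,M)}\in[-2^M,2^M]$, whereas your modulus step (and the surrounding discussion in the paper, which bounds $\lfloor 2^{m(n,M)}x\rfloor$ by $2^{m(n,M)+X+1}$) operates with $x\in[-2^M,2^M]$; this is almost certainly a typo in the paper rather than a gap in your proof, but as written literally the interval is larger and the modulus argument would have to be redone at the scale $M+m(n,M)$.
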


Assume we consider an approximation $\sigma_{i}$ (with either $i=1$ or $i=2$) of the integer part function given by Lemma \ref{lem:i}. Then, given $n,M$, when $2^{m(n,M)} x$ falls in some suitable interval $I_{i}$ for $\sigma_{i}$ (see the statement of Lemma \ref{lem:i}), we are sure that $\sigma_{i}(2^{e},2^{m(n,M)+X+1}, 2^{m(n,M)}  x)$ is at some  distance upon control from $\lfloor 2^{m(n,M)}  x  \rfloor$. Consequently, $2^{-n} \tilde{f}(\sigma_{i}(2^{m(n,M)+X+1}, 2^{m(n,M)} $ $ x ), u, 2^{M}, 2^{n})$ provides some $2^{-n}$-approximation of $f(x,u)$, up to some  error upon control. 
When this holds, we then use an argument similar to what we describe for sequences:  using functions from $\manonclasslighttanh$, we can decode, compute, and encode the result to provide this dyadic. It is provided by an expression
$\Formula_{i}(x,u,M,n)$ of the form 
$\EncodeMul(2^{e},2^{t},\tilde{\tilde{f}}(2^{2},2^{t},$ $\Decode(2^{e},
\sigma_{i}(2^{e},2^{M}, 2^{m(n,M)}  x )
)),2^{-n})$.

The problem is that it might also be the case that $2^{m(n,M)}  x$ falls in the complement of the intervals $(I_{i})_{i}$. In that case, we have no clear idea of what could be the value of $\sigma_{i}(2^{e},2^{m(n,M)+X+1}, 2^{m(n,M)}  x )$, and consequently of what might be the value of the above expression $Formula_{i}(x,u,$ $M,n)$. 
But the point is that when it happens for an $x$ for $\sigma_{1}$, we could have used $\sigma_{2}$, and this would work, as one can check that the intervals of type $I_{1}$ cover the complements of the intervals of type $I_{2}$ and conversely. They also overlap, but when $x$ is both in some $I_{1}$ and $I_{2}$, $Formula_{1}(x,u,M,n)$ and $\Formula_{2}(x,u,M,n)$ may differ, but they are both $2^{-n}$ approximations of $f(x)$.

The key is  to compute some suitable "adaptive" barycenter, using function $\lambda$, provided by Corollary \ref{lem:lambda}. Writing $\approx$ for the fact that two values are closed up to some controlled bounded error, observe from the statements of Corollary \ref{lem:lambda} and \ref{lem:i}\shortermcu{	\begin{itemize}
	\item}  that whenever $\lambda(\cdot,2^n,x) \approx 0$, we know that $\sigma_{2}(\cdot,2^n,x) \approx \lfloor x \rfloor$; 
	\shortermcu{ \item that} whenever $\lambda(\cdot,2^n,x) \approx 1$ we know that $\sigma_{1}(\cdot,2^n,x) \approx \lfloor x \rfloor$;
	\shortermcu{ \item that}    whenever $\lambda(\cdot, 2^n,x) \in (0,1)$, we know that $\sigma_1(\cdot, 2^n,x) \approx \lfloor x \rfloor +1 $ and $\sigma_2(\cdot, 2^n,x) \approx \lfloor x \rfloor$.
	\shortermcu{\end{itemize}}
That means that if we consider 
$$\lambda(\cdot, 2^n,x) Formula_{1}(x,u,M,n) + (1-\lambda(\cdot, 2^n,n)) Formula_{2}(x,u,M,n) $$ we are
sure to be close (up to some bounded error) to some $2^{-n}$ approximation of $f(x)$.
There remains that this requires some multiplication with $\lambda$. But from the form of $\Formula_{i}(x,u,M,n)$, this could be also be written as follows, %and hence remain in $\manonclasslighttanh$, 
ending the proof of  Theorem \ref{th:main:one:ex:p}.

\vspace{-0.6cm}
{
\begin{multline}\label{letrucalafin}
\hspace{-0.5cm} \EncodeMul(2^{e},2^{t},\tilde{\tilde{f}}(2^{e},2^{t},\Decode(2^{e},
\sigma_{1}(2^{e},2^{M}, 2^{m(n,M)}  x )
)),\lambda(2^{e},2^{M}, 2^{m(n,M)} x )  2^{-n}) + \\
\hspace{-0.5cm} \EncodeMul(2^{e},2^{t},\tilde{\tilde{f}}(2^{e},2^{t},\Decode(2^{e},
\sigma_{2}(2^{e},2^{M}, 2^{m(n,M)}  x )
)),(1-\lambda(2^{e},2^{M}, 2^{m(n,M)} x ))
  2^{-n})
\end{multline}
}

\vspace{-0.3cm}
\newcommand\preuvetrucchoseth{
\begin{proof}[Proof of Theorem \ref{trucchosethmain}] 
	We know that a function $\tu f: \R^{d} \to \R^{d'}$ from $\manonclasslighttanh$ is polynomial time computable by Proposition \ref{prop:mcu:un}, (1.). That means that we can approximate it with arbitrary precision, in particular precision $\frac14$ in polynomial time. Given such an approximation $\tu q$, if we know it is some integer, it is easy to determine which integer it is:  return (componentwise) the closest integer to $\tu q$.
	
	Conversely, if we have a function $\tu f: \N^{d} \to \N^{d'}$ that is polynomial time computable, our previous simulations of Turing machines provide a function in  $\manonclasslighttanh$  that computes it at any required precision, in particular $1/4$.
\end{proof}
}
\CORPSPREUVEDE{\preuvetrucchoseth}

\olivierplusimportant{Commenté:
\begin{proof}
	The representation of integers being finite, we do not need, for all $n \in \N$, some arbitrary precision on the input, but only precision $2^{-m}$. Then $\tilde{\tilde{f}}$ is basically the same as function $\tilde{f}$, but replacing the oracle calls by readings on its first argument.
\end{proof}
}

From the fact that we have the reverse direction in Theorem \ref{th:main:one:ex:p}, it is natural to consider the operation that maps $\tilde{\tu f}$ to $\tu f$. Namely, we introduce the operation $\MANONlim$ (this stands for Effective Limit):

\begin{definition}[Operation $\MANONlim$] Given $\tilde{\tu f}:\R^{d} \times \N^{d''} \times \N \to \R^{d'} \in \manonclasslighttanh$ such that
for all $\tu x \in \R^{d}$,  $X \in \N$, $ \tu x \in\left[-2^{X}, 2^{X}\right]$, $\tu m \in \N^{d''}$, $n \in \N$,
$\|\tilde{\tu f}(\tu x, \tu m,2^X,2^{n}) - \tu f(\tu x, \tu m) \| \le 2^{-n},$ then 
$\MANONlim(\tilde{\tu f})$ is the (clearly uniquely defined) corresponding function  $\tu f: \R^{d} \to \R^{d'}$.
\end{definition}

\begin{theorem}  \label{th:main:two} 
A continuous function $\tu f$ %$\tu f: \N^{d} \to \R^{d'}$.
 is  computable in polynomial time if and only if all its components belong to $\manonclasslighttanhlim$, where
$$\manonclasslighttanhlim= [\mathbf{0},\mathbf{1},\projection{i}{k},   \length{x}, \plus, \minus,  \signb{x}, \frac{x}{2}, \frac{x}{3};
 composition,  linear~length~ODE,\MANONlim].$$
\end{theorem}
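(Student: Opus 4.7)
The plan is to prove the two implications separately, with the reverse direction reducing essentially to Proposition \ref{prop:mcu:un} augmented by a single new case for the $\MANONlim$ operation, and the forward direction following immediately from Theorem \ref{th:main:one:ex:p}.

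\emph{Forward direction.} Suppose $\tu f$ is a continuous function computable in polynomial time. Applying Theorem \ref{th:main:one:ex:p} (here with $d''=0$, though the general mixed case is similar), we obtain a function $\tilde{\tu f} \in \manonclasslighttanh$ such that for every $\tu x \in \R^{d}$, every $X \in \N$ with $\tu x \in [-2^{X}, 2^{X}]^{d}$, and every $n \in \N$,
\[
\|\tilde{\tu f}(\tu x, 2^{X}, 2^{n}) - \tu f(\tu x)\| \le 2^{-n}.
\]
By the very definition of the $\MANONlim$ operator, $\tu f = \MANONlim(\tilde{\tu f})$. Hence each coordinate of $\tu f$ belongs to $\manonclasslighttanhlim$.

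\emph{Reverse direction.} We argue by structural induction on how $\tu f$ is built. Proposition \ref{prop:mcu:un}(1.) already handles the basic functions, composition, and the linear length-ODE scheme: it states that every function of $\manonclasslighttanh$ is computable in polynomial time in the sense of computable analysis. The only new case is therefore $\MANONlim$. Assume inductively that $\tilde{\tu f} : \R^{d} \times \N^{d''+2} \to \R^{d'}$ is polynomial-time computable and that it satisfies the approximation property from the definition of $\MANONlim$. We must show that $\tu f = \MANONlim(\tilde{\tu f})$ is polynomial-time computable in the computable analysis sense. Given an oracle providing dyadic approximations of $\tu x \in \R^{d}$ and inputs $\tu m$ and $n$, first determine some $X$ polynomial in $n$ and in the length of a coarse rational approximation of $\tu x$ such that $\tu x \in [-2^{X}, 2^{X}]^{d}$; this amounts to reading the magnitude of $\tu x$ from a precision-$1$ approximation, which is polynomial time. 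Then, using the polynomial-time machine for $\tilde{\tu f}$, compute a rational $\tu q$ with $\|\tu q - \tilde{\tu f}(\tu x, \tu m, 2^{X}, 2^{n+1})\| \le 2^{-(n+1)}$. By the triangle inequality,
\[
\|\tu q - \tu f(\tu x, \tu m)\| \le 2^{-(n+1)} + 2^{-(n+1)} = 2^{-n}.
\]
The whole procedure runs in polynomial time in $n$, the lengths of $\tu m$, and in $X$ (itself polynomial in the relevant input sizes). Hence $\tu f$ is polynomial-time computable, closing the induction.

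\emph{Main obstacle.} The only subtlety is in the $\MANONlim$ case: one must verify that $X$ can be chosen in polynomial time from the oracle representation of $\tu x$, and that the polynomial bound for $\tilde{\tu f}$ depends on the \emph{length} of $2^{X}$ (i.e.\ on $X$), not on $2^{X}$ itself, so that this extra argument does not destroy the polynomial-time bound. Both points follow from the standard conventions of computable analysis recalled in Appendix \ref{sec:defanalysecalculable} and from the way length arguments enter $\manonclasslighttanh$. No new technical device beyond Proposition \ref{prop:mcu:un} and Theorem \ref{th:main:one:ex:p} is needed.
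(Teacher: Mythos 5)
Your proposal follows essentially the same route as the paper: the forward direction is read off from Theorem \ref{th:main:one:ex:p} combined with the definition of $\MANONlim$, and the reverse direction is a structural induction in which the only genuinely new case is $\MANONlim$, handled by the standard computable-analysis triangle-inequality argument (this mirrors the paper's proof of the reverse implication of Theorem \ref{th:main:one:ex:p}). The only cosmetic inaccuracy is that you cite Proposition \ref{prop:mcu:un}(1.) as if it directly yielded the composition and linear-length-ODE cases of the induction, when in fact that proposition is only about $\manonclasslighttanh$; what your induction actually needs (and what the paper also tacitly relies on) are the closure lemmas established inside the \emph{proof} of Proposition \ref{prop:mcu:un}, namely that polynomial-time computability is preserved by composition and by the linear length-ODE scheme, so that the inductive hypothesis may feed $\MANONlim$-derived sub-functions into those constructors.
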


For the reverse direction, by induction, the only thing to prove is that the class of functions from to integers computable in polynomial time is preserved by the operation $\MANONlim$. Take such a function $\tilde{\tu f}$.  By definition, given $\tu x, \tu m$, $X$ we can compute $\tilde{f}(\tu x, \tu m, 2^X, 2^n)$ with precision $2^{-n} $ in time polynomial in $n$. This must be, by definition of $\MANONlim$ schema, some approximation of $\tu f(\tu x, \tu m)$ over $[-2^{X},2^{X}]$, and hence $\tu f$ is computable in polynomial time. 
This also gives directly Theorem \ref{th:main:twop} as a corollary. 

\MFCSshorter{
From the proofs, we also get a normal form theorem, namely formula \eqref{letrucalafin}. In particular, \begin{theorem}
Any function $f: \N^{d} \times \R^{d''} \to \R^{d'}$ can be obtained from the class $\manonclasslighttanhlim$ using only one schema $\MANONlim$. %\olivierpasimportant{(or $\MANONlimd$)}
\end{theorem}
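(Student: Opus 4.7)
The plan is to re-use, essentially verbatim, the explicit construction developed for the forward direction of Theorem \ref{th:main:one:ex:p}. Let $f : \N^{d} \times \R^{d''} \to \R^{d'}$ belong to $\manonclasslighttanhlim$. By Theorem \ref{th:main:twop}, $f$ is computable in polynomial time in the sense of computable analysis, so it falls under the hypotheses of Theorem \ref{th:main:one:ex:p}. What I want to observe is that the approximating function produced there — the one packaged in formula \eqref{letrucalafin} — never invokes $\MANONlim$ at any intermediate stage.

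Concretely, first I would inspect the ingredients of \eqref{letrucalafin}: the Turing-machine simulator $\bar{Exec}$ (Proposition \ref{prop:deux}), the tuple decoder $\Decode$ (Lemma \ref{lem:manquant} and the remark on tuples just after its proof), the encode-and-multiply map $\EncodeMul$ (Lemma \ref{lem:codage:manon}), and the adaptive barycentre built from $\sigma_1,\sigma_2,\lambda$ (Corollaries \ref{lem:i} and \ref{lem:lambda}). Every one of these lies in $\manonclasslighttanh$ and is obtained from the basic functions by composition and linear length ODE alone. Composing and summing them as in \eqref{letrucalafin} therefore yields $\tilde{f} \in \manonclasslighttanh$ such that
\[
\|\tilde{f}(\tu x,\tu m,2^{X},2^{n}) - f(\tu x,\tu m)\| \le 2^{-n} \qquad \text{for all } \tu x \in [-2^{X},2^{X}]^{d''},\ \tu m\in\N^{d}.
\]
A single application of $\MANONlim$ then gives $f = \MANONlim(\tilde{f})$, which exhibits $f$ in the desired normal form.

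The main obstacle is checking that the argument of Theorem \ref{th:main:one:ex:p}, presented in the text for the special signature $f: \R \times \N \to \R$, extends unchanged to the general type $\N^{d}\times\R^{d''}\to\R^{d'}$. The $d'$-dimensional output reduces componentwise; the $d$ integer arguments are handled by the tuple version of $\Decode$ already noted after Lemma \ref{lem:manquant}. The genuinely delicate point is multiple real arguments: the adaptive barycentre must be formed independently on each of the $d''$ real coordinates, producing a sum of $2^{d''}$ terms whose weights are products of $\lambda$'s and $(1-\lambda)$'s. Naively this introduces fresh multiplications outside the sign/$\tanh$ fragment, which $\manonclasslighttanh$ no longer contains. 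The fix, and the reason \eqref{letrucalafin} is written as it is, is to feed these weights one at a time into the last argument of nested $\EncodeMul$ calls, using Lemma \ref{lem:codage:manon} to absorb each weight as a bounded multiplicative factor at the encoding stage. Once this threading is carried out, the entire approximant lives in $\manonclasslighttanh$ and the single outer $\MANONlim$ suffices, proving the theorem.
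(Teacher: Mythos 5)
Your plan matches the paper's implicit argument: the paper's entire ``proof'' of this normal-form theorem is the preceding sentence, which points to formula \eqref{letrucalafin} as the witness $\tilde{f}\in\manonclasslighttanh$ and observes that a single application of $\MANONlim$ then gives back $f$. Your reconstruction of the ingredients (Turing machine simulation, $\Decode$, $\EncodeMul$, the adaptive barycentre from $\sigma_1,\sigma_2,\lambda$) and of the role of Theorems \ref{th:main:twop} and \ref{th:main:one:ex:p} is accurate, and this reduction is indeed how the authors intend the theorem to be read.

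Where you go beyond the paper is in flagging the multi-real-argument case, and there your proposed fix does not typecheck. The paper explicitly restricts itself to $f:\R\times\N\to\R$ ``for simplicity, and pedagogy'' and never returns to $d''>1$ real coordinates. You are right that naively one then gets $2^{d''}$ terms whose weights are $d''$-fold products of $\lambda_i$'s and $(1-\lambda_i)$'s, and that $\manonclasslighttanh$ contains no multiplication. But the remedy you sketch --- feeding the weights one at a time into the last argument of nested $\EncodeMul$ calls --- cannot be executed as stated: by Lemma \ref{lem:codage:manon}, $\EncodeMul$ requires its third argument to be an $\Image$-encoded integer $\encodagemot(\bar d)$ and returns the generic real $\lambda d$; since its output is a real and not an $\Image$ element, the output of one $\EncodeMul$ cannot be fed as the $\Image$-argument of another. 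Thus the nesting does not compose, and the step absorbing a \emph{product} of weights remains unjustified. To be fair, this is a gap you have exposed rather than introduced --- the paper does not argue the $d''>1$ case either --- but as a proof of the stated theorem your write-up has the same hole, and ``absorb each weight at the encoding stage'' needs a genuinely different mechanism (e.g., restructuring the rounding so that a single $\lambda$-weight suffices, or producing the product inside the simulated Turing machine) before it can be claimed complete.
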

}

We obtain the  statements for polynomial space computability (Theorems \ref{trucchosethmainR} and \ref{th:main:twopSpace}) replacing $\manonclasslighttanh$ by $\spaceclasstanh$, using similar reasoning about space instead of time, considering point 2. instead of 1. of Proposition  \ref{prop:mcu:un}, and Proposition \ref{prop:trendeux} instead of Proposition \ref{prop:deux}.
\MFCSshorter{
We now comment on the relations with formal neural network models. In this article, we are programming with $\tanh$ and sigmoids. We expressed the sigmoids in terms of the $\relu$ function, through Lemma \ref{lemmeRelu}. Function $\tanh$ could be replaced by $\arctan$: the key was to be able to approximate the $\relu$ function with $\tanh$ (Lemma \ref{lemmeRelu}) and this can be proved to hold also for $\arctan$, using error bounds on $\arctan$ established in \cite{TheseDaniel}.

Now, given some function $\tu f: \R^{d} \to \R^{d'}$, some error and some time $t$, our expressions provide explicit expressions in $\manonclasslighttanh$ of an approximation of what is computed by a Turing machine at time $t$ uniformly over any compact domain. 

\begin{remark}
The formula \ref{letrucalafin} can be seen as a function that generates uniformly a family of circuits/formal $\sigtanh$ approximating a given function at some given precision over some given domain. The functions we generate are the composition of essentially linear functions, which can be considered as layers of formal neural networks\footnote{With a concept of neural network that is not assuming that the last layer of the network is made of neurons, and that result may be outputted by some linear combination of the neurons in the last layer.}.
\end{remark}
}

\bibliographystyle{plainurl}

\begin{thebibliography}{10}

\bibitem{bs:impl}
Stephen Bellantoni and Stephen Cook.
\newblock A new recursion-theoretic characterization of the poly-time
  functions.
\newblock {\em Computational Complexity}, 2:97--110, 1992.

\bibitem{MCUJournalManon}
Manon Blanc and Olivier Bournez.
\newblock A characterization of polynomial time computable functions from the
  integers to the reals using discrete ordinary differential equations.
\newblock Submitted. Journal version of \cite{BlancBournezMCU22vantardise}.
  Preliminary version available on \url{https://arxiv.org/abs/2209.13599}.

\bibitem{BlancBournezMCU22vantardise}
Manon Blanc and Olivier Bournez.
\newblock A characterization of polynomial time computable functions from the
  integers to the reals using discrete ordinary differential equations.
\newblock In J{\'{e}}r{\^{o}}me Durand{-}Lose and Gy{\"{o}}rgy Vaszil, editors,
  {\em Machines, Computations, and Universality - 9th International Conference,
  {MCU} 2022, Debrecen, Hungary, August 31 - September 2, 2022, Proceedings},
  volume 13419 of {\em Lecture Notes in Computer Science}, pages 58--74.
  Springer, 2022.
\newblock {MCU'22 Best Student Paper Award}.
\newblock \href {https://doi.org/10.1007/978-3-031-13502-6\_4}
  {\path{doi:10.1007/978-3-031-13502-6\_4}}.

\bibitem{BlancBournezMFCS2023}
Manon Blanc and Olivier Bournez.
\newblock A characterisation of functions computable in polynomial time and
  space over the reals with discrete ordinary differential equations\.
  simulation of turing machines with analytic discrete odes.
\newblock In {\em Mathematical Foundations of Computer Science (MFCS'2023)},
  2023.

\bibitem{BlancBournezLICSsubmitted}
Manon Blanc and Olivier Bournez.
\newblock Measuring robustness of dynamical systems. relating time and space to
  length and precision, 2023.
\newblock URL: \url{https://arxiv.org/abs/2301.12723}, \href
  {https://doi.org/10.48550/ARXIV.2301.12723}
  {\path{doi:10.48550/ARXIV.2301.12723}}.

\bibitem{CIE22}
Olivier Bournez.
\newblock Programming with ordinary differential equations: Some first steps
  towards a programming language.
\newblock In Ulrich Berger, Johanna N.~Y. Franklin, Florin Manea, and Arno
  Pauly, editors, {\em Revolutions and Revelations in Computability - 18th
  Conference on Computability in Europe, CiE 2022, Swansea, UK, July 11-15,
  2022, Proceedings}, volume 13359 of {\em Lecture Notes in Computer Science},
  pages 39--51. Springer, 2022.
\newblock \href {https://doi.org/10.1007/978-3-031-08740-0\_4}
  {\path{doi:10.1007/978-3-031-08740-0\_4}}.

\bibitem{BCGH07}
Olivier Bournez, Manuel~L. Campagnolo, Daniel Gra{\c{c}}a, and Emmanuel
  S.~Hainry.
\newblock Polynomial differential equations compute all real computable
  functions on computable compact intervals.
\newblock {\em Journal of Complexity}, 23(3):317--335, 2007.

\bibitem{MFCS2019}
Olivier Bournez and Arnaud Durand.
\newblock Recursion schemes, discrete differential equations and
  characterization of polynomial time computation.
\newblock In Peter Rossmanith, Pinar Heggernes, and Joost{-}Pieter Katoen,
  editors, {\em 44th Int Symposium on Mathematical Foundations of Computer
  Science, {MFCS}}, volume 138 of {\em LIPIcs}, pages 23:1--23:14. Schloss
  Dagstuhl - Leibniz-Zentrum f{\"{u}}r Informatik, 2019.

\bibitem{MFCSJournal}
Olivier Bournez and Arnaud Durand.
\newblock A characterization of functions over the integers computable in
  polynomial time using discrete ordinary differential equations.
\newblock {\em Computational Complexity}, 32(2):7, 2023.

\bibitem{ICALP2017}
Olivier Bournez and Amaury Pouly.
\newblock A universal ordinary differential equation.
\newblock In {\em International Colloquium on Automata Language Programming,
  ICALP'2017}, 2017.

\bibitem{DBLP:journals/corr/BournezGP16}
Olivier Bournez and Amaury Pouly.
\newblock A survey on analog models of computation.
\newblock In Vasco Brattka and Peter Hertling, editors, {\em Handbook of
  Computability and Complexity in Analysis}. Springer, 2020.

\bibitem{brattka1996recursive}
Vasco Brattka.
\newblock Recursive characterization of computable real-valued functions and
  relations.
\newblock {\em Theoretical Computer Science}, 162(1):45--77, 1996.

\bibitem{brattka2008tutorial}
Vasco Brattka, Peter Hertling, and Klaus Weihrauch.
\newblock A tutorial on computable analysis.
\newblock In {\em New computational paradigms}, pages 425--491. Springer, 2008.

\bibitem{Clo95}
P.~Clote.
\newblock Computational models and function algebras.
\newblock In Edward~R. Griffor, editor, {\em Handbook of Computability Theory},
  pages 589--681. North-Holland, Amsterdam, 1998.

\bibitem{clote2013boolean}
Peter Clote and Evangelos Kranakis.
\newblock {\em Boolean functions and computation models}.
\newblock Springer Science \& Business Media, 2013.

\bibitem{cob65}
Alan Cobham.
\newblock The intrinsic computational difficulty of functions.
\newblock In Y.~Bar-Hillel, editor, {\em Proceedings of the International
  Conference on Logic, Methodology, and Philosophy of Science}, pages 24--30.
  North-Holland, Amsterdam, 1962.

\bibitem{CMSB17vantardise}
Francois Fages, Guillaume Le~Guludec, Olivier Bournez, and Amaury Pouly.
\newblock Strong turing completeness of continuous chemical reaction networks
  and compilation of mixed analog-digital programs.
\newblock In {\em Computational Methods in Systems Biology-CMSB 2017}, 2017.
\newblock {CMSB'2017 Best Paper Award}.

\bibitem{TheseDaniel}
Daniel~S. Gra{\c c}a.
\newblock {\em Computability with Polynomial Differential Equations}.
\newblock PhD thesis, Instituto Superior T{\'e}cnico, 2007.

\bibitem{GracaZhongHandbook}
Daniel~S. Gra{\c c}a and Ning Zhong.
\newblock {\em Handbook of Computability and Complexity in Analysis}, chapter
  Computability of Differential Equations.
\newblock Springer., 2018.

\bibitem{Ko91}
Ker-I Ko.
\newblock {\em Complexity Theory of Real Functions}.
\newblock Progress in Theoretical Computer Science. Birkha{\"u}ser, Boston,
  1991.

\bibitem{Lei-LCC94}
D.~Leivant.
\newblock Intrinsic theories and computational complexity.
\newblock In {\em LCC'94}, number 960 in Lecture Notes in Computer Science,
  pages 177--194, 1995.

\bibitem{Lei94}
Daniel Leivant.
\newblock Predicative recurrence and computational complexity {I}: {W}ord
  recurrence and poly-time.
\newblock In Peter Clote and Jeffery Remmel, editors, {\em Feasible Mathematics
  II}, pages 320--343. Birkh{\"a}user, 1994.

\bibitem{LM93}
Daniel Leivant and Jean-Yves Marion.
\newblock Lambda calculus characterizations of {Poly-Time}.
\newblock {\em Fundamenta Informatica}, 19(1,2):167,184, 1993.

\bibitem{lm:pspace}
Daniel Leivant and Jean-Yves Marion.
\newblock Ramified recurrence and computational complexity {II}: substitution
  and poly-space.
\newblock In L.~Pacholski and J.~Tiuryn, editors, {\em Computer Science Logic,
  8th Workshop, CSL'94}, volume 933 of \emph{Lecture Notes in Computer
  Science}, pages 369--380, Kazimierz, Poland, 1995. Springer.

\bibitem{ng2021recursion}
Keng~Meng Ng, Nazanin R~Tavana, and Yue Yang.
\newblock A recursion theoretic foundation of computation over real numbers.
\newblock {\em Journal of Logic and Computation}, 31(7):1660--1689, 2021.

\bibitem{PouRic79}
Marian~Boykan Pour-El and J.~Ian Richards.
\newblock A computable ordinary differential equation which possesses no
  computable solution.
\newblock {\em Annals of Mathematical Logic}, 17:61--90, 1979.

\bibitem{LivreSiegelmann}
Hava~T. Siegelmann.
\newblock {\em Neural Networks and Analog Computation - Beyond the Turing
  Limit}.
\newblock Birkauser, 1999.

\bibitem{SS95}
Hava~T. Siegelmann and Eduardo~D. Sontag.
\newblock On the computational power of neural nets.
\newblock {\em Journal of Computer and System Sciences}, 50(1):132--150,
  February 1995.

\bibitem{thompson1972subrecursiveness}
David~B Thompson.
\newblock Subrecursiveness: Machine-independent notions of computability in
  restricted time and storage.
\newblock {\em Mathematical Systems Theory}, 6(1-2):3--15, 1972.

\bibitem{Wei00}
Klaus Weihrauch.
\newblock {\em Computable Analysis: an Introduction}.
\newblock Springer, 2000.

\end{thebibliography}

\appendix

%\section*{Table of contents. Content of the appendix}
%
%%\olivier{
%\tableofcontents
%%}

\newpage

\section{Concepts from computable analysis}
\label{sec:defanalysecalculable}

When we say that a function $f: S_{1}  \times \dots \times S_{d} \to \R^{d'}$ is (respectively: polynomial-time) computable this will always be in the sense of computable analysis: see e.g. \cite{brattka2008tutorial,Wei00}. We recall here the basic concepts and definitions, mostly following the book \cite{Ko91}, whose subject is complexity theory in computable analysis. This section is repeating the formalisation proposed in \cite{BlancBournezMCU22vantardise} done to mix complexity issues dealing with integer and real arguments: 
a dyadic  number $d$ is a rational number with a finite binary expansion. That is to say $d=m / 2^{n}$ for some integers $m \in \Z$, $n\in \N$, $n \geq 0$. Let $\dyadic$ be the set of all dyadic rational numbers. We denote by $\dyadic_{n}$ the set of all dyadic rationals $d$ with a representation $s$ of precision $\operatorname{prec}(s)=n$; that is, $\dyadic_{n}=\left\{m \cdot 2^{-n} \mid m \in \Z\right\}$.

\begin{definition}[\cite{Ko91}]  \label{def:cinq} For each real number $x$, a function $\phi: \N \rightarrow \dyadic$ is said to binary converge to $x$ if for all $n \in \N, \operatorname{prec}(\phi(n))=n$ and $|\phi(n)-x| \leq 2^{-n}$. Let $C F_{x}$ (Cauchy function) denotes the set of all functions binary converging to $x$.
\end{definition}

Intuitively, a Turing machine $M$ computes a real function $f$ the following way: 1. The input $x$ to $f$, represented by some $\phi \in C F_{x}$, is given to $M$ as an oracle; 2. The output precision $2^{-n}$ is given in the form of integer $n$  as the input to $M$; 3. The computation of $M$ usually takes two steps, though sometimes these two steps may be repeated an indefinite number of times;
4. $M$ computes, from the output precision $2^{-n}$, the required input precision $2^{-m}$; 5. $M$ queries the oracle to get $\phi(m)$, such that $\|\phi(m)-x\| \leq 2^{-m}$, and computes from $\phi(m)$ an output $d \in \dyadic$ with $\|d-f(x)\| \leq$ $2^{-n}$.
%\end{enumerate}
%
More formally:

\begin{definition}[\cite{Ko91}] A real function $f: \R \rightarrow \R$ is computable if there is a function-oracle {TM} $M$ such that for each $x \in \R$ and each $\phi \in C F_{x}$, the function $\psi$ computed by $M$ with oracle $\phi$ (i.e., $\left.\psi(n)=M^{\phi}(n)\right)$ is in $C F_{f(x)}$. 
\shortermcu{We say the function $f$ is computable on the interval $[a, b]$ if the above condition holds for all $x \in[a, b]$.}
\end{definition}

%\olivier{Is that remark usefull? important?}
%\manon{elle est pertinente, on peut la laisser}
\shortermcu{
\begin{remark}
Given some $x \in \R$, such  an oracle TM $M$ can determine some integer $X$ such that $x \in [-2^{X},2^{X}]$.
\end{remark}
}

\olivierplusimportant{Reste de MCU: Probablement superflus pour l'instant:
The following concept plays a very important role:

\begin{definition} \label{def:above}
Let $f:[a, b] \rightarrow \R$ be a continuous function on $[a, b]$. Then, a function $m: \N \rightarrow \N$ is said to be a modulus function of $f$ on $[a, b]$ if for all $n \in \N$ and all $x, y \in[a, b]$, we have
$$
|x-y| \leq 2^{-m(n)} \Rightarrow|f(x)-f(y)| \leq 2^{-n}
$$
\end{definition}

The following is well known (see e.g. \cite{Ko91} for a proof):

\begin{theorem}
A function $f: \R \rightarrow \R$ is computable iff there exist two recursive functions $m: \N \times \N \rightarrow \N$ and $\psi: \dyadic \times \N \rightarrow \dyadic$ such that
\begin{enumerate}
\item for all $k, n \in \N$ and all $x, y \in[-k, k],|x-y| \leq 2^{-m(k, n)}$ implies $|f(x)-f(y)| \leq 2^{-n}$, and
\item  for all $d \in \dyadic$ and all $n \in \N,|\psi(d, n)-f(d)| \leq 2^{-n}$.
\end{enumerate}
\end{theorem}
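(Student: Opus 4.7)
The proof splits into two directions. For the easy $(\Leftarrow)$ direction, assume $m$ and $\psi$ satisfying (1)--(2) are given; I build an oracle Turing machine $M$ that, on input $n$ and oracle $\phi \in CF_x$, first reads $\phi(0)$ to compute an integer $k$ with $x \in [-k,k]$ (e.g.\ $k = \lceil |\phi(0)| \rceil + 1$, since $|\phi(0) - x| \leq 1$), then queries $\phi$ at precision $m(k, n+2)$ to obtain a dyadic $d = \phi(m(k,n+2))$ with $|d - x| \leq 2^{-m(k,n+2)}$, and finally outputs a rounding to $\dyadic_n$ of $\psi(d, n+2)$. The triangle inequality $|\psi(d,n+2) - f(x)| \leq |\psi(d,n+2) - f(d)| + |f(d) - f(x)| \leq 2^{-(n+2)} + 2^{-(n+2)}$ (using (2) for the first term and (1) applied to $d, x \in [-k,k]$ for the second) plus a rounding error of at most $2^{-(n+2)}$ keeps the output within $2^{-n}$ of $f(x)$, so the function $n \mapsto M^{\phi}(n)$ lies in $CF_{f(x)}$, establishing computability of $f$.

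\paragraph{Forward direction, construction of $\psi$.} For $(\Rightarrow)$, let $N$ be an oracle TM computing $f$. The function $\psi$ is obtained via a canonical Cauchy function for dyadics: for each $d \in \dyadic$, let $\phi_d(j)$ be $d$ rounded (or truncated) to $\dyadic_j$, with $\phi_d(j) = d$ once $j \geq \mathrm{prec}(d)$. Then $\phi_d \in CF_d$, and $(d,j) \mapsto \phi_d(j)$ is recursive, so setting $\psi(d, n) := N^{\phi_d}(n)$ yields a recursive function with $|\psi(d,n) - f(d)| \leq 2^{-n}$ by the very definition of computability of $f$, establishing (2).

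\paragraph{The main obstacle: the modulus.} Constructing $m$ is the harder step. A standard consequence of computability is that $f$ is continuous, hence uniformly continuous on each compact $[-k,k]$, so some classical modulus exists; the question is to produce one effectively. My plan is a bounded search. On input $(k,n)$, enumerate $\mu = 1, 2, 3, \ldots$ and accept the first $\mu$ such that for every pair $d, d' \in \dyadic_\mu \cap [-k-1, k+1]$ with $|d - d'| \leq 2^{-\mu}$ we have $|\psi(d, n+3) - \psi(d', n+3)| \leq 2^{-(n+1)}$. This test is decidable since $\dyadic_\mu \cap [-k-1,k+1]$ is finite and $\psi$ is recursive; termination follows from uniform continuity of $f$ on $[-k-1,k+1]$ together with the bound $|\psi(\cdot, n+3) - f(\cdot)| \leq 2^{-(n+3)}$ at dyadic points, which by triangle inequality eventually forces the $\psi$-test to pass. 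For correctness --- the subtle point --- given $x, y \in [-k,k]$ with $|x-y| \leq 2^{-\mu}$, I would chain a bounded number of adjacent $\mu$-dyadics between a $\mu$-dyadic near $x$ and one near $y$, propagate the accepted bound hop by hop through the triangle inequality (obtaining $|f(d_x) - f(d_y)| \leq 2^{-n}$ after absorbing two $2^{-(n+3)}$ errors per hop), and finally pass to the limit by taking finer dyadic approximations $d_x^{(r)} \to x$, $d_y^{(r)} \to y$ along which $\psi(d^{(r)}, n+r) \to f(x)$ and $\to f(y)$ respectively, using (2) and continuity of $f$. The main delicacies are the boundary handling (hence searching over $[-k-1,k+1]$ rather than $[-k,k]$) and verifying that the number of hops in the chain is bounded independently of $x, y$; both are routine once the search-and-chain skeleton is in place.
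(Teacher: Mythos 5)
Your $(\Leftarrow)$ direction and your construction of $\psi$ in the $(\Rightarrow)$ direction are essentially correct (modulo trivia: the queried dyadic $d$ may land in $[-k-1,k+1]$ rather than $[-k,k]$, and rounding to $\dyadic_n$ costs $2^{-(n+1)}$, not $2^{-(n+2)}$; both are absorbed by adjusting constants). Note that the paper itself does not prove this statement --- it is quoted as a known result with a pointer to Ko's book --- so the comparison below is with the standard argument.

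The construction of the modulus $m$, however, has a genuine gap, and the step you flag as ``the subtle point'' is exactly where it breaks. Your search accepts the \emph{first} $\mu$ such that $\psi(\cdot,n+3)$ varies by at most $2^{-(n+1)}$ between neighbouring points of $\dyadic_\mu\cap[-k-1,k+1]$. This test is unsound: it constrains $f$ only on the grid $\dyadic_\mu$ and says nothing about the behaviour of $f$ strictly between consecutive grid points. Concretely, take a computable continuous $f$ that vanishes at every point of $\dyadic_{10}$ but attains the value $1$ somewhere inside $(0,2^{-10})$ (a scaled piecewise-linear bump). For $k=1$, $n=1$ your test already passes at $\mu=1$, since all $1$-dyadics lie in $\dyadic_{10}$ and $\psi$ is within $2^{-4}$ of $0$ there; yet $|f(0)-f(x^{*})|=1>2^{-1}$ for some $x^{*}$ with $|x^{*}|<2^{-1}$. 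Your proposed repair --- chaining through $\mu$-dyadics and then ``passing to the limit'' along finer approximations $d_x^{(r)}\to x$ --- cannot close this hole: continuity gives $f(d_x^{(r)})\to f(x)$ with no rate, and nothing in the accepted test bounds $|f(d_x^{(r)})-f(d_x)|$ for $r>\mu$, which is precisely the modulus you are trying to establish.

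The deeper point is that a modulus of continuity is \emph{not} computable from $\psi$ alone (any finite set of values $\psi(d,j)$ is consistent with continuous functions of arbitrarily bad modulus), so no search-and-verify scheme over $\psi$ can work. The standard proof (Ko, Theorem 2.13) instead extracts the modulus from the oracle machine $N$ itself: if two oracles $\phi\in CF_x$ and $\phi'\in CF_y$ agree on every query made during the run $N^{\phi}(n+1)$, then the outputs coincide and $|f(x)-f(y)|\le 2^{-n}$; one then uses oracles with built-in slack (answers within $2^{-(i+1)}$ of $x$) so that any $y$ with $|x-y|\le 2^{-(j+1)}$, $j$ being the largest precision queried, admits such an agreeing oracle, and an effective compactness argument over $[-k,k]$ turns the pointwise bound $j=j(x)$ into a recursive function $m(k,n)$. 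You need this detour through the machine's query behaviour; it cannot be bypassed via $\psi$.
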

}

\shortermcu{
\subsection{On computable analysis: Complexity}
}

Assume that $M$ is an oracle machine computing $f$ on a do$\operatorname{main} G$. For any oracle $\phi \in C F_{x}$, with $x \in G$, let $T_{M}(\phi, n)$ be the number of steps for $M$ to halt on input $n$ with oracle $\phi$, and $T_{M}^{\prime}(x, n)=\max \left\{T_{M}(\phi, n) \mid \phi \in C F_{x}\right\}$. The time complexity of $f$ is defined as follows:

\begin{definition}[\cite{Ko91}]
Let $G$ be  bounded closed interval $[a, b]$. Let $f: G \rightarrow \R$ be a computable function. Then, we say that the time complexity of $f$ on $G$ is bounded by a function $t: G \times \N \rightarrow \N$ if there exists an oracle TM $M$ which computes $f$ %(as defined in Definition \ref{def:above}) 
such that for all $x \in G$ and all $n>0$, $T_{M}^{\prime}(x, n) \leq t(x, n)$.
\end{definition}

In other words, the idea is to measure the time complexity of a real function based on two parameters: input real number $x$ and output precision $2^{-n}$. Sometimes, it  is more convenient to simplify the complexity measure to be based on only one parameter, the output precision. For this purpose, we say the uniform time complexity of $f$ on $G$ is bounded by a function $t^{\prime}: \N \rightarrow \N$ if the time complexity of $f$ on $G$ is bounded by a function $t: G \times \N \rightarrow \N$ with the property that for all $x \in G$, $t(x, n) \leq t^{\prime}(n)$.

%\olivier{important:}
{
However, if we do so, it is important to realise that if we had taken $G=\R$ in the previous definition, for unbounded functions $f$, the uniform time complexity would not have existed, because the number of moves required to write down the integral part of $f(x)$ grows as $x$ approaches $+\infty$ or $-\infty$. Therefore, the approach of \cite{Ko91} is to do as follows (The bounds $-2^{X}$ and $2^{X}$ are somewhat arbitrary, but are  chosen here  because the binary expansion of any $x \in\left(-2^{n}, 2^{n}\right)$ has at most $n$ bits in the integral part).
%
%
%\olivier{def principale du coup. Plus ou moins comme cela dans \cite{Ko91}}
\begin{definition}[Adapted from \cite{Ko91}]  For functions $f(x)$ whose domain is $\R$, %and whose range is unbounded,
 we say that the (non-uniform) time complexity of $f$ is bounded by a function $t^{\prime}: \N^{2} \rightarrow \N$ if the time complexity of $f$ on $\left[-2^{X}, 2^{X}\right]$ is bounded by a function $t: \N^{2} \rightarrow \N$ such that $t(x, n) \leq t^{\prime}(X, n)$ for all $x \in\left[-2^{X}, 2^{X}\right]$. 
\end{definition}

%\olivierpourmanon{New: je parle d'espace}
\begin{remark} 
The space complexity of a real function is defined similarly. We say the space complexity of $f: G \rightarrow \R$ is bounded by a function $s: G \times \N \rightarrow \N$ if there is an oracle TM $M$ which computes $f$ such that for any input $n$ and any oracle $\phi \in C F_{x}, M^{\phi}(n)$ uses $\leq s(x, n)$ cells, and the uniform space complexity of $f$ is bounded by $s^{\prime}: \N \rightarrow \N$ if for all $x \in G$ and all $\phi \in C F_{x}, M^{\phi}(n)$ uses $\leq s^{\prime}(n)$ cells. All coming definitions can then be easily extended to talk about space complexity.
\end{remark}

%\subsection{Our main definitions}
%
%\olivier{Extrapoloations de Ko91: il l'écrit pas jusqu'au bout, mais c'est l'esprit. Arnaque ou pas? Je pense pas..}

As we want to talk about general functions in $\Lesfonctionsquoi$, we extend the approach to more general functions.  
%
%Clearly, the notion of  (respectively: polynomial time ) computable function can be extended to function  $\R^{p} \to \R$ instead of $\R \to \R$ \cite{Ko91}. 
%
%In particular, we will use the following:
(for conciseness, when $\tu x=(x_{1},\dots,x_{p})$, $\tu X= (X_{1},\dots, X_{p})$, we write
$\tu x \in [-2^{\tu X}, 2^{\tu X}]$ as a shortcut for $x_{1} \in\left[-2^{X_{1}}, 2^{X_{1}}\right]$,  \dots, $x_{p} \in\left[-2^{X_{p}}, 2^{X_{p}}\right]$). 

%\olivier{Pas encore convaincu que c'est propre ce truc.}

\begin{definition}[Complexity for real functions: general case]   \label{def:bonendroit} Consider a function  $f(x_{1},\dots,x_{p}$, $n_{1},\dots,n_{q})$ whose domain is $\R^{p} \times \N^{q}$. %and whose range is unbounded,
 We say that the (non-uniform) time complexity of $f$ is bounded by a function $t^{\prime}: \N^{p+q+1} \rightarrow \N$ if the time complexity of $f(\cdot,\dots,\cdot,\ell(n_{1}),\dots,\ell(n_{q}))$ on $\left[-2^{X_{1}}, 2^{X_{1}}\right] \times \dots \left[-2^{X_{p}}, 2^{X_{p}}\right] $  
 is bounded by a function $t(\cdot,\dots,\cdot,\ell(n_{1}),\dots,\ell(n_{q}),\cdot): \N^{p} \times \N \to \N$ such that 
%\label{eqbo}
$ t(\tu x,\ell(n_{1}),\dots,\ell(n_{q}), n) \leq t^{\prime}(\tu X,\ell(n_{1}), \dots,\ell(n_{q}), n)$
% \end{equation}
 whenever  $\tu x \in \left[-2^{\tu X}, 2^{\tu X}\right].$
 We say that $f$ is polynomial time computable if $t^{\prime}$ can be chosen as a polynomial. 
 We say that a vectorial function is polynomial time computable iff all its components are. 
 \end{definition}
 
 \shortermcu{
 \begin{remark}
There is some important {subtlety}: when considering $f: \N \to \Q$, as $\Q \subset \R$, stating $f$ is computable may mean two things: in the classical sense, given integer $y$,  i.e. one can compute $p_y$ and $q_{y}$ some integers such that $f(y)=p_{y}/q_{y}$, or that it is computable in the sense of computable analysis: given some precision $n$,  given arbitrary $y$, and $n$ we can provide some rational (or even dyadic) $q_{n}$ such that $|q_{n}-f(y)| \leq 2^{-n}$. As we said, we always consider the latter.
\end{remark}
}

 We do that so this measure of complexity extends the usual complexity measure for functions over the integers, where complexity of integers is measured with respects of their lengths, and over the reals, where complexity is measured with respect to their approximation.
 %
% \olivier{Est-ce clair que c'est bien cela de ce qui est écrit?}
% \manon{oui, pour moi en tout cas }
 In particular, in the specific case of a function $f: \N^{d} \to \R^{d'}$, that basically means there is some polynomial $t': \N^{d+1} \to \N$ so that the time complexity of producing some dyadic approximating $f(\tu m)$ at precision $2^{-n}$ is bounded by $t'(\ell(m_{1}),\dots,\ell(m_{d}),n)$. 

%\olivier{Important: Important de bien faire/comprendre cette remarque}

%\olivier{On a besoin de ca? subsection{Some facts from computable analysis} En tous les cas, ca parle de $P_{C}[a, b]$ qu'on a pas introduit. Peut etre faire disparaitre
%
%\begin{theorem}[Alternative characterization \cite{Ko91}]  A function $f$ is in $P_{C}[a, b]$ iff there exist polynomial functions $m$ and $q$ and a function $\psi:(\dyadic \cap[a, b]) \times \N \rightarrow \dyadic$ such that
%\begin{enumerate}
%\item $m$ is a modulus function for $f$ on $[a, b]$,
%\item for any $d \in \dyadic \cap[a, b]$ and all $n \in \N,|\psi(d, n)-f(d)| \leq 2^{-n}$, and
%\item $\psi(d, n)$ is computable in time $q(\ell(d)+n)$.
%\end{enumerate}
%\end{theorem}
%}

%\olivier{New. Est-ce clair. Est-ce une terminologie élégante. Mieux?}
In other words, when considering that a function is polynomial time computable, it is in the length of all its integer arguments, as this is the usual convention. 
%\olivier{Enlevé: mais attention besoin dans annexes.
%However, we need sometimes to consider also polynomial dependency directly
%	in one of some specific integer argument, say $n_{i}$,  and not on its length $\ell(n_{i})$. We say that \emph{the  function is polynomial time computable, \unaire{n_{i}}} when this holds (keeping possible other integer arguments $n_{j}$, $j \neq i$, measured by their length). 
%}

\olivierplusimportant{Commenté, car pas besoin, je pense (enfin j'espère)
A well-known observation is the following.

\begin{theorem} Consider $\tu f$ as in Definition \eqref{def:bonendroit} computable in polynomial time. Then $\tu f$ has a polynomial modulus function of continuity, that is to say there is a polynomial function $m_{\tu f}: \N^{p+q+1}\rightarrow \N$ such that for all $\tu x,\tu y$ and all $n>0$, $\|\tu x-\tu y\| \leq 2^{-m_{\tu f}(\tu X,\ell(n_{1}),\dots,\ell(n_{q}), n)}$ implies 
$\|\tu f(\tu x,n_{1}, \dots,n_{q})-\tu f(\tu y,n_{1}, \dots,n_{q})\| \leq 2^{-n}$,
whenever $\tu x,\tu y  \in\left[-2^{\tu X}, 2^{\tu X}\right]$.
\end{theorem}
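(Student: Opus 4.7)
The plan is to use the classical coupling-of-oracles argument from computable analysis, adapted to the mixed real/integer setting of Definition \ref{def:bonendroit}. Let $M$ be an oracle Turing machine that witnesses polynomial-time computability of $\tu f$, with running time bounded by a polynomial $t(\tu X,\ell(n_1),\dots,\ell(n_q),n)$ when the real arguments lie in $[-2^{\tu X},2^{\tu X}]$. The candidate modulus will be $m_{\tu f}(\tu X,\ell(n_1),\dots,\ell(n_q),n) = t(\tu X,\ell(n_1),\dots,\ell(n_q),n+1)+1$, which is polynomial because $t$ is.

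First I would record the key finitary fact: during a run at output precision $n+1$, $M$ makes at most $t(\tu X,\ell(\tu n),n+1)$ oracle queries, and every query asks for a dyadic at precision at most $K := t(\tu X,\ell(\tu n),n+1)$, because reading back an answer of precision $k$ requires at least $k$ tape cells and hence at least $k$ steps. Thus the entire computation is determined by a finite list of oracle answers, each in some $\dyadic_k$ with $k\le K$.

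Next I would build the coupling. Assume $\|\tu x-\tu y\|\le 2^{-m_{\tu f}(\dots)}=2^{-(K+1)}$, with both $\tu x,\tu y\in[-2^{\tu X},2^{\tu X}]$. For each real coordinate $i$ and each $k\le K$, density of $\dyadic_k$ (spacing $2^{-k}$) yields a dyadic $d_{i,k}\in\dyadic_k$ with $|d_{i,k}-x_i|\le 2^{-k-1}$, and then $|d_{i,k}-y_i|\le 2^{-k-1}+2^{-(K+1)}\le 2^{-k}$. Define $\phi_{x_i}(k)=\phi_{y_i}(k)=d_{i,k}$ for $k\le K$ and extend arbitrarily for $k>K$ to valid Cauchy names of $x_i$ and $y_i$. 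These oracles respectively lie in $CF_{x_i}$ and $CF_{y_i}$, and they return identical answers on every query $M$ can possibly make within its time budget.

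Finally, since $M$ is deterministic and both runs see the same oracle answers, $M^{\phi_{\tu x}}(n+1)=M^{\phi_{\tu y}}(n+1)=: d$. By the computability hypothesis $\|d-\tu f(\tu x,\tu n)\|\le 2^{-(n+1)}$ and $\|d-\tu f(\tu y,\tu n)\|\le 2^{-(n+1)}$, so by the triangle inequality $\|\tu f(\tu x,\tu n)-\tu f(\tu y,\tu n)\|\le 2^{-n}$, which is the desired modulus property. The only genuinely delicate point is the bookkeeping between the vectorial norm $\|\tu x-\tu y\|$ and the per-coordinate oracle construction, and the observation that all queries reachable in polynomial time are at precision at most polynomial; both are routine and yield the polynomial bound on $m_{\tu f}$ directly from the polynomial bound on $t$.
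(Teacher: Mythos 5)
Your argument is correct, and it is the standard oracle-coupling proof of this fact (essentially Corollary 2.21 in Ko's book, which is what the paper itself points to): the paper states this theorem as ``a well-known observation'' and supplies no proof of its own, so there is nothing to diverge from. Your choice of modulus $m_{\tu f}=t(\cdot,n+1)+1$, the construction of a common prefix of Cauchy names for $\tu x$ and $\tu y$ up to precision $K=t(\cdot,n+1)$, and the determinism-plus-triangle-inequality conclusion are exactly the expected ingredients, and the dyadic bookkeeping ($2^{-k-1}+2^{-(K+1)}\le 2^{-k}$ for $k\le K$, with $\operatorname{prec}(d_{i,k})=k$ enforced by zero-padding) checks out. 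The one step I would state more carefully is the claim that no query at precision $k>K$ can influence the run: your two oracles are only guaranteed to agree up to precision $K$, so you need either that such queries cannot be issued within the time bound (which depends on the convention that writing the query or traversing the returned answer of length $\ge k$ costs $\ge k$ steps) or, more robustly, that the machine cannot read beyond the first $t$ cells of any answer within $t$ steps, so that answers may be chosen to agree on every cell the machine can visit. Either way this is a feature of the oracle-machine model rather than a gap in your reasoning, and the resulting bound on $m_{\tu f}$ is polynomial exactly because $t$ is.
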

}

\section{Some results from $\THEPAPIERSPLUS$}
\label{sec:dode}

\subsection{Some general statements from \THEPAPIERS}

To be self-content, we recall in this section some results and concepts from \THEPAPIERS. All the statements in this section are already present in \THEPAPIERS: we are just repeating them here in case this helps.  We provide some of the proofs when they are not in the preliminary ArXiv version.  

As said in the introduction:

\begin{definition}[Discrete Derivative] The discrete derivative of
	$\tu f(x)$ is defined as $\Delta \tu f(x)= \tu f(x+1)-\tu
        f(x)$. We will also write $\tu f^\prime$ for
        $\Delta \tu f(x)$ to help readers not familiar with discrete differences to understand
	statements with respect to their classical continuous counterparts. % , and their analogy with classical derivatives for functions
	% over the
	% reals.
	
\end{definition}

Several results from classical derivatives generalise
to the settings of discrete differences: this includes linearity of derivation $(a \cdot
f(x)+ b \cdot g(x))^\prime = a \cdot f^\prime(x) + b \cdot
g^\prime(x)$, formulas for products
    %     \olivier{Review 3: l.132 Is the derivative of product functions *not* commutative?}
and division such as
 $(f(x)\cdot g(x))^\prime =
 f^\prime(x)\cdot g(x+1)+f(x) \cdot g^\prime(x)= f(x+1)  g^\prime(x) +  f^\prime(x)  g(x)$. 
% or  $\left(\frac{f}{g}
%  \right)^\prime=\frac{f^\prime(x) g(x) -f(x)g^\prime(x)
%  }{g(x)g(x+1)}$). 
    %     (see Appendix \ref{sec:dcal}).
    Notice that, however, there is no simple equivalent of the chain rule, in other words, there is no simple formula for the derivative of the composition of two functions. 
 
A fundamental concept is the following:

\begin{definition}[Discrete Integral]
	Given some function $\tu f(x)$, we write $$\dint{a}{b}{\tu f(x)}{x}$$
	as a synonym for $\dint{a}{b}{\tu f(x)}{x}=\sum_{x=a}^{x=b-1}
        \tu f(x)$ with the convention that it takes value $0$ when $a=b$ and
        $\dint{a}{b}{\tu f(x)}{x}=- \dint{b}{a}{\tu f(x)}{x}$ when $a>b$. 
\end{definition}

The telescope formula yields the so-called Fundamental Theorem of
Finite Calculus: 

\begin{theorem}[Fundamental Theorem of Finite Calculus]
	Let $\tu F(x)$ be some function.
	Then,
	$$\dint{a}{b}{\tu F^\prime(x)}{x}= \tu F(b)-\tu F(a).$$
      \end{theorem}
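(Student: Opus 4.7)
The plan is to unfold both sides using the definitions of discrete derivative and discrete integral given just above the statement, and then observe that the resulting sum telescopes.

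First, I would handle the main case $a < b$. By the definition of the discrete integral, $\dint{a}{b}{\tu F^\prime(x)}{x} = \sum_{x=a}^{b-1} \tu F^\prime(x)$, and by the definition of the discrete derivative, each term equals $\tu F(x+1) - \tu F(x)$. Thus the sum becomes $\sum_{x=a}^{b-1} \bigl(\tu F(x+1) - \tu F(x)\bigr)$, which is a telescoping sum: every intermediate value $\tu F(a+1), \tu F(a+2), \dots, \tu F(b-1)$ appears once with a positive sign and once with a negative sign, so they all cancel, leaving $\tu F(b) - \tu F(a)$.

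Next, I would dispose of the boundary case $a = b$: by convention the integral is $0$, and the right-hand side is $\tu F(a)-\tu F(a) = 0$, so equality holds trivially. For the remaining case $a > b$, I would use the convention $\dint{a}{b}{\tu F^\prime(x)}{x} = -\dint{b}{a}{\tu F^\prime(x)}{x}$, apply the already-proved case with $b<a$ to get $-(\tu F(a)-\tu F(b)) = \tu F(b)-\tu F(a)$, which matches the right-hand side.

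This proof is really just a one-line telescoping argument, so there is no genuine obstacle; the only thing to be careful about is honoring the sign conventions fixed in the definition of $\dint{\cdot}{\cdot}{\cdot}{\cdot}$, which the case split above handles. In particular, no continuity or smoothness assumption on $\tu F$ is needed, since everything is a finite sum of values of $\tu F$ on integer points.
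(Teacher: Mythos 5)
Your proof is correct and matches the paper's approach, which simply invokes the telescoping formula ("The telescope formula yields the so-called Fundamental Theorem of Finite Calculus") without writing out the details. You also carefully handle the $a=b$ and $a>b$ conventions, which the paper leaves implicit.
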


A classical concept in discrete calculus is the one of falling
	power defined as $$x^{\underline{m}}=x\cdot (x-1)\cdot (x-2)\cdots(x-(m-1)).$$
	This notion is  motivated by the fact that it satisfies a derivative formula $
        (x^{\underline{m}})^\prime  = m \cdot x^{\underline{m-1}}$  similar to the classical
        one for powers in the continuous setting.
In a similar spirit, we 
introduce the concept of falling exponential. % This seems new (as far as
% the authors know) and will be useful later on.

% \olivier{Attention: si vectoriel, les produits ne sont pas
%   commutatifs, donc l'ordre du produit est important. Est-ce judicieux
% d'indiquer l'ordre dans $\prod_{t=0}^{t=x-1} $? On note $\prod^{t=0}_{t=x-1}$? }

\begin{definition}[Falling exponential]
	Given some function $\tu U(x)$, the expression $\tu U$ to the
	falling exponential $x$,
	denoted by $\fallingexp{\tu U(x)}$, stands
        for  \begin{eqnarray*}
                \fallingexp{\tu U(x)} &=&
                                                                                (1+ \tu U^\prime(x-1)) \cdots
                                        (1+ \tu U^\prime(1)) \cdot (1+ \tu U^\prime(0))   \\
                                         &=&
                   \prod_{t=0}^{t=x-1} (1+ \tu U^\prime(t)),
                    \end{eqnarray*}
	with the convention that $\prod_{0}^{0}=\prod_{0}^{-1}=\tu {id}$, where $\tu
        {id}$ is the identity (sometimes denoted  $1$ hereafter).
        % in what follows). 
	%\olivier{Enlevé: On considère que des $x$ entiers naturels !! 
	%For $x<0$, assuming $U^\prime(x) \neq -1$ over $\{x+1,x+2,\dots,-1,0\}$, we
	%use the convention that  $\prod_{0}^{-x} =
	%\frac{1}{\prod_{0}^{x}}$ for $x<0$.
	%Notice that this is a rational number, and not an integer.} 
\end{definition}

This is motivated by the remarks that 
	$2^x=\fallingexp{x}$, and
                                              %                                               \begin{theorem}[Derivative of a falling exponential] T
        that the discrete
	derivative of a falling exponential is given by
	$$\left(\fallingexp{\tu U(x)}\right )^\prime = \tu U^\prime(x) \cdot
	\fallingexp{\tu U(x)}$$
	%
	%\olivier{Enlevé:
	%This works for $x \ge 0$, but also for all $x<0$ assuming $U^\prime(x) \neq
	%-1$ over $\{x+1,x+2,\dots,-1,0\}$.}
    %  \end{theorem}
      for all $x \in \N$.

\begin{lemma}[Derivation of an integral with parameters]  \label{derivationintegral}
   Consider $$\tu F(x) = \dint{a(x)}{b(x)} {\tu f(x,t)}{t}.$$
   Then \begin{eqnarray*}
          \tu F'(x) &=& \dint{a(x)}{b(x)}{  \frac{\partial \tu f}{\partial
                        x} (x,t)}{t} 
                     + \dint{0}{-a^\prime(x)}{\tu f(x+1,a(x+1)+t)}{t} 
+ \dint{0}{b'(x)}{ \tu f(x+1,b(x)+t ) } {t}. 
\end{eqnarray*}

\noindent In particular, when $a(x)=a$ and $b(x)=b$ are constant functions, $\tu F'(x) = \dint{a}{b}{  \frac{\partial \tu f}{\partial
     x} (x,t)}{t},$
     and when $a(x)=a$ and $b(x)=x$,
     $\tu F'(x) = \dint{a}{x}{  \frac{\partial \tu f}{\partial
     x} (x,t)}{t} + \tu f(x+1,x)$.

\end{lemma}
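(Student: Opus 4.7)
The plan is to unwind the definition $\tu F'(x) = \tu F(x+1) - \tu F(x)$ and split the resulting difference of sums into an interior contribution (coming from the change of the integrand in $x$) plus two boundary contributions (coming from the displacements of $a(x)$ and $b(x)$). First I would write
\[
\tu F'(x) \;=\; \dint{a(x+1)}{b(x+1)}{\tu f(x+1,t)}{t} \;-\; \dint{a(x)}{b(x)}{\tu f(x,t)}{t},
\]
then insert and subtract $\dint{a(x)}{b(x)}{\tu f(x+1,t)}{t}$ to obtain
\[
\tu F'(x) \;=\; \dint{a(x)}{b(x)}{\bigl(\tu f(x+1,t)-\tu f(x,t)\bigr)}{t} \;+\; \dint{a(x+1)}{b(x+1)}{\tu f(x+1,t)}{t} \;-\; \dint{a(x)}{b(x)}{\tu f(x+1,t)}{t}.
\]
The first term is by definition $\dint{a(x)}{b(x)}{\frac{\partial \tu f}{\partial x}(x,t)}{t}$, giving the interior contribution.

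Next, for the boundary contributions, I would use the Chasles-type relation $\dint{p}{q}+\dint{q}{r}=\dint{p}{r}$ for the discrete integral (immediate from the definition $\dint{a}{b}=\sum_{t=a}^{b-1}$ together with the antisymmetry convention $\dint{a}{b}=-\dint{b}{a}$) to rewrite
\[
\dint{a(x+1)}{b(x+1)}{\tu f(x+1,t)}{t} - \dint{a(x)}{b(x)}{\tu f(x+1,t)}{t} = \dint{b(x)}{b(x+1)}{\tu f(x+1,t)}{t} + \dint{a(x+1)}{a(x)}{\tu f(x+1,t)}{t}.
\]
Then I would perform the shift of index $s=t-b(x)$ in the first integral, giving $\dint{0}{b'(x)}{\tu f(x+1,b(x)+s)}{s}$, and $s=t-a(x+1)$ in the second, giving $\dint{0}{-a'(x)}{\tu f(x+1,a(x+1)+s)}{s}$. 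Summing the three pieces yields the claimed formula.

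The only delicate step is checking that the change of variable $\dint{a}{b}{g(t)}{t}=\dint{0}{b-a}{g(a+s)}{s}$ is valid for \emph{all} integer values of $b-a$, including negative ones; I would verify this by splitting into the cases $a\le b$ and $a>b$ and using the antisymmetry of $\dint{}{}$. Once this is established, the formula is uniform in the sign of $a'(x)$ and $b'(x)$.

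Finally, I would read off the two special cases: when $a,b$ are constants, $a'(x)=b'(x)=0$ so both boundary integrals are $\dint{0}{0}=0$; when $a(x)=a$ and $b(x)=x$, we have $a'(x)=0$ and $b'(x)=1$, so $\dint{0}{-a'(x)}{\cdots}{t}=0$ and $\dint{0}{1}{\tu f(x+1,x+t)}{t}=\tu f(x+1,x)$, recovering the stated specialisation. I do not expect any serious obstacle: the main care is keeping track of the signs in the antisymmetric discrete integral during the index shift.
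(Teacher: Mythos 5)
Your proof is correct and follows essentially the same route as the paper's: unwind $\tu F'(x)=\tu F(x+1)-\tu F(x)$, add and subtract the cross term to isolate $\dint{a(x)}{b(x)}{\frac{\partial \tu f}{\partial x}(x,t)}{t}$, split the residual into two boundary sums by the Chasles relation, and shift indices to get the stated form. The paper performs these steps a bit more tersely in a single chain of equalities, while you are slightly more explicit about verifying the Chasles relation and the index-shift identity for negative increments via the antisymmetry convention — a refinement of care rather than a different idea.
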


 \begin{proof}
\begin{eqnarray*}
  \tu F'(x) &=& \tu F(x+1) - \tu F(x) \\
  &=& \sum_{t= a(x+1)}^{b(x+1) - 1} \tu f(x+1,t) -
                          \sum_{t= a(x)}^{b(x) - 1} \tu f(x,t)   \\
  &=& \sum_{t= a(x)}^{b(x)-1} \left( \tu  f(x+1,t) - \tu f(x,t)
      \right)    
     +   \sum_{t=a(x+1)} ^{t= a(x)-1} \tu f(x+1,t) 
                      + \sum_{t=b(x)}
      ^{b(x+1)-1} \tu f(x+1,t) \\
  &=& \sum_{t= a(x)}^{b(x)-1}  \frac{\partial \tu f}{\partial
      x} (x,t)  + \sum_{t=a(x+1)} ^{t= a(x)-1} \tu f(x+1,t) + \sum_{t=b(x)}
      ^{b(x+1)-1} \tu f(x+1,t) \\
  &=& \sum_{t= a(x)}^{b(x)-1}   \frac{\partial \tu f}{\partial
      x} (x,t)  + \sum^{t=-a(x+1)+a(x)-1} _{t=0} \tu f(x+1,a(x+1)+t) \\
      &&
  + \sum_{t=0}
      ^{b(x+1)-b(x)-1} \tu  f(x+1,b(x)+t).
\end{eqnarray*}
\end{proof}

\begin{lemma}[Solution of linear ODE%$\tu f’(x,\tu y)= \tu A (x,\tu y) \cdot \tu f(x,\tu y) + \tu B(x,\tu y)$
	] \label{def:solutionexplicitedeuxvariables}
	For matrices $\tu A$ and vectors $\tu B$ and $\tu G$,
	the solution of equation $\tu f^\prime(x,\tu y)= \tu A(\tu f(x,\tu y),\tu h(x, \tu y), x,\tu y) \cdot \tu f(x,\tu y)
	+  \tu
	B (\tu f(x,\tu y), \tu h(x, \tu y),  x,\tu y)$  with initial conditions $\tu f(0,\tu y)= \tu G(\tu y)$ is
	\begin{eqnarray*}\label{soluce}
        \tu f(x,\tu y)  &=&
          \left( \fallingexp{\dint{0}{x}{\tu
                             A(\tu f(t,\tu y),\tu h(t, \tu y), t,\tu y)}{t}} \right) \cdot \tu G (\tu
                             y)  \\
           &&
          +
	\dint{0}{x}{ \left(
		\fallingexp{\dint{u+1}{x}{\tu A(\tu f(t,\tu y),\tu h(t, \tu y), t,\tu y)}{t}} \right) \cdot
              \tu B(\tu f(u,\tu y),\tu h(u, \tu y), u,\tu y)} {u}.
          \end{eqnarray*}

	%\olivier{Enlevé le cas $x$ négatif.}
	% This works and the solution is necessarily unique over $x \ge 0$, but also over $x \ge -x_0$ for $-x_0 <  0$ if we have $a(t,\tu y) \neq -1$ for $-x_0 < t \le 0$ for all $\tu y$.   %When some initial value is given for $f(0,\tu y)$, the solution is necessarily unique.
        \end{lemma}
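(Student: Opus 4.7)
The plan is to verify the closed-form formula by induction on $x \in \N$, rather than attempting a continuous-style variation of parameters derivation. First observe that the ODE, rewritten as $\tu f(x+1,\tu y) = (\tu{id} + \tu A(\tu f(x,\tu y),\tu h(x,\tu y),x,\tu y)) \cdot \tu f(x,\tu y) + \tu B(\tu f(x,\tu y),\tu h(x,\tu y),x,\tu y)$ together with $\tu f(0,\tu y)=\tu G(\tu y)$, uniquely determines $\tu f(x,\tu y)$ by recursion on $x$. Let $\tu f^{*}(x,\tu y)$ denote the right-hand side of the claimed formula; the goal reduces to showing $\tu f^{*} = \tu f$ pointwise.

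The base case is immediate: at $x=0$ both integrals are over $\dint{0}{0}$, reducing to $\tu{id}$ (empty falling exponential, by the convention $\prod_0^{-1}=\tu{id}$) and to the empty outer integral, respectively, so $\tu f^{*}(0,\tu y)=\tu{id} \cdot \tu G(\tu y)=\tu G(\tu y)$. For the inductive step, I would unroll the falling exponential using the product identity
\[ \fallingexp{\dint{a}{x+1}{\tu A(t,\tu y)}{t}} = (\tu{id} + \tu A(x,\tu y)) \cdot \fallingexp{\dint{a}{x}{\tu A(t,\tu y)}{t}}, \]
valid for any fixed lower bound $a\le x$ (writing $\tu A(t,\tu y)$ as shorthand for $\tu A(\tu f(t,\tu y),\tu h(t,\tu y),t,\tu y)$), which follows directly from the product form of $\fallingexp{\cdot}$ since incrementing the upper bound of the integral prepends one factor $(\tu{id}+\tu A(x,\tu y))$. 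For the new boundary term arising when $u=x$ in the outer sum, observe $\fallingexp{\dint{x+1}{x+1}{\tu A}{t}}=\tu{id}$, so that summand contributes exactly $\tu B(x,\tu y)$.

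Substituting these into $\tu f^{*}(x+1,\tu y)$ and factoring $(\tu{id}+\tu A(x,\tu y))$ out of the ``old'' terms immediately yields
\[ \tu f^{*}(x+1,\tu y) = (\tu{id}+\tu A(x,\tu y)) \cdot \tu f^{*}(x,\tu y) + \tu B(x,\tu y), \]
which matches the unrolled ODE, closing the induction. The only technicality is bookkeeping: the discrete integral convention $\dint{a}{a}=0$, the shift from $\dint{u+1}{x}$ to $\dint{u+1}{x+1}$, and splitting $\sum_{u=0}^{x}$ as $\sum_{u=0}^{x-1}$ plus the $u=x$ boundary contribution. No deeper obstacle arises, since $\fallingexp{\cdot}$ factors multiplicatively exactly the way $e^{\int_a^{x+1}}=e^{\tu A(x)}\cdot e^{\int_a^x}$ does in the continuous variation of constants computation, and the discrete recursion uniquely characterises $\tu f$, so matching the recursion suffices.
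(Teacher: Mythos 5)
Your proof is correct, and it verifies the same underlying fact as the paper's proof, namely that the right-hand side of the claimed formula satisfies the defining recurrence and agrees with $\tu G$ at $x=0$, then concludes by uniqueness. The route, however, differs in flavour. The paper writes the verification in the notation of its discrete calculus: it \emph{differentiates} the right-hand side, using linearity, the rule $\bigl(\fallingexp{\tu U(x)}\bigr)'=\tu U'(x)\cdot\fallingexp{\tu U(x)}$ for the falling exponential, and Lemma \ref{derivationintegral} on derivation of an integral with $x$-dependent bounds (which produces the boundary term $\tu B(\tu f(x,\tu y),\tu h(x,\tu y),x,\tu y)$), and then reads off $\tu{rhs}'=\tu A\cdot\tu{rhs}+\tu B$. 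You instead unroll everything into the product/sum form, use the elementary factorisation $\prod_{t=a}^{x}(\tu{id}+\tu A(t))=(\tu{id}+\tu A(x))\prod_{t=a}^{x-1}(\tu{id}+\tu A(t))$ and the empty-product convention at $u=x$, and close a direct induction on $x$. Since $\tu{rhs}'(x)=\tu A\cdot\tu{rhs}(x)+\tu B$ is by definition the same equation as $\tu{rhs}(x+1)=(\tu{id}+\tu A)\cdot\tu{rhs}(x)+\tu B$, the two verifications are really the same computation in different notation; what your version buys is that it is self-contained and does not invoke the integral-derivation lemma, at the cost of not exercising (and thereby not showcasing) the discrete-calculus toolkit the paper builds. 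One small point to keep explicit if you were to write this out in full: $\tu A$ and $\tu B$ in the formula are evaluated at $\tu f(t,\tu y)$, not at $\tu f^*(t,\tu y)$, so the inductive step needs the strong induction hypothesis $\tu f^*(u,\tu y)=\tu f(u,\tu y)$ for all $u\le x$ to replace $\tu f$ by $\tu f^*$ and conclude $\tu f^*(x+1,\tu y)=\tu f(x+1,\tu y)$; this is exactly the appeal to uniqueness that the paper also makes.
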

                
\begin{proof}
Denoting the right-hand side by $\tu {rhs}(x,\tu y)$, we  have 
$$\begin{array}{ccl} \bar {\tu {rhs}}^{\prime}(x,\tu y)&=& \tu A(\tu f(x,\tu y),\tu h(x, \tu y), x,\tu y)  \cdot  \left( \fallingexp{\dint{0}{x}{\tu
                             A(\tu f(t,\tu y),\tu h(t, \tu y),  t,\tu y)}{t}} \right)   \cdot \tu G (\tu
                             y) \\
                                     && +
                             \dint{0}{x}{ \left(
		\fallingexp{\dint{u+1}{x}{\tu A(\tu f(t,\tu y), \tu h(t, \tu y),  t,\tu y)}{t}} \right)^{\prime} \cdot
              \tu B(\tu f(u,\tu y), \tu h(u, \tu y), u,\tu y)} {u} \\ 
              && +
                             \left(
		\fallingexp{\dint{x+1}{x+1}{\tu A(\tu f(t,\tu y),\tu h(t, \tu y), t,\tu y)}{t}} \right) \cdot
              \tu B(\tu f(x,\tu y), \tu h(x, \tu y), x,\tu y) \\
              &=& \tu A(\tu f(x,\tu y),\tu h(x, \tu y), x,\tu y)  \cdot  \left( \fallingexp{\dint{0}{x}{\tu
                             A(\tu f(t,\tu y),\tu h(t, \tu y),  t,\tu y)}{t}} \right)   \cdot \tu G (\tu
                             y)  \\ && + \  \tu A(\tu f(x,\tu y), \tu h(x, \tu y), x,\tu y) \cdot \\
                             && \dint{0}{x}{ \left(
		\fallingexp{\dint{u+1}{x}{\tu A(\tu f(t,\tu y),\tu h(t, \tu y), t,\tu y)}{t}} \right)  \tu B(\tu f(u,\tu y), \tu h(u, \tu y), u,\tu y)} {u} \\
                    &&
              + \ \tu B(\tu f(x,\tu y),\tu h(x, \tu y), x,\tu y)
              \\
                            &=&  \tu A(\tu f(x,\tu y), \tu h(x, \tu y), x,\tu y) \cdot \tu {rhs}(x,\tu y) 	+  \tu
	B (\tu f(x,\tu y),\tu h(x, \tu y), x,\tu y)
	              \end{array} 
              $$
              where we have used linearity of derivation and definition of falling exponential for the first term, and derivation of an integral (Lemma \ref{derivationintegral}) providing the other terms to get the first equality, and then the definition of falling exponential.
              This proves the property by unicity of solutions of a discrete ODE, observing that $\bar {\tu {rhs}}(0,\tu y)=\tu G(\tu y)$.
\end{proof}

We write also $1$ for the identity. 

        \begin{remark} \label{rq:fund}
          Notice that this can  be rewritten  as 
          \begin{equation} \label{eq:rq:fund} 
\tu f(x,\tu y)=\sum_{u=-1}^{x-1}  \left(
\prod_{t=u+1}^{x-1} (1+\tu A(\tu f(t,\tu y), \tu h(t, \tu y),  t,\tu y)) \right) \cdot  \tu B(\tu f(u,\tu y), \tu h(u, \tu y), u,\tu y)
,
\end{equation}
with the (not so usual) conventions that for any function $\kappa(\cdot)$,  $\prod_{x}^{x-1} \tu \kappa(x) = 1$ and $\tu
B(-1,\tu y)=\tu G(\tu y)$.
%\end{remark}
%\begin{remark} \label{rq:fdeux} 
Such equivalent expressions both have a clear computational content. They can
be interpreted as an algorithm unrolling
the computation of   $\tu f(x+1,\tu y)$ from the computation of  $\tu
f(x,\tu y), \tu f(x-1,\tu y), \ldots, \tu f(0,\tu y)$.  
        \end{remark}
        
A fundamental fact is that the derivation with respects to length provides a way to so a kind of change of variables. This is Lemma \ref{myfundobge} in the body of this article. 

\newcommand\PREUVESOIXANTCINQ{

\section{Some statements from \cite{BlancBournezMCU22vantardise}}
\label{ouestlapreuve}

We repeat here some statements from  \cite{BlancBournezMCU22vantardise}. They are motivated by repeating the arguments for proving
Proposition  \ref{prop:mcu:un}, 1.

\subsection{On the complexity of solving a linear length ODE}

We have to prove that all functions of $\manonclass$ are computable (in the sense of computable analysis) in polynomial time. % (this is Proposition \ref{prop:mcu:un} to come). 
The hardest  part is to prove that the class of polynomial time computable functions is preserved by the linear length ODE schema, so we start by it.

%We repeat here the idea of the proof of Proposition \ref{prop:mcu:un}, taken from \cite{BlancBournezMCU22vantardise}. 

\begin{lemma}\label{lem:un}
	The class of polynomial time computable functions is preserved by the linear length ODE schema.
\end{lemma}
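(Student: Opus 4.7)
The plan is to apply Lemma \ref{myfundobge}, which says that the solution $\tu f$ of a linear length ODE satisfies $\tu f(x, \tu y) = \tu F(\length{x}, \tu y)$ where $\tu F$ is obtained by a simple recurrence of the form \eqref{ourform}, i.e. by iterating some operator $\overline{\tu u}$ that is essentially linear in its first argument a total of $\length{x}$ times. So computing $\tu f(x, \tu y)$ at some target precision reduces to composing polynomially many polynomial-time computable functions, modulo the fact that we must (i) bound the size of the intermediate values $\tu F(t, \tu y)$ uniformly in $t \le \length{x}$, and (ii) control how errors propagate through the $\length{x}$ iterations.

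For (i), I would invoke Lemma \ref{def:solutionexplicitedeuxvariables}, which gives the explicit closed form
$$\tu F(t,\tu y) = \left(\prod_{s=0}^{t-1}(1 + \tu A(\ldots))\right) \cdot \tu G(\tu y) + \sum_{u=-1}^{t-1}\left(\prod_{s=u+1}^{t-1}(1+\tu A(\ldots))\right)\cdot \tu B(\ldots).$$
Since $\tu A, \tu B$ are built from the basic functions of $\manonclasslighttanh$ using only $+, -, \times, \basictanh, /2, /3$ applied to $\tu h$ and polynomial arguments, they are (by induction on the construction) polynomial-time computable; in particular their norms are bounded by $2^{q(n)}$ for some polynomial $q$, where $n$ measures the length of the non-$\tu f$ inputs and the desired precision. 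Substituting into the product and sum then yields a bound of the form $\|\tu F(t, \tu y)\| \le 2^{t \cdot q(n)}$; since $t \le \length{x}$, this is polynomial in the size of $x$ and $\tu y$, so the intermediate values have polynomially many bits before the binary point.

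For (ii), I would reason step-by-step on the recurrence $\tu F(t+1, \tu y) = \overline{\tu u}(\tu F(t, \tu y), \tu h(2^t - 1, \tu y), 2^t, \tu y)$ with $\overline{\tu u}$ essentially linear in $\tu F$. Essentially-linear dependence is the key: writing $\overline{\tu u} = \tu A \cdot \tu F + \tu B$ with $\tu A, \tu B$ independent of $\tu F$, a precision-$\varepsilon$ error on $\tu F(t, \tu y)$ propagates to a precision-$(\|\tu A\| \varepsilon + O(\text{evaluation error}))$ error on $\tu F(t+1, \tu y)$. After $\length{x}$ iterations, the total error is multiplied by at most $\|\tu A\|^{\length{x}} \le 2^{\length{x} q(n)}$, so to achieve final precision $2^{-n}$, it suffices to run each iteration at precision $2^{-p(n, \length{x}, \length{\tu y})}$ for some explicit polynomial $p$. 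This precision is polynomially bounded in the size of the inputs, and by induction hypothesis each of $\tu A$, $\tu B$, $\tu h$ can be evaluated at that precision in polynomial time. Hence the whole $\length{x}$-fold iteration runs in polynomial time.

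The main obstacle is the error-propagation bookkeeping: one must verify carefully that the multiplicative factor $\|\tu A\|$ from the essential linearity is genuinely uniformly polynomially bounded over the iteration range, and that the overhead from re-computing $\tu A, \tu B, \tu h$ at increased precision does not blow up. Both facts follow because the arguments of $\tu A, \tu B$ at step $t$ are themselves polynomial-size (by (i)) and the induction hypothesis supplies polynomial-time evaluation at any given precision, but this is precisely the place where the restriction to \emph{essentially linear} (rather than arbitrary polynomial) dependence on $\tu f$ is used in an essential way — without it, the exponent on $\|\tu A\|$ would become $t \cdot \degre{\tu f, \overline{\tu u}}^{t}$ and the argument would fail.
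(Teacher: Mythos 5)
Your proposal is correct and follows essentially the same route as the paper: it reduces to $\length{x}$ iterations via Lemma~\ref{myfundobge}, uses the explicit solution formula from Lemma~\ref{def:solutionexplicitedeuxvariables} to bound the magnitude of $\tu F(t,\tu y)$ polynomially (exploiting that essential constancy of $\tu A,\tu B$ in $\tu f$ prevents the coefficient sizes from growing with $\length{\tu F}$), and then controls error propagation across polynomially many iterations. The paper's Lemma~\ref{fundamencoreg} performs the precision analysis on the unrolled sum-of-products form rather than step-by-step on the recurrence, but this is only a stylistic difference; the one imprecision to tighten in your write-up is that $\tu A,\tu B$ are \emph{essentially constant} in $\tu F$ rather than literally independent of it, so the Lipschitz contribution of the $\tanh$-wrapped occurrences of $\tu F$ inside $\tu A,\tu B$ must be folded into the per-step error factor — but this is harmless since $\tanh$ is $1$-Lipschitz and bounded, and $\|\tu F\|$ is polynomially controlled by your point~(i).
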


\olivierplusimportant{Avis au peuple: Notation $\tnorm{.}$ pourrie? euh, est-ce pas bien de noter comme ca un truc qui n'est pas classique avec une notation qui veut en général dire autre chose.} 
\manonplusimportant{pas d'accord: pour moi ça permet de bien la différentier des autres normes, et on la définit proprement}

We write $\tnorm{\cdots}$ for the sup norm of integer part: given some matrix $\tu
A=(A_{i,j})_{1 \le i \le n, 1 \le j \le m}$, % ($A$ can a vector when $m=1$)
$\tnorm{\tu A}=\max_{i,j}
\lceil A_{i,j} \rceil $. In particular, given a vector $\tu x$, it can be seen as a matrix with $m=1$, and $\tnorm{\tu x}$ is the sup norm of the integer part of its components.

Before proving this lemma, we prove the following result:

%\olivierplusimportant{Version journal.tex de Arnaud:
%	We use the sup norm for the length of vectors and matrices. Hence, given some matrix $\tu
%	A=(A_{i,j})_{1 \le i \le n, 1 \le j \le m}$, % ($A$ can a vector when $m=1$)
%	we set $\length{\tu A}=\max_{i,j}
%	\length{A_{i,j}}$.
%%
%	\begin{lemma}[Fundamental observation] \label{fundamencore}
%	Consider the ODE 
%	\begin{equation} \label{eq:bc}
%	\tu f^\prime(x,\tu y)=  {\tu A} ( \tu f(x,\tu y), \tu h(x,\tu y),
%	x,\tu y) \cdot
%	  \tu f(x,\tu y)
%	  +   {\tu B} ( \tu f(x,\tu y), \tu h(x,\tu y),
%	  x,\tu y).
%	\end{equation}
%	%
%	Assume:
%	\begin{enumerate}
%	\item The initial condition $\tu G(\tu y) = ^{def}
%	  \tu f(0, \tu y)$, as well as $\tu h(x,\tu y)$ are polynomial time computable in $x$ and in the length of $\tu y$. 
%	  \item ${\tu A} ( \tu f(x,\tu y), \tu h (x,\tu y),
%	  x,\tu y)$ and ${\tu B} ( \tu f(x,\tu y), \tu h(x,\tu y),
%	  x,\tu y)$ are \polynomial{} expressions essentially constant in $\tu f(x,\tu y)$.
%	\end{enumerate}
%%	
%	Then, there exists a polynomial $p$ such that $\length{\tu f(x,\tu y)}\leq p(x,\length{\tu y})$ and $\tu f(x, \tu y)$ is polynomial time computable in $x$ and the length
%	of $\tu y$. 
%	\end{lemma}
%}	

\begin{lemma}[Fundamental observation] \label{fundamencoreg}
	Consider the ODE 
	\begin{equation} \label{eq:bcg}
		\tu F^\prime(x,\tu y)=  {\tu A} ( \tu F(x,\tu y), \tu h({x},\tu y),
		{x},
		\tu y) \cdot
		\tu F(x,\tu y)
		+   {\tu B} ( \tu F(x,\tu y), \tu h({x},\tu y),
		{x},
		\tu y).
	\end{equation}
	Assume:
	\begin{itemize}
		\item  The initial condition $\tu G(\tu y) =^{def}
		\tu F(0, \tu y)$ is polynomial time computable, and  $\tu h({x},\tu y)$ 
		are polynomial time computable \unaire{x}. % in $x$ and the length of   $\tu y$;
		\item   ${\tu A} ( \tu F(x,\tu y), \tu h ({x},\tu y),
		{x},
		\tu y)$ and ${\tu B} ( \tu F(x,\tu y), \tu h({x},\tu y),
		{x},
		\tu y)$ are \polynomialb{} expressions essentially constant in $\tu F(x,\tu y)$.
		
		%  \olivier{New condition}
		% \item Given $\tu f, x, \tu y)$, the entries of matrix ${\tu A} ( \tu f, x,\tu y)$ and of vector
		%  ${\tu B} ( \tu f, x,\tu y)$ can all be obtained using finitely many arithmetic or sign $\sign{}$  operations from $\tu f$, $x$ and $y$. 
		%\item $\length{ \norm{{\tu A} ( \tu f, x,\tu y)} } \le 
		%  \length{\norm{\tu f}} + p_{\tu A}(x,\length{\tu y})$ for some polynomial $p_A$
		%\item $\length{ \norm{{\tu B} ( \tu f, x,\tu y)} } \le 
		%  \length{\norm{ \tu f}} + p_{\tu B}(x,\length{\tu y})$ for some polynomial $p_B$
		%
	\end{itemize}
	Then, there exists a polynomial $p$ such that $\length{\tnorm{\tu F \left(x,\tu y \right) }}\leq p\left(x,\length{\tnorm{\tu y}}\right)$ and 
	$\tu F(x,\tu y)$ is polynomial time computable \unaire{x}.
\end{lemma}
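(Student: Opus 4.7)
The plan is to recast the discrete ODE in its equivalent recurrence form
$$\tu F(x+1,\tu y) = (I + \tu A_x)\tu F(x,\tu y) + \tu B_x,$$
where $\tu A_x = \tu A(\tu F(x,\tu y),\tu h(x,\tu y),x,\tu y)$ and similarly for $\tu B_x$, as provided by Lemma \ref{def:solutionexplicitedeuxvariables} (the closed-form solution of a linear ODE). This makes the proof a straightforward iterative simulation argument, in two stages: bounding the magnitude of $\tu F(x,\tu y)$, then bounding the cost of computing it.

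First I would establish the magnitude bound. The essential point is that since $\tu A$ and $\tu B$ are \polynomialb{} expressions \emph{essentially constant} in $\tu F(x,\tu y)$, every subterm containing $\tu F$ is enclosed in a $\signbname$ and therefore bounded in absolute value by $1$. Consequently $\|\tu A_x\|$ and $\|\tu B_x\|$ are bounded by an ordinary polynomial in $x$, $\|\tu h(x,\tu y)\|$, $\|\tu y\|$; using the hypothesis that $\tu h$ is polynomial time computable \unaire{x} to control $\ell(\|\tu h(x,\tu y)\|)$ by a polynomial $q(x,\ell(\|\tu y\|))$, I obtain $\|\tu A_x\|,\|\tu B_x\| \le 2^{Q(x,\ell(\|\tu y\|))}$ for some polynomial $Q$. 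Unrolling the recurrence gives the telescoping bound
$$\|\tu F(x,\tu y)\| \le \|\tu G(\tu y)\|\prod_{i=0}^{x-1}(1+\|\tu A_i\|) + \sum_{i=0}^{x-1}\|\tu B_i\|\prod_{j=i+1}^{x-1}(1+\|\tu A_j\|),$$
so after taking $\length{\cdot}$, the products turn into sums of at most $x$ terms each of size $O(Q(x,\ell(\|\tu y\|)))$, yielding the desired polynomial bound $\length{\|\tu F(x,\tu y)\|}\le p(x,\length{\|\tu y\|})$.

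Next I would establish polynomial-time computability \unaire{x} by algorithmically iterating the recurrence $x$ times. At step $i$ one queries $\tu h(i,\tu y)$ (polynomial time by assumption), evaluates the polynomial-with-$\signbname$ expressions $\tu A_i,\tu B_i$ (polynomial time in the bit-lengths of the operands, which are all polynomially bounded by the previous step), and performs one matrix-vector update. Since each iteration runs in time polynomial in $x$ and $\ell(\|\tu y\|)$ and we do $x$ of them, the total running time stays polynomial in $x$ and $\ell(\|\tu y\|)$. The initial value $\tu G(\tu y)$ is handled directly by the assumed complexity of $\tu G$.

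The main obstacle, and the step that requires real care, is the real-valued (computable analysis) version of step (b): each iteration both reads an approximate value of $\tu F(i,\tu y)$ and multiplies it by $(I+\tu A_i)$, so errors are amplified by a factor of at most $1+\|\tu A_i\|\le 2^{Q}$ per step and after $x$ steps the amplification factor is $2^{xQ}$, still of polynomial bit-length. Hence, to return an approximation of $\tu F(x,\tu y)$ at precision $2^{-n}$, it suffices to carry every intermediate computation at precision $2^{-(n+x\,Q(x,\ell(\|\tu y\|))+c)}$ for a suitable constant $c$; this target precision is polynomial in $n$, $x$, and $\ell(\|\tu y\|)$, and computing each approximation of $\tu h, \tu A, \tu B, \tu G$ at this precision is polynomial time by the hypotheses. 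Accumulating error bounds across the $x$ iterations confirms the output accuracy, completing the proof.
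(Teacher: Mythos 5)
Your proposal is correct and takes essentially the same approach as the paper: the paper's proof also reduces the ODE to the step recurrence $\tu F(x+1,\tu y) = (I+\tu A_x)\tu F(x,\tu y) + \tu B_x$, uses the essential-constancy hypothesis on $\tu A,\tu B$ to derive a one-step size increment $\length{\tnorm{\tu F(x+1,\tu y)}} \le p_M(x,\length{\tnorm{\tu y}}) + \length{\tnorm{\tu F(x,\tu y)}} + 2$ leading to the polynomial bound, and then runs the $x$-step iteration (dynamic programming) at a boosted, polynomially long precision. The only real difference is in how the error analysis is organized: the paper writes out the explicit closed-form solution as a sum of $x+1$ matrix products $\tu T_0,\dots,\tu T_x$ and bounds the error of each product by a telescoping decomposition, whereas you track error amplification directly through the $x$ iterations; these are the same estimate packaged differently, and both conclude that precision $2^{-(n+\mathrm{poly}(x,\ell))}$ throughout suffices. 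One small imprecision in your write-up: the additive slack ``$c$'' in the working precision cannot literally be a constant, since accumulating $x$ per-step errors costs an extra $\log_2 x$; this is harmless because it is absorbed into the already-present $x\,Q$ term, but it is worth saying so explicitly.
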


\begin{proof}
%
%We write $\tnorm{\cdots}$ for the sup norm of integer part: given some matrix $\tu
%A=(A_{i,j})_{1 \le i \le n, 1 \le j \le m}$, % ($A$ can a vector when $m=1$)
%$\tnorm{\tu A}=\max_{i,j}
%\lceil A_{i,j} \rceil $. In particular, given a vector $\tu x$, it can be seen as a matrix with $m=1$, and $\tnorm{\tu x}$ is the sup norm of the integer part of its components. 
%
%	The fact that there exists a polynomial $p$ such that $\length{\tnorm{\tu f(x,\tu y)}}\leq p\left(x,\length{\tnorm{\tu y}}\right)$, follows from the fact that we can write some explicit formula for the solution of \eqref{eq:bcg}, and then bound  the size of this formula, 
%using an argument similar to the corresponding lemma in \THEPAPIERS. We repeat it here, for keeping the curent article self-contained.
%	
%\olivierpourmanon{Version journal.tex de Arnaud}
%\olivier{BUG SUR Vecteur vs pas vecteur}
%
%\olivier{Du coup, il faut que ce qui suit soit quelquepart}
%
%
The solution of ordinary differential equation \eqref{eq:bcg} can be put in some explicit form  (this follows from \THEPAPIERS, or alternatively Remark~\ref{rq:fund} following Lemma \ref{def:solutionexplicitedeuxvariables} in appendix, or, if you prefer, can be directly established by induction, using the recurrence formula \eqref{eq:bcg}).

\begin{equation} \label{formulemagique}
\tu F(x,\tu y)=\sum_{u=-1}^{x-1}  \left(
\prod_{t=u+1}^{x-1} (1+\tu A(\tu F(t,\tu y), \tu h(t, \tu y),  t,\tu y)) \right) \cdot  \tu B(\tu F(u,\tu y), \tu h(u, \tu y), u,\tu y)
.
\end{equation}
\noindent with the conventions that $\prod_{x}^{x-1} \tu \kappa(x) = 1$ and $\tu B( \cdot , -1,\tu y)=\tu G(\tu y)$. 		

This formula permits to evaluate $\tu F(x, \tu y)$, using a dynamic programming approach,  from the quantities $\tu y$, $u$, $\tu h(u,\tu y)$, for $1 \le u < x$,  in a number of arithmetic steps that is polynomial in $x$. Indeed: 
for any $-1 \le u \le x$, $\tu A(\tu F(u,\tu y),\tu h(u,\tu y), u,\tu y)$     and 
$\tu B( \tu F(u,\tu y), \tu h(u,\tu y), u,\tu y)$ are matrices whose coefficients are \polynomialb{}. Their coefficients involve finitely many arithmetic operations or $\signb{}$ operations from their inputs. Once this is done, computing $\tu F(x, \tu y)$  requires polynomially in $x$ many arithmetic operations: basically, once the values for $\tu A$ and $\tu B$ are known we have to sum up $x+1$ terms, each of them involving at most $x-1$ multiplications.

We need to take  care not only on the arithmetic complexity that is polynomial in $x$, but also on the bit complexity. We start by discussing the bit complexity of the integer parts. Each of the quantities $\tu y$, $u$, $\tu h(u,\tu y)$, for $1 \le u < x$, being polynomial time computable,   has its integer part with a bit complexity that remains polynomial in $x$ and $\length{\tnorm{y}}$.  As the bit complexity of a sum, product, etc is polynomial in the size of its arguments, it is sufficient to show  that the growth rate of function $\tu F(x,\tu y)$ can be polynomially dominated. For this, recall that, for any $-1 \le u \le x$, coefficients of $\tu A(\tu F(u,\tu y),\tu h(u,\tu y), u,\tu y)$     and 
$\tu B( \tu F(u,\tu y), \tu h(u,\tu y), u,\tu y)$ are essentially constant in $\tu F(u,\tu y)$. Hence, the size of the integer part of these coefficients do not depend on $\length{\tu F(u,\tu y)}$. Since, in addition, $\tu h$ is computable in polynomial time in $x$ and $\length{\tu y}$, there exists a polynomial $p_M$ such that:

\begin{equation} \label{taille}
\max (\length{\tnorm{\tu A(\tu F(u,\tu y),\tu h(u,\tu y), u,\tu y)}},\length{\tnorm{\tu B(\tu F(u,\tu y),\tu h(u,\tu y), u,\tu y)} \leq p_M(u, \length{\tnorm{\tu y}}}). 	
\end{equation}

It then holds that,
\[
	\length{\tnorm{\tu F(x+1, \tu y)}}\leq p_M(x,\length{\tnorm{\tu y}}) + \length{\tnorm{\tu F(x, \tu y)}} +2 	
\]

It follows from an easy induction that we must have
$$\length{\tnorm{\tu F(x, \tu y) }} \le \length{\tnorm{G(\tu y)}} + x \cdot (p_{M}(x, \length{ \tnorm{\tu y}}+2),$$ which gives the desired bound on the length of the integer part of the values for function $\tu F$, after observing that, since $\tu G$ is polynomial time computable, necessarily $\length{\tnorm{G(\tu y)}}$ remains polynomial in $\length{ \tnorm{\tu y}}$.

We  now  take care of the bit complexity of involved quantities, in order to prove that $\tu F(x,\tu y)$ is indeed polynomial time computable \unaire{x}. Given $\tu y$, we can determine some integer $Y$ such that $\tu y \in [2^{-Y},2^{Y}]$. 
We just need to prove that given $n$, we can provide some dyadic $\tu z_{x}$ approximating $\tu F(x,\tu y)$ at precision $n$, i.e. with $\| \tu F(x,\tu y) - \tu z_{x}\| \le 2^{-n}$, in a time polynomial in $x$ and $Y$. 

Basically, this follows from the possibility of evaluating $\tu F(x,\tu y)$ using formula \eqref{formulemagique}. This latter formula is made of a sum of $x+1$ terms, that we can call  $\tu T_{0}, \tu T_{1}, \dots, \tu T_{x}$, each of them $\tu T_{i}$ corresponding to a product of $k$ matrices (or vectors) $\tu T_{i}=\tu C_{1} \tu C_{2} \dots  \tu C_{k(i)}$, with $k(i) \le x+1$, were each $\tu C_{j}$ is either some $\tu B(\tu F(u,\tu y), \tu h(u, \tu y), u,\tu y)$ for some $u$ 
or some $(1+\tu A(\tu F(t,\tu y), \tu h(t, \tu y),  t,\tu y))$ for some $t$. 

To solve our problem, it is sufficient to be able to approximate the value of each $\tu T_{i}$ by some dyadic $\tu d_{i}$ with precision $2^{-n-m}$, considering $m$ with $x+1 \le 2^{m}$. Indeed, taking $\tu z_{x}=\sum_{i=0}^{x} \tu d_{i}$ will guarantee an error on the approximation of $\tu F(x,\tu y)$
 less than $(x+1) 2^{-n-m} \le 2^{-n}.$
 
  \newcommand\tC{\tilde{C}}
 So we focus on the problem of estimating  $\tu T_{i}=\tu C_{1} \tu C_{2} \dots  \tu C_{k(i)}$ with precision $2^{-n-m}$. If we write $\tu \tC_{j}
 $ for some approximation of $\tu C_{j}$, 
 we can write
 
\begin{equation} \label{erreurproduit}
 \begin{array}{lll}
 \left\| \tu C_{1} \tu C_{2} \dots  \tu C_{k(i)} - \tu \tC_{1} \tu \tC_{2} \dots  \tu \tC_{k(i)} \right\|
 		\le &&  \left\|  \tu C_{1} \tu C_{2} \dots  \tu C_{k(i)-1} \left(\tu C_{k(i)}-\tu \tC_{k(i)}\right)  \right\| \\
		&   +& \left\|  \tu C_{1} \tu C_{2} \dots  \tu C_{k(i)-2} \left(\tu C_{k(i)-1}-\tu \tC_{k(i)-1}\right) \tu \tC_{k(i)}  \right\| \\
		&   +& \left\|  \tu C_{1} \tu C_{2} \dots  \tu C_{k(i)-3} \left(\tu C_{k(i)-2}-\tu \tC_{k(i)-2}\right) \tu \tC_{k(i)-1}  \tu \tC_{k(i)}  \right\| \\
		& \vdots & \\ %\hfill \dots \hfill  \\
		& + & \left\|  \left(\tu C_{1}-\tu \tC_{1}\right) \tu \tC_{1}  \tu \tC_{2}  \dots \tu \tC_{k(i)}  \right\|. \\
\end{array}
\end{equation}

We just need then to compute some approximation $ \tu \tC_{k(i)}$ of $\tu C_{k(i)}$, guaranteeing the first term in \eqref{erreurproduit} to be less than $2^{-n-m-m}$,
and then choose some approximation $ \tu \tC_{k(i)-1}$ of $\tu C_{k(i)-1}$, guaranteeing the second term in \eqref{erreurproduit} to be less than $2^{-n-m-m}$, and so on.  Doing so, we will 
solve our problem, as $\tu \tC_{1} \tu \tC_{2} \dots  \tu \tC_{k(i)}$ will provide an estimation of $\tu T_{i}$, with an error  less than  $(x+1) 2^{-n-m-m} \le 2^{-n-m}$ by
\eqref{erreurproduit}.

For the first term of \eqref{erreurproduit}, the point is that we know from a reasoning similar to previous computations (namely bounds such as \eqref{taille}) that we have
$$\tnorm{\tu C_{1} \tu C_{2} \dots  \tu C_{k(i)-1}} \le 2^{p_{1}(x,Y)}$$
for some polynomial $p_{1}$.

Consequently, it is sufficient to take $\left\| \tu C_{k(i)}-\tu \tC_{k(i)} \right\| \le 2^{-n-2m-p_{1}(x,Y)}$ to guarantee an error less than $2^{-n-m-m}$.

A similar analysis applies for the second, and other terms, that involve finitely many terms. The whole approach takes a time that remains clearly polynomial in $x$ and $Y$. 
\end{proof}

The previous statements lead to the following:

\begin{lemma}[Intrinsic complexity of linear $\lengt$-ODE]~\label{lem:fundamentalobservationlinearlengthODE}
	Let $\tu f$ be a solution of the linear $\lengt$-ODE 
	\begin{eqnarray*}
	\tu f(0,\tu y) &=& \tu g(\tu y), \\
	\dderivl{\tu f(x,\tu y)}&=&   \tu u(\tu f(x,\tu y), \tu h(x,\tu y),
	x,\tu y) 
	\end{eqnarray*}
	\noindent where $\tu u$ is \textit{essentially linear} in $\tu f(x, \tu y)$. 
	Assume that the functions $\tu u, \tu g, \tu h$ are computable in polynomial time. 
	Then, $\tu f$ 
	is computable in polynomial time.
\end{lemma}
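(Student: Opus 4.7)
The plan is to reduce the linear length-ODE for $\tu f$ to a standard (essentially linear) discrete ODE for an auxiliary function $\tu F$, for which we can directly invoke Lemma \ref{fundamencoreg}. Concretely, I would apply Lemma \ref{myfundobge}: since $\tu u$ is essentially linear in $\tu f(x,\tu y)$, and $\tu f$ solves the given linear length-ODE with initial condition $\tu g(\tu y)$, there exists $\tu F$ such that $\tu f(x,\tu y) = \tu F(\length{x},\tu y)$, with $\tu F(1,\tu y) = \tu g(\tu y)$ and
\[
\tu F'(t,\tu y) = \tu u\bigl(\tu F(t,\tu y),\, \tu h(2^{t}-1,\tu y),\, 2^{t}-1,\, \tu y\bigr).
\]

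The second step is to verify that this ODE for $\tu F$ fits the hypotheses of Lemma \ref{fundamencoreg}. Since $\tu u$ is essentially linear in its first argument, it decomposes as $\tu A \cdot \tu F + \tu B$ where $\tu A$ and $\tu B$ are \polynomialb{} expressions essentially constant in $\tu F(t,\tu y)$. The initial condition $\tu g$ is polynomial time computable by assumption. The only slightly delicate point is to check that the new driving function $\tilde{\tu h}(t,\tu y) = \tu h(2^{t}-1,\tu y)$ is polynomial time computable \emph{in the value of $t$}, but this is immediate: $\tu h$ is polynomial time computable in the length of $2^{t}-1$, which equals $t$, so $\tilde{\tu h}$ is polynomial time computable in the value of $t$ and $\length{\tnorm{\tu y}}$.

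Applying Lemma \ref{fundamencoreg} to $\tu F$ then yields a polynomial $p$ with $\length{\tnorm{\tu F(t,\tu y)}} \le p(t, \length{\tnorm{\tu y}})$, together with the fact that $\tu F(t,\tu y)$ is polynomial time computable in the value of $t$ and $\length{\tnorm{\tu y}}$. Substituting back $t = \length{x}$, we obtain $\tu f(x,\tu y) = \tu F(\length{x},\tu y)$, which is therefore polynomial time computable in $\length{x}$ and $\length{\tnorm{\tu y}}$, i.e., polynomial time computable in the standard sense.

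I expect the main subtlety to be purely bookkeeping: being explicit about the change of variable $t = \length{x}$ vs.\ $x = 2^{t}-1$ and making sure the resulting complexity bound (polynomial in the value of $t$) matches the notion of polynomial time computability for $\tu f$ (polynomial in the length of $x$). No real new computation is needed: the heavy lifting — controlling the bit-size growth along the iteration, using the dynamic-programming evaluation of the explicit formula \eqref{formulemagique}, and propagating numerical errors through the matrix products as in \eqref{erreurproduit} — has already been done inside Lemma \ref{fundamencoreg}.
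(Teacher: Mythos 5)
Your proof is correct and follows the same route as the paper: apply Lemma \ref{myfundobge} to change variables to $t=\length{x}$, observe that essential linearity of $\tu u$ gives the $\tu A\cdot\tu F+\tu B$ decomposition with $\tu A,\tu B$ essentially constant in $\tu F$, check that $\tu g$ and $\tu h(2^{t}-1,\tu y)$ are polynomial-time in the value of $t$ (since $\length{2^{t}-1}=t$), and invoke Lemma \ref{fundamencoreg}. The bookkeeping point you flag about matching ``polynomial in the value of $t$'' with ``polynomial in $\length{x}$'' is exactly the closing observation in the paper's argument.
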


\begin{proof}
From Lemma~\ref{myfundobge}, $\tu f(x,\tu y)$ can also be given by 
$\tu f(x,\tu y)= \tu F(\length{x},\tu y)$
where $\tu {\bar F}$ is the solution of the initial value problem 
\begin{eqnarray*}
\tu {\bar F}(1,\tu y)&=& \tu g(\tu y), \\
\dderiv{\tu {\bar F}(t,\tu y)}{t} &=& \tu u(\tu {\bar F}(t, \tu y),2^{t}-1,\tu y).
\end{eqnarray*}

	Functions $\tu u$ are \polynomialb{} expressions that are essentially linear in $\tu f(x,\tu y)$. So there exist matrices $\tu A$, $\tu B$ that are essentially constants in $\tu f(t,\tu y)$ such that 
	
	\[
		\dderivl{\tu f(x,\tu y)} = {\tu A} ( \tu f(x,\tu y), \tu h(x,\tu y),
		x,\tu y) \cdot
		  \tu f(x,\tu y)
		  +   {\tu B} ( \tu f(x,\tu y), \tu h(x,\tu y),
		  x,\tu y).		
	\]

	\noindent In other words, it holds
	$$
	\tu F’(t,\tu y)= \overline {\tu A} ( \tu F(t,\tu y),
	t,\tu y) \cdot
	\tu F(t,\tu y)
	+   \overline {\tu B} ( \tu F(t,\tu y),
	t,\tu y).
	$$
	by setting 
	\begin{eqnarray*}
		\overline {\tu A} ( \tu F(t,\tu y),
		t,\tu y) &=&  {\tu A} (\tu { \bar F}(t, \tu y),\tu h(2^{t}-1,\tu y), 2^{t}-1,\tu y) \\
		\overline {\tu B} ( \tu F(t,\tu y),
		t,\tu y) &=& {\tu B} (\tu { \bar F}(t, \tu y),\tu h(2^{t}-1,\tu y), 2^{t}-1,\tu y)	
	\end{eqnarray*}

	The
	corresponding matrix $\overline {\tu A}$ and vector $\overline {\tu
		B}$ are essentially constant in $\tu F(t, \tu y)$. Also, 
	functions $\tu g, \tu h$ are computable in polynomial time, more precisely 
	polynomial in  $\length{x}$, hence in $t$, and $\length{y}$. Given $t$, obtaining $2^{t}-1$ is immediate. This guarantees that all
	hypotheses of Lemma \ref{fundamencoreg} are true. We can then conclude remarking, again, that $t=\length{x}$. 	
\end{proof}

This proves Lemma \ref{lem:un}.

\subsection{Proof of  Proposition \ref{prop:mcu:un}, 1.}

We can now state and prove Proposition \ref{prop:mcu:un}, 1: 
\\

%\begin{proposition} \label{prop:mcu:un}
%All functions of $\manonclass$ are computable (in the sense of computable analysis) in polynomial time.
%\end{proposition}

\begin{proof}[Proof of Proposition \ref{prop:mcu:un}, 1.]
	This is proved by induction.  This is true for basis functions, from basic arguments from computable analysis. In particular as $\signbname$ is a continuous piecewise affine function with rational coefficients, it is computable in polynomial time from standard arguments. 
	
	Now, the class of polynomial time computable functions is  preserved by composition. This is proved in \cite{Ko91}: in short, the idea of the proof for $COMP(f,g)$, is that by induction hypothesis, there exists $M_f$ and $M_g$ two Turing machines computing in polynomial time $f: \RR \rightarrow \RR$ and $g : \RR \rightarrow \RR$. In order to compute $COMP(f,g)(x)$ with precision $2^{-n}$, we just need to compute $g(x)$ with a precision $2^{-m(n)}$, where $m(n)$ is the polynomial modulus function of $f$. 
	%
	%$m_n$ being given by a polynomial oracle, which takes a polynomial time in $m_n$ (by definition of $M_g$).
	Then, we compute $f(g(x))$, which, by definition of $M_f$ takes a polynomial time in $n$. 
	Thus, since polynomial time with an oracle in polynomial time is polynomial time, $COMP(f,g)$ is computable in polynomial time, so the class of polynomial time computable functions is preserved under composition.
	It only remains to prove that the class of polynomial time computable functions is preserved by the linear length ODE schema: this is Lemma \ref{lem:un}. 
\end{proof}
}
\PREUVESOIXANTCINQ

%
%
%
%\ANNEXEPREUVEDE{Proposition \ref{prop:mcu:un}, (2.)}{\preuvemanquante}
%
%
%
%%64
%\ANNEXEPREUVEDE{reverse implication of Theorem \ref{th:main:one:ex:p}}{\preuvereversethmainoneexp}

%\olivier{comme le théorème est comenté...
%\subsection{Proof of Theorem \ref{th:dix}} 
%
%
%\begin{proof}
%	$\Rightarrow$ If we assume that $\tu F$ is computable in polynomial time, we set $g(\tu m, n) = 2^{-n}$, and we take the identity as uniform modulus of convergence. 
%	
%	$\Leftarrow$ Given $\tu m$ and $n$, approximate $\tu f(\tu m,p(n+1)) \in \Q$ at precision $2^{-(n+1)}$ by some dyadic rational $q$ and output $q$. This can be done in polynomial time \unaire{n}. 
%	%
%	We will then have 
%	$$\begin{array}{lll}
%	\| q - \tu F(\tu m) \| & \le & \| q- \tu f(\tu m,p(n+1)) \| + \| \tu f(\tu m,p(n+1)) - \tu F(\tu m) \| \\
%	& \le &  2^{-(n+1)}+ g(\tu m,p(n+1)) \\
%	& \le &  2^{-(n+1)}+  2^{-(n+1)} \le 2^{-n}.
%	\end{array}$$
%	%
%
%\end{proof}
%
%}
%
%\ANNEXEPREUVEDE{Theorem \ref{trucchosethmain}}{\preuvetrucchoseth}

%
%\olivier{METTRE DANS ORDRE}

\end{document}